\newtheorem{proposition}{Proposition}
\newtheorem{theorem}{Theorem}
\newtheorem{lemma}{Lemma}
\newtheorem{definition}{Definition}
\newtheorem{assumption}{Assumption}
\newtheorem{notation}{Notation}
\newtheorem*{theorem*}{Theorem}
\newcommand{\pg}[1]{\paragraph{#1}\hspace{-3mm}}
\def\eqref#1{equation~\ref{#1}}
\def\1{\bm{1}}
\def\rvx{{\mathbf{x}}}
\DeclareMathAlphabet{\mathsfit}{\encodingdefault}{\sfdefault}{m}{sl}
\SetMathAlphabet{\mathsfit}{bold}{\encodingdefault}{\sfdefault}{bx}{n}
\newcommand{\KL}{D_{\mathrm{KL}}}
\DeclareMathOperator*{\argmin}{arg\,min}
\DeclarePairedDelimiterX{\infdivx}[2]{[}{]}{%
  #1\delimsize\| #2%
}
\DeclarePairedDelimiterX{\infdivxcolon}[2]{[}{]}{%
  #1\delimsize: #2%
}
\newcommand{\KLD}{\KL\infdivx}
\newcommand{\infD}{D_{\infty}\infdivx}
\newcommand{\Ent}{\mathbb{H}}
\DeclarePairedDelimiterX{\innerProd}[2]{\langle}{\rangle}{%
    #1,#2%
}
\newcommand{\Oh}{\mathcal{O}}
\newcommand{\Nats}{\mathbb{N}}
\newcommand{\Reals}{\mathbb{R}}
\newcommand{\Exp}{\mathbb{E}}
\newcommand{\defeq}{\stackrel{\mathit{def}}{=}}
\newcommand{\astar}{A$^*$ }
\newcommand{\asstar}{AS$^*$ }
\newcommand{\adstar}{AD$^*$ }
\newcommand{\encode}{\mathtt{enc}}
\newcommand{\decode}{\mathtt{dec}}
\newcommand{\Prob}{\mathbb{P}}
\newcommand{\Unif}{\mathrm{Unif}}
\crefname{assumption}{assumption}{assumptions}
\definecolor{orange}{HTML}{F7921D}
\definecolor{purple}{HTML}{AF72B0}
\definecolor{green}{HTML}{3C8031}
\definecolor{red}{HTML}{B6321C}
\title{Faster Relative Entropy Coding with\\ Greedy Rejection Coding}
\author{%
  Gergely Flamich\thanks{Equal contribution.}\\
  Department of Engineering\\
  University of Cambridge\\
  \texttt{gf332@cam.ac.uk}
  \And
  Stratis Markou$^*$\\
  Department of Engineering\\
  University of Cambridge\\
  \texttt{em626@cam.ac.uk}
  \And
  Jos\'e Miguel Hern\'andez Lobato \\
  Department of Engineering\\
  University of Cambridge\\
  \texttt{jmh233@cam.ac.uk}
}
\begin{document}

\maketitle

\begin{abstract}
Relative entropy coding (REC) algorithms encode a sample from a target distribution $Q$ using a proposal distribution $P$ using as few bits as possible.
Unlike entropy coding, REC does not assume discrete distributions or require quantisation.
As such, it can be naturally integrated into communication pipelines such as learnt compression and differentially private federated learning. 
Unfortunately, despite their practical benefits, REC algorithms have not seen widespread application, due to their prohibitively slow runtimes or restrictive assumptions. 
In this paper, we make progress towards addressing these issues. 
We introduce Greedy Rejection Coding (GRC), which generalises the rejection based-algorithm of \cite{harsha2007communication} to arbitrary probability spaces and partitioning schemes.
We first show that GRC terminates almost surely and returns unbiased samples from $Q$, after which we focus on two of its variants: GRCS and GRCD.
We show that for continuous $Q$ and $P$ over $\mathbb{R}$ with unimodal density ratio $dQ/dP$, the expected runtime of GRCS is upper bounded by $\beta\KLD{Q}{P} + \mathcal{O}(1)$ where $\beta \approx 4.82$, and its expected codelength is optimal.
This makes GRCS the first REC algorithm with guaranteed optimal runtime for this class of distributions, up to the multiplicative constant $\beta$.
This significantly improves upon the previous state-of-the-art method, A* coding \citep{flamich2022fast}.
Under the same assumptions, we experimentally observe and conjecture that the expected runtime and codelength of GRCD are upper bounded by $\KLD{Q}{P} + \mathcal{O}(1)$.
Finally, we evaluate GRC in a variational autoencoder-based compression pipeline on MNIST, and show that a modified ELBO and an index-compression method can further improve compression efficiency.
\end{abstract}

\vspace{-3mm}
\section{Introduction and motivation}

\vspace{-2mm}
Over the past decade, the development of excellent deep generative models (DGMs) such as variational autoencoders \citep[VAEs;][]{vahdat2020nvae,child2020very}, normalising flows \citep{kingma2016improved} and diffusion models \citep{ho2020denoising} demonstrated great promise in leveraging machine learning (ML) for data compression.
Many recent learnt compression approaches have significantly outperformed the best classical hand-crafted codecs across a range of domains including, for example, lossless and lossy compression of images and video \citep{zhang2021iflow, mentzer2020high, mentzer2022vct}.

\vspace{-3mm}
\pg{Transform coding.}
Most learnt compression algorithms are \textit{transform coding} methods: they first map a datum to a latent variable using a learnt transform, and encode it using entropy coding \citep{balle2020nonlinear}.
Entropy coding assumes discrete variables while the latent variables in DGMs are typically continuous, so most transform coding methods quantize the latent variable prior to entropy coding.
Unfortunately, quantization is a non-differentiable operation. 
Thus, state-of-the-art DGMs trained with gradient-based optimisation must resort to some continuous approximation to quantisation during training and switch to hard quantisation for compression.
Previous works have argued that using quantisation within learnt compression is restrictive or otherwise harmful, and that a method which naturally interfaces with continuous latent variables is needed \citep{havasi2018minimal,flamich2020compressing,Theis2021a,flamich2022fast}.

\vspace{-3mm}
\pg{Relative entropy coding.}
In this paper, we study \textit{relative entropy coding} \citep[REC;][]{havasi2018minimal, flamich2020compressing}, an alternative 
to quantization and entropy coding.
A REC algorithm uses a proposal distribution $P$, and a public source of randomness $S$, to produce a random code which represents a \textit{single sample} from a target distribution $Q$.
Thus REC does not assume discrete distributions and interfaces naturally with continuous variables.
Remarkably, REC has fundamental advantages over quantization in lossy compression with realism constraints \citep{theis2021advantages,theis2022lossy}.
More generally, it finds application across a range of settings including, for example, differentially private compression for federated learning \citep{shah2022optimal}.

\vspace{-3mm}
\pg{Limitations of existing REC algorithms.}
While algorithms for solving REC problems already exist, most of them suffer from limitations that render them impractical.
These limitations fall into three categories: prohibitively long runtimes, overly restrictive assumptions, or additional coding overheads.
In this work, we study and make progress towards addressing these limitations.

\vspace{-3mm}
\pg{General-purpose REC algorithms.}
On the one hand, some REC algorithms make very mild assumptions and are therefore applicable in a wide range of REC problems \citep{harsha2007communication,li2018strong}.
Unfortunately, these algorithms have prohibitively long runtimes.
This is perhaps unsurprising in light of a result by \citet{agustsson2020universally}, who showed that without additional assumptions on $Q$ and $P$, the worst-case expected runtime of any general-purpose REC algorithm scales as $\smash{2^{\KLD{Q}{P}}}$, which is impractically slow.
There are also REC algorithms which accept a desired runtime as a user-specified parameter, at the expense of introducing bias in their samples \citep{havasi2018minimal,theis2021algorithms}.
Unfortunately, in order to reduce this bias to acceptable levels, these algorithms require runtimes of an order of $\smash{2^{\KLD{Q}{P}}}$, and are therefore also impractical.

\vspace{-3mm}
\pg{Faster algorithms with additional assumptions.}
On the other hand, there exist algorithms which make additional assumptions in order to achieve faster runtimes.
For example, dithered quantisation \citep{ziv1985universal, agustsson2020universally} achieves an expected runtime of $\KLD{Q}{P}$, which is optimal since any REC algorithm has an expected runtime of at least $\KLD{Q}{P}$.
However, it requires both $Q$ and $P$ to be uniform distributions, which limits its applicability.
Recently, \cite{flamich2022fast} introduced A$^*$ coding, an algorithm based on \astar sampling \citep{maddison2014sampling} which, under assumptions satisfied in practice, achieves an expected runtime of $\infD{Q}{P}$.
Unfortunately, this runtime is sub-optimal and is not always practically fast, since $\infD{Q}{P}$ can be arbitrarily large for fixed $\KLD{Q}{P}$.
Further, as discussed in \cite{flamich2022fast} this runtime also comes at a cost of an additional, substantial, overhead in codelength, which limits the applicability of A$^*$ coding.

\begin{figure}[t]
\centering
\vspace{-16mm}
    \tdplotsetmaincoords{50}{50}
    \begin{tikzpicture}[scale=2.5,tdplot_main_coords,every node/.style={scale=0.95}]
    \draw[black, fill=blue!4] (0, 0, 0) -- (1, 0, 0) -- (1, 0, 1) -- (0, 0, 1) -- cycle;
    \draw[black, fill=blue!4] (1, 0, 0) -- (1, 1, 0) -- (1, 1, 1) -- (1, 0, 1) -- cycle;
    \draw[black, fill=blue!4] (1, 0, 1) -- (1, 1, 1) -- (0, 0, 1) -- cycle;
    \node[left] (AS) at (0.0, -0.05, 0.0) {\asstar coding};
    \node[below] (Global) at (1, 0, 0) {Global \astar};
    \node[below right] (AD) at (0.95, 1.075, 0.1) {\adstar coding};
    \node[above right] (GRCD) at (0.95, 1, 1) {\color{purple} GRCD};
    \node[above left] (GRCS) at (0, 0, 1) {\color{purple} GRCS};
    \node[below right] (GRCG) at (1.0, 0, 1) {\color{purple} GRCG};
    \draw [<-, thick] (0, -0.5, 0) -- (1, -0.5, 0);
    \node[below left] (a) at (0, -0.5, 0) {Sample partitioning};
    \node[below] (b) at (1.35, -0.30, 0) {Global partitioning};
    \draw [->, thick] (1.5, 0, 0) -- (1.5, 1, 0);
    \node[below right] (d) at (1.5, 1, 0) {Dyadic partitioning};
    \draw [<->, thick] (-0.3, -0.3, 0.25) -- (-0.3, -0.3, 1);
    \node[left] (e) at (-0.3, -0.3, 0.25) {Branch \& bound search~~};
    \node[left] (f) at (-0.3, -0.3, 1.0) {Rejection coding~~};
    \end{tikzpicture}
    \vspace{-2mm}
    \caption{An illustration of the relations between the variants of GRC, introduced in this work, and the variants of A$^*$ coding. Algorithms in {\color{purple} purple} are introduced in this work. The algorithms of \cite{harsha2007communication} and \cite{li2018strong} are equivalent to GRCG and Global A$^*$ coding respectively.} 
    \label{fig:cube}
    \vspace{-5mm}
\end{figure}
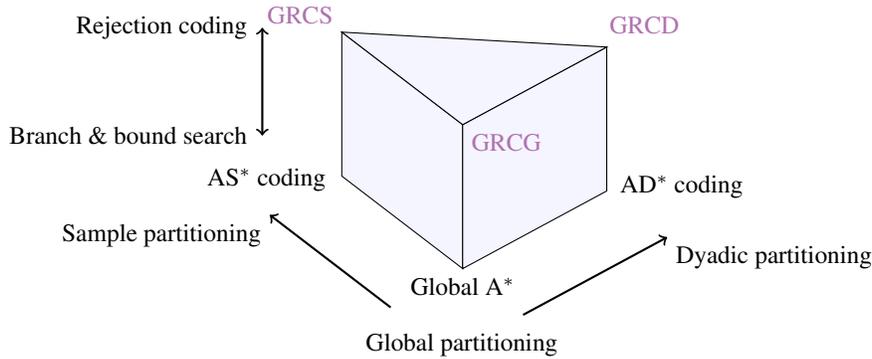
\vspace{-3mm}
\pg{Our contributions.}
In this work, we address some of these limitations.
First, we propose \textit{greedy rejection coding} (GRC), a REC algorithm based on rejection sampling.
Then, inspired by A* coding \citep{flamich2022fast}, we develop GRCS and GRCD, two variants of GRC that partition the sample space to dramatically speed up termination.
\Cref{fig:cube} illustrates the relations between GRC and its variants with existing algorithms.
We analyze the correctness and the runtime of these algorithms and, in particular, prove that GRCS has an optimal codelength and order-optimal runtime on a wide class of one-dimensional problems.
In more detail, our contributions are:
\begin{itemize}
    \item
    We introduce Greedy Rejection Coding (GRC), which generalises the algorithm of \citet{harsha2007communication} to arbitrary probability spaces and partitioning schemes.
    We prove that under mild conditions, GRC terminates almost surely and returns an unbiased sample from $Q$.
    \item
    We introduce GRCS and GRCD, two variants of GRC for continuous distributions over $\mathbb{R}$, which adaptively partition the sample space to dramatically improve their convergence, inspired by \asstar and \adstar coding \citep{flamich2022fast}, respectively.
    \item
    We prove that whenever $dQ / dP$ is unimodal, the expected runtime and codelength of GRCS is $\Oh(\KLD{Q}{P})$.
    This significantly improves upon the $\Oh(\infD{Q}{P})$ runtime of \asstar coding, which is always larger than that of GRCS.
    This runtime is order-optimal, while making far milder assumptions than, for example, ditered quantization.
    \item
    We provide clear experimental evidence for and conjecture that whenever $dQ / dP$ is unimodal, the expected runtime and codelength of GRCD are $\KLD{Q}{P}$.
    This also significantly improves over the $\infD{Q}{P}$ empirically observed runtime of AD$^*$ coding.
    \item
    We implement a compression pipeline with VAEs, using GRC to compress MNIST images.
    We propose a modified ELBO objective and show that this, together with a practical method for compressing the indices returned by GRC further improve compression efficiency.
\end{itemize}

\vspace{-2mm}
\section{Background and related work}
\label{sec:background}
\vspace{-2mm}
\pg{Relative entropy coding.}
First, we define REC algorithms.
\Cref{def:rec_algorithm} is stricter than the one given by \cite{flamich2022fast}, as it has a stronger condition on the the expected codelength of the algorithm.
In this paper, all logarithms are base 2, and all divergences are measured in bits.
\begin{definition}[REC algorithm]\label{def:rec_algorithm}
    Let $(\mathcal{X}, \Sigma)$ be a measurable space, let $\mathcal{R}$ be a set of pairs of distributions $(Q, P)$ over $(\mathcal{X}, \Sigma)$ such that $\KLD{Q}{P} < \infty$ and $\mathcal{P}$ be the set of all distributions $P$ such that $(Q, P) \in \mathcal{R}$ for some distribution $Q$.
    Let $S = (S_1, S_2, \dots)$ be a publicly available sequence of independent and fair coin tosses, with corresponding probability space $(\mathcal{S}, \mathcal{F}, \mathbb{P})$ and let $\mathcal{C} = \{0, 1\}^*$ be the set of all finite binary sequences.
    A REC algorithm is a pair of functions $\encode : \mathcal{R} \times \mathcal{S} \to \mathcal{C}$ and $\decode : \mathcal{C} \times \mathcal{P} \times \mathcal{S} \to \mathcal{X}$, such that for each $(Q, P) \in \mathcal{R}$, the outputs of the encoder $C = \encode(Q, P, S)$ and the decoder $X = \decode(P, C, S)$ satisfy
    \begin{equation}
         X \sim Q \quad \text{and}\quad \mathbb{E}_S[|C|] = \KLD{Q}{P} + \mathcal{O}(\log \KLD{Q}{P}), 
    \end{equation}
    where $|C|$ is the length of the string $C$.
    We call $\encode$ the encoder and $\decode$ the decoder.
\end{definition}
%
In practice, $S$ is implemented with a pseudo-random number generator (PRNG) with a public seed.
In the remainder of this section, we discuss relevant REC algorithms, building up to GRC in \cref{sec:grc}.

\begin{figure}[t!]
\centering
\vspace{-20mm}
    \includegraphics[width=0.32\linewidth]{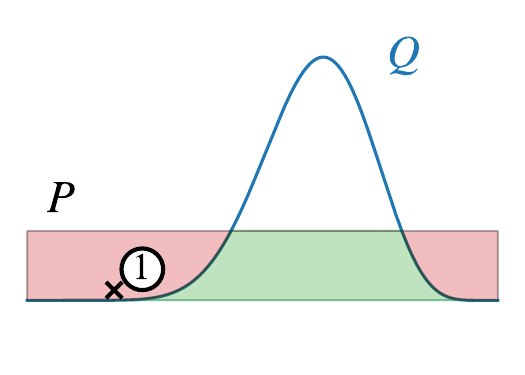}
    \includegraphics[width=0.32\linewidth]{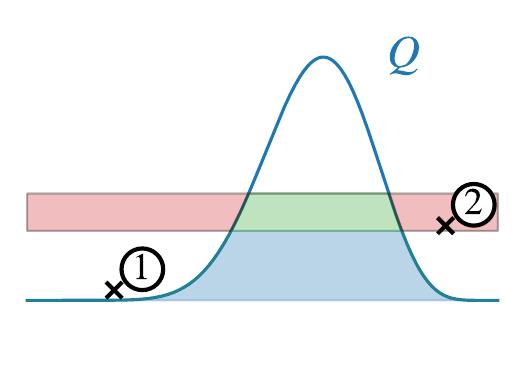}
    \includegraphics[width=0.32\linewidth]{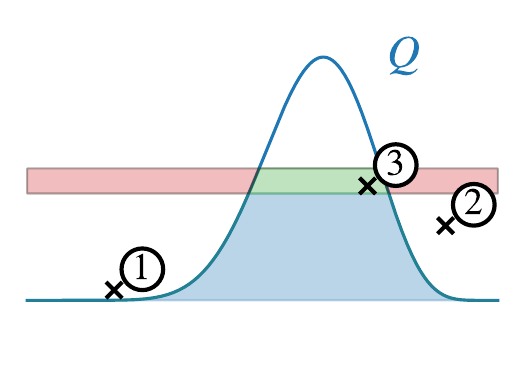}
    \vspace{-5mm}
    \caption{Example run of \cite{harsha2007communication}, for a pair of continuous $Q$ and $P$ over $[0, 1]$.
    The green and red regions correspond to acceptance and rejection regions at each step.
    Here the algorithm rejects the first two samples and accepts the third one, terminating at the third step.}
    \label{fig:grcg}
\vspace{-4mm}
\end{figure}

\vspace{-2mm}
\pg{Existing REC algorithms.}
While there are many REC algorithms already, they suffer from various issues limiting their applicability in practice.
Our proposed algorithm, Greedy Rejection Coding (GRC), is based on and generalises the rejection-based algorithm of \cite{harsha2007communication}, by drawing inspiration from A$^*$ coding \citep{flamich2022fast}.
Specifically, A$^*$ coding can be viewed as a generalisation of an algorithm due to \cite{li2018strong}.
The former generalises the latter by introducing a partitioning scheme to speed up termination.
In an analogous fashion, GRC generalises \cite{harsha2007communication} by also introducing partitioning schemes, to speed up termination and achieve optimal runtimes.
Here we discuss relevant algorithms, building up to GRC in \cref{sec:grc}.
\vspace{-2mm}
\pg{REC with rejection sampling.}
\cite{harsha2007communication} introduced a REC algorithm based on rejection sampling, which we generalise and extend in this work.
While this algorithm was originally presented for discrete $Q$ and $P$, we will show that it can be generalised to arbitrary probability spaces.
In this section, we present this generalised version and in \cref{sec:grc} we further extend it to arbitrary partitioning schemes (see \cref{def:grc}).
The generalisation to arbitrary probability spaces relies on the Radon-Nikodym derivative $dQ/dP$, which is guaranteed to exist since $Q \ll P$ by \cref{def:rec_algorithm}.
When $Q$ and $P$ both have densities, $dQ/dP$ coincides with the density ratio.
\par
At each step, the algorithm draws a sample from $P$ and performs an accept-reject step, as illustrated in
\cref{fig:grcg}.
If it rejects the sample, it rules out part of $Q$ corresponding to the acceptance region, adjusts the proposal to account for the removed mass, and repeats until acceptance.
More formally, define $T_0$ to be the zero-measure on $(\mathcal{X}, \Sigma)$, and recursively for $d \in \Nats$, set:
{\normalfont \begin{align}
    T_{d+1}(S) &\stackrel{\text{def}}{=} T_d(S) + A_{d+1}(S), && A_{d+1}(S) \stackrel{\text{def}}{=} \int_S \alpha_{d+1}(x)\, dP(x), \label{eq:grcg:1} \\
    t_d(x) &\stackrel{\text{def}}{=} \frac{d T_d}{d P}(x), && ~\alpha_{d+1}(x) \stackrel{\text{def}}{=} \min\left\{ \frac{dQ}{dP}(x) - t_d(x), (1 - T_d(\mathcal{X})) \right\}, \label{eq:grcg:2} \\
    X_d \sim P,&~~U_d \sim \text{Uniform}(0, 1) && ~\beta_{d+1}(x) \stackrel{\text{def}}{=} \frac{\alpha_{d+1}(x)}{1 - T_d(\mathcal{X})}, \label{eq:grcg:3}
\end{align}}%
for all $x\in \mathcal{X}, S \in \Sigma$.
The algorithm terminates at the first occurrence of $U_d \leq \beta_{d+1}(X_d)$.
The $T_d$ measure corresponds to the mass that has been ruled off up to and including the $d^{\text{th}}$ rejection: $T_1(\mathcal{X}), T_2(\mathcal{X})$ and $T_3(\mathcal{X})$ are the sums of the blue and green masses in the left, middle and right plots of \cref{fig:grcg} respectively.
The $A_d$ measure corresponds to the acceptance mass at the $d^{\text{th}}$ step: $A_1(\mathcal{X}), A_2(\mathcal{X})$ and $A_3(\mathcal{X})$ are the masses of the green regions in the left, middle and right plots of \cref{fig:grcg} respectively.
Lastly, $t_d, \alpha_d$ are the Radon-Nikodym derivatives i.e., roughly speaking, the densities, of $T_d, A_d$ with respect to $P$, and $\beta_{d+1}(X_d)$ is the probability of accepting the sample $X_d$.

\vspace{-1mm}
Here, the encoder $\encode$ amounts to keeping count of the number of rejections that occur up to the first acceptance, setting $C$ equal to this count and returning $X$ and $C$.
The decoder $\decode$ amounts to drawing $C+1$ samples from $P$, using the same seed as the encoder, and returning the last of these samples.
While this algorithm is elegantly simple and achieves optimal codelengths, \citet{flamich2023adaptive} showed its expected runtime is $\smash{2^{\infD{Q}{P}}}$, where $\infD{Q}{P} = \sup_{x \in \mathcal{X}} \log (dQ/dP)(x)$ is the R\'enyi $\infty$-divergence. Unfortunately, this is prohibitively slow in most practical cases.
\vspace{-2mm}
\pg{REC with Poisson \& Gumbel processes.}
\cite{li2018strong} introduced a REC algorithm based on Poisson processes, referred to as Poisson Functional Representation (PFR).
PFR assumes that $dQ/dP$ is bounded above, and relies on the fact that \citep{kingman1992poisson}, if $T_n$ are the ordered arrival times of a homogeneous Poisson process on $\mathbb{R}^+$ and $X_n \sim P$, then 
\begin{equation}
    \label{eq:pfr}
    N \stackrel{\text{def}}{=} \argmin_{n \in \mathbb{N}} \left\{T_n \frac{dP}{dQ}(X_n)\right\} \implies X_N \sim Q,
\end{equation}
Therefore, PFR casts the REC problem into an optimisation, or search, problem, which can be solved in finite time almost surely.
The PFR encoder draws pairs of samples $T_n, X_n$, until it solves the search problem in \cref{eq:pfr}, and returns $X = X_N, C = N - 1$.
The decoder can recover $X_N$ from $(P, C, S)$, by drawing $N$ samples from $P$, using the same random seed, and keeping the last sample.
While, like the algorithm of \cite{harsha2007communication}, PFR is elegantly simple and achieves optimal codelengths, its expected runtime is also $2^{\infD{Q}{P}}$ \citep{maddison2016poisson}.
\vspace{-2mm}
\pg{Fast REC requires additional assumptions.}
These algorithms' slow runtimes are perhaps unsurprising considering \citeauthor{agustsson2020universally}'s result, which shows under the computational hardness assumption $\mathrm{RP} \neq \mathrm{NP}$ that without making additional assumptions on $Q$ and $P$, there is no REC algorithm whose expected runtime scales \textit{polynomially} in $\KLD{Q}{P}$.
Therefore, in order achieve faster runtimes, a REC algorithm must make additional assumptions on $Q$ and $P$. 
\vspace{-2mm}
\pg{A$^*$ coding.}
To this end, \cite{flamich2022fast} proposed: (1) a set of appropriate assumptions which are satisfied by many deep latent variable models in practice and (2) a REC algorithm, referred to as \astar coding, which leverages these assumptions to achieve a substantial speed-up over existing methods.
In particular, \astar coding generalizes PFR by introducing a partitioning scheme, which splits the sample space $\mathcal{X}$ in nested partitioning subsets, to speed up the solution of \cref{eq:pfr}.
Drawing inspiration from this, our proposed algorithm generalises \cref{eq:grcg:1,eq:grcg:2,eq:grcg:3} in an analogous manner (see \cref{fig:cube}), introducing partitioning processes (\cref{def:pp}) to speed up the algorithm's termination.
\begin{definition}[Partitioning process]
    \label{def:pp}
    A partitioning process is a process {\normalfont $Z: \mathbb{N}^+ \to \Sigma$} such that
    {\normalfont \begin{equation}
        Z_1 = \mathcal{X}, ~~ Z_{2n} \cap Z_{2n + 1} = \emptyset, ~~ Z_{2n} \cup Z_{2n + 1} = Z_n.
    \end{equation}}
\end{definition}
In other words, a partitioning process $Z$ is a process indexed by the heap indices of an infinite binary tree, where the root node is $\mathcal{X}$ and any two children nodes $Z_{2n}, Z_{2n + 1}$ partition their parent node $Z_n$.
In \cref{sec:grc} we present specific choices of partitioning processes which dramatically speed up GRC.

\vspace{-2mm}
\pg{Greedy Poisson Rejection Sampling.}
Contemporary to our work, \citet{flamich2023greedy} introduces a rejection sampler based on Poisson processes, which can be used as a REC algorithm referred to as Greedy Poisson Rejection Sampling (GPRS). 
Similar to GRC and A* coding, GPRS partitions the sample space to speed up the convergence to the accepted sample. 
Furthermore, a variant of GPRS also achieves order-optimal runtime for one-dimensional distribution pairs with a unimodal density ratio. 
However, the construction of their method is significantly different from ours, relying entirely on Poisson processes.
Moreover, GPRS requires numerically solving a certain ODE, while our method does not, making it potentially more favourable in practice. 
We believe establishing a closer connection between GPRS and GRC is a promising future research direction.
\section{Greedy Rejection Coding}
\label{sec:grc}

\vspace{-2mm}
\pg{Generalising \cite{harsha2007communication}.}
In this section we introduce Greedy Rejection Coding (GRC; \cref{def:grc}), which generalises the algorithm of \cite{harsha2007communication} in two ways.
First, GRC can be used with distributions over arbitrary probability spaces.
Therefore, it is applicable to arbitrary REC problems, including REC with continuous distributions.
Second, similar to A$^*$ coding, GRC can be combined with arbitrary partitioning processes, allowing it to achieve optimal runtimes given additional assumptions on the REC problem, and an appropriate choice of partitioning process.

\begin{figure}
    \vspace{-18mm}
    \centering
    \begin{subfigure}[b]{0.32\textwidth}
        \includegraphics[width=\linewidth]{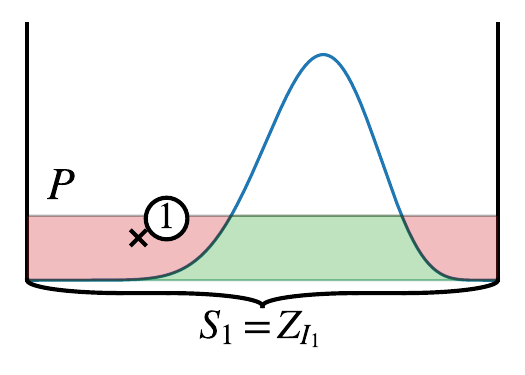}
        \subcaption{Sample \& {\color{green} accept} or {\color{red} reject}}
        \label{fig:grcd:illustration:a}
    \end{subfigure}
    \begin{subfigure}[b]{0.32\textwidth}
        \includegraphics[width=\linewidth]{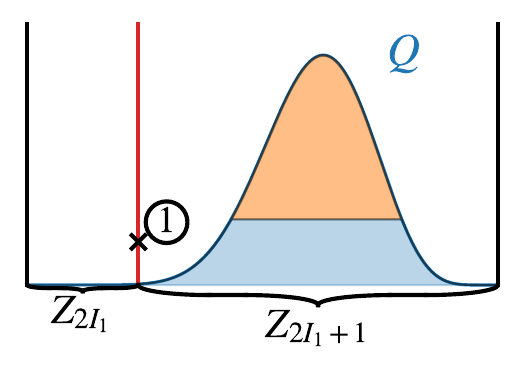}
        \subcaption{Partition \& sample $b_1 \in\{{\color{purple}\mathbf{0}}, {\color{orange}\mathbf{1}}\}$}
    \end{subfigure}
    \begin{subfigure}[b]{0.32\textwidth}
        \includegraphics[width=\linewidth]{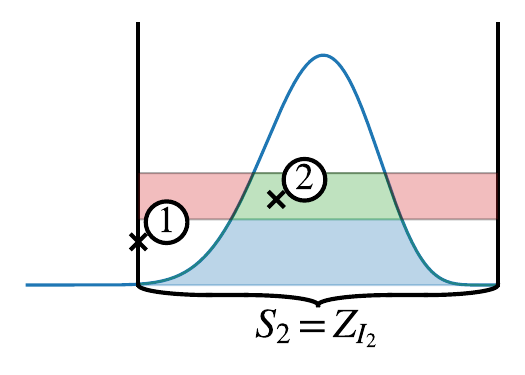}
        \subcaption{Sample \& {\color{green} accept} or {\color{red} reject}}
    \end{subfigure}
    \newline
    \begin{subfigure}[b]{0.32\textwidth}
        \includegraphics[width=\linewidth]{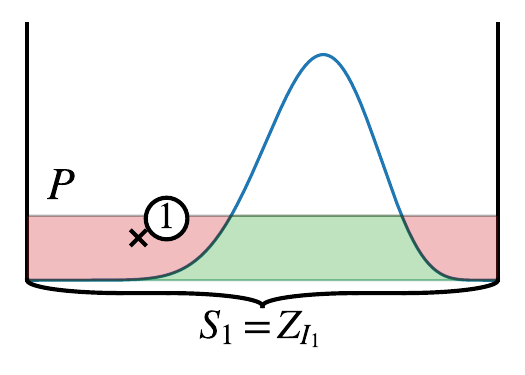}
        \subcaption{Sample \& {\color{green} accept} or {\color{red} reject}}
        \label{fig:grcd:illustration:d}
    \end{subfigure}
    \begin{subfigure}[b]{0.32\textwidth}
        \includegraphics[width=\linewidth]{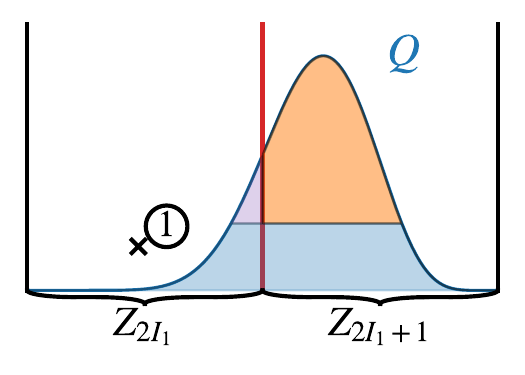}
        \subcaption{Partition \& sample $b_1 \in\{{\color{purple}\mathbf{0}}, {\color{orange}\mathbf{1}}\}$}
        \label{fig:grcd:illustration:e}
    \end{subfigure}
    \begin{subfigure}[b]{0.32\textwidth}
        \includegraphics[width=\linewidth]{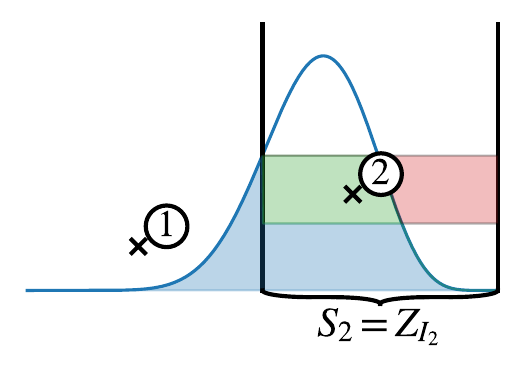}
        \subcaption{Sample \& {\color{green} accept} or {\color{red} reject}}
        \label{fig:grcd:illustration:f}
    \end{subfigure}
    \caption{
    Illustrations of the two variants of GRC considered in this work.
    (a) to (c) show GRC with the \textit{sample-splitting} partitioning process (GRCS).
    (d) to (f) show GRC with the dyadic partition process (GRCD).
    GRC interleaves accept-reject steps with partitioning steps.
    In the former, it draws a sample and either accepts or rejects it.
    In the latter, it partitions the sample space and randomly chooses one of the partitions, ruling out large parts of the sample space and speeding up termination.
    }
    \label{fig:grcd:illustration}
    \vspace{-4mm}
\end{figure}

\subsection{Algorithm definition}
\vspace{-1mm}
\pg{Overview.}
Before specifying GRC, we summarise its operation with an accompanying illustration.
On a high level, GRC interleaves accept-reject steps with partitioning steps, where the latter are determined by a partitioning process.
Specifically, consider the example in \cref{fig:grcd:illustration:d,fig:grcd:illustration:e,fig:grcd:illustration:f}, where $Q$ and $P$ are distributions over $\mathcal{X} = [0, 1]$, and $Z$ is the partitioning process defined by
\begin{equation}
    Z_n = [L, R] \implies Z_{2n} = [L, M), Z_{2n+1} = [M, R], \text{ where } M = (L + R) / 2.
\end{equation}
In each step $d = 1, 2, \dots$, GRC maintains a heap index $I_d$ of an infinite binary tree, and an active subset $S_d = Z_{I_d} \subseteq \mathcal{X}$ of the sample space, initialised as $I_0 = 1$ and $S_1 = Z_1 = \mathcal{X}$ respectively.

\vspace{-2mm}
\pg{Accept-reject step.}
In each step, GRC draws a sample from the restriction of $P$ to $S_d$, namely $P|_{S_d} / P(S_d)$, and either accepts or rejects it.
If the sample is accepted, the algorithm terminates.
Otherwise, GRC performs a partitioning step as shown in \cref{fig:grcd:illustration:d}

\vspace{-2mm}
\pg{Partitioning step.}
In each partitioning step, GRC partitions $S_d = Z_{I_d}$ into $Z_{2I_d}$ and $Z_{2I_d + 1}$, as specified by the partitioning process $Z$.
It then samples a Bernoulli random variable $b_d$, whose outcomes have probabilities proportional to the mass of $Q$ which has not been accounted for, up to and including step $d$, within the partitions $Z_{2I_d}$ and $Z_{2I_d + 1}$ respectively.
In \cref{fig:grcd:illustration:e}, these two masses correspond to the purple and orange areas, and the algorithm has sampled $b_d = 1$.
Last, GRC updates the heap index to $I_{d+1} = 2I_d + b_d$ and the active subset to $S_{d+1} = Z_{I_{d+1}}$.
GRC proceeds by interleaving accept-reject and partitioning steps until an acceptance occurs.

\vspace{-2mm}
\pg{Algorithm specification.}
The aforementioned algorithm can be formalised in terms of probability measures over arbitrary spaces and arbitrary partitioning processes.
Above, \cref{alg:harsha,alg:grc} describe \citeauthor{harsha2007communication}'s rejection sampler and our generalisation of it, respectively.
For the sake of keeping the exposition lightweight, we defer the formal measure-theoretic definition of GRC to the appendix (see \cref{def:grc} in \cref{app:def:grc}), and refer to \cref{alg:grc} as a working definition here.

\begin{figure}[t]
\vspace{-16mm}
\begin{minipage}{0.48\textwidth}
\begin{algorithm}[H]
    \centering
    \caption{\citeauthor{harsha2007communication}'s rejection algorithm; equivalent to GRC with a global partition}
    \footnotesize
    \begin{algorithmic}[1]
        \Require Target $Q$, proposal $P$, space $\mathcal{X}$
        \State $d \gets 0, T_0 \gets 0$
        \State
        \While{\texttt{True}}
            \State $X_{d+1} \sim P$
            \State $U_{d+1} \sim \text{Uniform}(0, 1)$
            \State $\beta_{d+1} \gets \texttt{AcceptProb}(Q, P, X_{d+1}, T_d)$
                \If{$U_{d+1} \leq \beta_{d+1}$}
                    \State \textbf{return} $X_{d+1}, d$
                \EndIf
            \State 
            \State 
            \State 
            \State $T_{d+1} \gets \texttt{RuledOutMass}(Q, P, T_d)$
            \State $d \gets d + 1$
        \EndWhile
    \end{algorithmic}
    \label{alg:harsha}
\end{algorithm}
\end{minipage}
\hfill
\begin{minipage}{0.50\textwidth}
\begin{algorithm}[H]
    \centering
    \caption{GRC with partition process $Z$; differences to \citeauthor{harsha2007communication}'s algorithm shown in {\color{green} green}}
    \footnotesize
    \begin{algorithmic}[1]
        \Require Target $Q$, proposal $P$, space $\mathcal{X}$, {\color{green}partition $Z$}
        \State $d \gets 0, T_0 \gets 0$
        \State ${\color{green} I_0 \gets 1, S_1 \gets \mathcal{X}}$
        \While{$\mathtt{True}$}
            \State $X_{I_d} \sim {\color{green} P|_{S_d}/P(S_d)}$
            \State $U_{I_d} \sim \text{Uniform}(0, 1)$
            \State $\beta_{I_d} \gets \texttt{AcceptProb}(Q, P, X_{I_d}, T_d)$
                \If{$U_{I_d} \leq \beta_{d+1}$ 
                {\color{green}\texttt{or} $d = D_{\max}$}
                }
                    \State \textbf{return} $X_{I_d}, I_d$
                \EndIf
            \State {\color{green}$p \gets \texttt{PartitionProb}(Q, P, T_d, Z_{2d}, Z_{2d+1})$}
            \State {\color{green} $b_d \sim \text{Bernoulli}(p)$}
            \State {\color{green} $I_{d+1} \gets 2I_d + b_d$ 
            \text{ and }$S_{d+1} \gets Z_{I_{d+1}}$}
            \State $T_{d+1} \gets \texttt{RuledOutMass}(Q, P, T_d, {\color{green} S_{d+1}})$
            \State $d \gets d + 1$
        \EndWhile
    \end{algorithmic}
    \label{alg:grc}
\end{algorithm}
\end{minipage}
\vspace{-3mm}
\end{figure}

\vspace{-2mm}
\pg{Comparison to \citeauthor{harsha2007communication}}
While \cref{alg:harsha,alg:grc} are similar, they differ in two notable ways.
First, rather than drawing a sample from $P$, GRC draws a sample from the restriction of $P$ to an active subset $S_d = Z_d \subseteq \mathcal{X}$, namely $P|_{S_d}/P(S_d)$.
Second, GRC updates its active subset $S_d = Z_d$ at each step, setting it to one of the children of $Z_d$, namely either $Z_{2d}$ or $Z_{2d+1}$, by drawing $b_d \sim \text{Bernoulli}$, and setting $Z_{2d+b_d}$.
This partitioning mechanism, which does not appear in \cref{alg:harsha}, yields a different variant of GRC for each choice of partitioning process $Z$.
In fact, as shown in \Cref{proposition:grcg} below, \cref{alg:harsha} is a special case of GRC with $S_d = \mathcal{X}$ for all $d$.
See \cref{app:harsha:special} for the proof.
\begin{proposition}[\cite{harsha2007communication} is a special case of GRC]
\label{proposition:grcg}
    Let $Z$ be the global partitioning process over $\Sigma$, defined as
    \begin{equation}
        \label{eq:global:Z}
        Z_1 = \mathcal{X}, ~~~ Z_{2n} = Z_n, ~~~Z_{2n+1} = \emptyset, ~~\text{ for all }~ n = 1, 2, \dots.
    \end{equation}
    \cite{harsha2007communication} is equivalent to GRC using this $Z$ and setting $C = D^*$ instead of $C = I_{D^*}$.
    We refer to this algorithm as Global GRC, or \textbf{GRCG} for short.
\end{proposition}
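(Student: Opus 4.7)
The plan is to verify that under the global partitioning process $Z$, (i) the active subset $S_d$ remains $\mathcal{X}$ for every iteration, and (ii) the heap index evolves deterministically as $I_d = 2^d$. These two properties will force every operation in \cref{alg:grc} to reduce to its counterpart in \cref{alg:harsha}, leaving only the output convention as a genuine difference, which the proposition handles explicitly by replacing $C = I_{D^*}$ with $C = D^*$.

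First I would characterize $Z$ explicitly: a short induction on $n$ using $Z_{2n} = Z_n$ and $Z_{2n+1} = \emptyset$ shows that $Z_n = \mathcal{X}$ when $n$ is a power of two, and $Z_n = \emptyset$ otherwise. Then I would track the pair $(I_d, S_{d+1})$ inductively, starting from $I_0 = 1$ and $S_1 = Z_1 = \mathcal{X}$. For the inductive step, the Bernoulli parameter $p$ on line 10 of \cref{alg:grc} is proportional to the residual mass of $Q$ in $Z_{2I_d + 1}$; because this set is empty by the previous observation, $p = 0$, so $b_d = 0$ almost surely. Hence $I_{d+1} = 2 I_d$, giving $I_d = 2^d$ and $S_{d+1} = Z_{2^{d+1}} = \mathcal{X}$ for every $d$.

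Given these two invariants, each line of \cref{alg:grc} reduces to the corresponding line of \cref{alg:harsha}. Since $S_d = \mathcal{X}$, the restricted proposal $P|_{S_d}/P(S_d)$ coincides with $P$, matching the sampling step. The \texttt{RuledOutMass} update with $S_{d+1} = \mathcal{X}$ reduces to the recursion in \cref{eq:grcg:1,eq:grcg:2}, so the sequences $(T_d)$, $(\alpha_{d+1})$ and acceptance probabilities $(\beta_{d+1})$ agree. Consequently the two algorithms have identical trajectories under any coupling of the underlying randomness, and the only discrepancy is whether one reports $D^*$ or $I_{D^*} = 2^{D^*}$ as the code; the substitution $C = D^*$ specified in the proposition identifies these.

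The only point of care is ensuring that the partitioning probability is well-defined in the degenerate case where one child carries no residual $Q$-mass. I would address this by invoking the natural convention $p = 0$ when the numerator vanishes, and verify that this matches the formal measure-theoretic definition in \cref{def:grc} in the appendix. Beyond this minor bookkeeping the argument is a direct unwinding of definitions and requires no further estimates, so I do not anticipate a serious obstacle.
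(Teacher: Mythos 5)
Your proposal is correct and takes essentially the same route as the paper's own proof: both observe that the Bernoulli probability in the partitioning step vanishes because $Z_{2I_d+1}=\emptyset$, giving $b_d = 0$ almost surely, hence $S_d = \mathcal{X}$ for all $d$, and then note that the definitions of $T_{d+1}$, $\alpha_{d+1}$, $\beta_{d+1}$ in \cref{def:grc} collapse to \cref{eq:grcg:1,eq:grcg:2,eq:grcg:3} once $P(S_d)=1$ and $Q(S\cap S_{d+1}')=0$. Your version simply spells out a few intermediate details the paper leaves implicit (the explicit $Z_n=\mathcal{X}$ iff $n$ is a power of two, the induction $I_d = 2^d$, and the $0/0$ convention for the Bernoulli parameter), but the core argument is identical.
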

\vspace{-4mm}
\pg{Partitioning processes and additional assumptions.}
While \Cref{proposition:grcg} shows that \citeauthor{harsha2007communication}'s algorithm is equivalent to GRC with a particular choice of $Z$, a range of other choices of $Z$ is possible, and this is where we can leverage additional structure.
In particular, we show that when $Q$ and $P$ are continuous distributions over $\Reals$ with a unimodal density ratio $dQ/dP$, we can dramatically speed up GRC with an appropriate choice of $Z$.
In particular, we will consider the sample-splitting and dyadic partitioning processes from \cite{flamich2022fast}, given in \Cref{def:sample:Z,def:dyadic:Z}.
\vspace{2mm}
\begin{definition}[Sample-splitting partitioning process]\label{def:sample:Z}
    Let $\mathcal{X} = \mathbb{R} \cup \{-\infty, \infty\}$ and $P$ a continuous distribution.
    The \textit{sample-splitting} partitioning process is defined as
    \begin{equation*}
        Z_n = [a, b], a, b \in \mathcal{X} \implies Z_{2n} = [a, X_n],~~Z_{2n+1} = [X_n, b], \text{ where } X_n \sim P|_{Z_n}/P(Z_n).
    \end{equation*}
\end{definition}
In other words, in the sample-splitting process, $Z_n$ are intervals of $\mathbb{R}$, each of which is partitioned into sub-intervals $Z_{2n}$ and $Z_{2n+1}$ by splitting at the sample $X_n$ drawn from $P|_{Z_n}/P(Z_n)$.
We refer to GRC with the sample-splitting partitioning process as \textbf{GRCS}.
\vspace{2mm}
\begin{definition}[Dyadic partitioning process]\label{def:dyadic:Z}
    Let $\mathcal{X} = \mathbb{R} \cup \{-\infty, \infty\}$ and $P$ a continuous distribution.
    The \textit{dyadic} partitioning process is defined as
    \begin{equation*}
        Z_n = [a, b], a, b \in \mathcal{X} \implies Z_{2n} = [a, c],~~Z_{2n+1} = [c, b], \text{ such that } P(Z_{2n}) = P(Z_{2n+1}).
    \end{equation*}
\end{definition}
Similar to the sample-splitting process, in the dyadic process $Z_n$ are intervals of $\mathbb{R}$.
However, in the dyadic process, $Z_n$ is partitioned into sub-intervals $Z_{2n}$ and $Z_{2n+1}$ such that $P(Z_{2n}) = P(Z_{2n+1})$.
We refer to GRC with the dyadic partitioning process as \textbf{GRCD}.
\vspace{-2mm}
\pg{GRC with a tunable codelength.}
\citeauthor{flamich2022fast} presented a depth-limited variant of \adstar coding, DAD$^*$ coding, in which the codelength $|C|$ can be provided as a tunable input to the algorithm.
Fixed-codelength REC algorithms are typically approximate because they introduce bias in their samples, but are nevertheless useful in certain contexts, such as for coding a group of random variables with the same fixed codelength.
GRCD can be similarly modified to accept $|C|$ as an input, by limiting the maximum steps of the algorithm by $D_{\max}$ (see \cref{alg:grc}).
Setting $D_{\max} = \infty$ in \cref{alg:grc} corresponds to exact GRC, while setting $D_{\max} < \infty$ corresponds to depth-limited GRC.

\subsection{Theoretical results}

\vspace{-1mm}
\pg{Correctness of GRC.}
In \cref{thm:correctness} we show that GRC terminates almost surely and produces unbiased samples from $Q$, given interchangeable mild assumptions on $Q, P$ and $Z$.
\Cref{assumption:finite_ratio_mode} is the most general, since it holds for any $Q$ and $P$ over arbitrary probability spaces, and can be used to apply GRC to arbitrary coding settings.
\begin{assumption}
    \label{assumption:finite_ratio_mode}
    GRC has a finite ratio mode if $dQ/dP(x) < M$ for all $x \in \mathcal{X}$, for some $M \in \mathbb{R}$.
\end{assumption}
\Cref{assumption:finite_ratio_mode} holds for GRCG, GRCS and GRCD, so long as $dQ/dP$ is bounded.
While this assumption is very general, in some cases we may want to consider $Q, P$ with unbounded $dQ/dP$.
To this end, we show that it can be replaced by alternative assumptions, such as \cref{assumption:single_branch,assumption:nice_shrinking}.
\begin{assumption}
    \label{assumption:single_branch}
    GRC is single-branch if for each $d$, $b_d = 0$ or $b_d = 1$ almost surely.
\end{assumption}
\vspace{-1mm}
GRC with the global partitioning process (eq. \ref{eq:global:Z}) satisfies \cref{assumption:single_branch}.
In addition, if $Q$ and $P$ are distributions over $\mathbb{R}$ and $dQ/dP$ is unimodal, GRCS also satisfies \cref{assumption:single_branch}.
\begin{assumption}
    \label{assumption:nice_shrinking}
    Suppose $\mathcal{X} \subseteq \mathbb{R}^N$.
    GRC has nicely shrinking $Z$ if, almost surely, the following holds.
    For each $x \in \mathcal{X}$ which is in a nested sequence of partitions $x \in Z_1 \supseteq \dots \supseteq Z_{k_d} \supseteq \dots$ with $P(Z_{k_d}) \to 0$, there exist $\gamma, r_1, r_2, ... \in \mathbb{R}_{>0}$ such that
    \begin{equation}
        r_d \to 0,~ Z_{k_d} \subseteq B_{r_d}(x) \text{ and } P(Z_{k_d}) \geq \gamma P(B_{r_d}(x)).
    \end{equation}
\end{assumption}
\vspace{-2mm}
If $Q$ and $P$ are distributions over $\mathbb{R}$, GRCD satisfies assumption \ref{assumption:nice_shrinking}.
\Cref{thm:correctness} shows that if any of the above assumptions hold, then GRC terminates almost surely and yields unbiased samples from $Q$.
We provide the proof of the theorem in \cref{app:correctness}.
\begin{theorem}[Correctness of GRC]
    \label{thm:correctness}
    Suppose $Q, P$ and $Z$ satisfy any one of \cref{assumption:finite_ratio_mode,assumption:single_branch,assumption:nice_shrinking}.
    Then, \cref{alg:grc} terminates with probability $1$, and its returned sample $X$ has law $X \sim Q$.
\end{theorem}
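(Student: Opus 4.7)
The plan is to prove both termination and unbiasedness simultaneously by reducing them to a single convergence claim about the random sub-probability measure $T_d$. An induction on $d$, using the $\min$ in the definition of $\alpha_{d+1}$ in \cref{eq:grcg:2}, gives the pointwise invariant $t_d \leq dQ/dP$ and hence $T_d \leq Q$ as measures. Since $T_d$ is non-decreasing in $d$, the limit $T_\infty(S) := \lim_d T_d(S)$ exists almost surely and is dominated by $Q(S)$.

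\paragraph{Telescoping identity.}
The next step is to verify that the conditional probability of accepting at step $d+1$ with $X_{d+1} \in S$, given the history $\mathcal{F}_d = \sigma(X_1, U_1, b_1, \ldots, X_d, U_d, b_d)$, equals $A_{d+1}(S) = T_{d+1}(S) - T_d(S)$. This is a direct consequence of the way $\beta_{d+1}$ is normalised against the restricted proposal $P|_{S_d}/P(S_d)$ in \cref{alg:grc}, together with $A_{d+1}$ being supported on $S_d$. Summing over $d$ and applying monotone convergence yields
\[
\Prob(X \in S,~\text{GRC terminates}) \;=\; \sum_{d=0}^\infty \E\bigl[A_{d+1}(S)\bigr] \;=\; \E\bigl[T_\infty(S)\bigr].
\]
Setting $S = \mathcal{X}$ reduces both assertions of the theorem to the single statement $T_\infty = Q$ almost surely.

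\paragraph{Case analysis.}
Under \cref{assumption:finite_ratio_mode}, once $1 - T_d(\mathcal{X}) < 1/M$ the cap in $\alpha_{d+1}$ becomes inactive and $\alpha_{d+1}(x) = dQ/dP(x) - t_d(x)$ on $S_d$; combined with the partition rule for $b_d$, which routes the residual mass proportionally, this forces $T_d(\mathcal{X}) \to 1$ almost surely, hence $T_\infty = Q$. Under \cref{assumption:single_branch}, each $b_d$ is deterministic, so GRC collapses to a Harsha-style rejection sampler on a deterministically nested sequence of supports whose accumulated $Q$-mass sums to $1$; the argument of \citet{harsha2007communication} then yields $T_\infty = Q$ directly. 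Under \cref{assumption:nice_shrinking}, the plan is a Lebesgue-differentiation argument: the roundness hypothesis $P(Z_{k_d}) \geq \gamma P(B_{r_d}(x))$ ensures that the partitions containing $x$ form a valid differentiation basis for $P$, so $t_d(x) \to dQ/dP(x)$ for $P$-almost every $x$ on the event of non-termination, and dominated convergence upgrades this pointwise statement to $T_\infty = Q$.

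\paragraph{Main obstacle.}
The main difficulty is \cref{assumption:nice_shrinking}, which requires applying a differentiation theorem along the \emph{random} partitions produced by $Z$ and arguing that the roundness condition really does yield a genuine Busemann--Feller-type differentiation basis, despite the $Z_{k_d}$ being neither balls nor independent of the sample path. The first two cases are essentially bookkeeping around the telescoping identity, whereas this third case is the analytic heart of the proof.
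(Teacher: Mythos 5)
Your high-level reduction — show that the probability measure of the returned sample converges to $Q$ in total variation — is the same as the paper's (which works with the measure $\tau_d(S) := \Prob(\text{terminate by step } d,\ X \in S)$ and shows $\tau_d \to Q$). However, there is a genuine gap in how you set this up. The telescoping identity $T_{d+1}(S) - T_d(S) = A_{d+1}(S)$ that you rely on is \emph{false} once the partitioning process is nontrivial: from the formal update in \cref{eq:def:T}, $T_{d+1}(S, S_{0:d+1}) = T_d(S \cap S_{d+1}, S_{0:d}) + A_{d+1}(S \cap S_{d+1}, S_{0:d}) + Q(S \cap S_{d+1}')$, so the increment contains both a restriction of $T_d$ to the chosen child and a $Q$-mass term for the discarded sibling. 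Consequently $T_\infty$ (a random, path-dependent object) is not the same as $\tau := \lim_d \tau_d$, and $\sum_d \E[A_{d+1}(S)] = \E[T_\infty(S)]$ does not hold as stated. The paper is careful to separate these: it defines $\tau_d$ directly from termination probabilities (Lemma~\ref{lemma:tau}), shows $\tau_D(R) = T_D(R, \hat Z_{D+1,k})$ only for $R$ inside a single leaf (Lemma~\ref{lem:T_and_tau}), and never equates the two globally.

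Beyond the setup, the case analysis has concrete problems. For \cref{assumption:finite_ratio_mode} your claim that ``once $1 - T_d(\mathcal{X}) < 1/M$ the cap in $\alpha_{d+1}$ becomes inactive'' is incorrect: the comparison in \cref{eq:def:alpha} is between $dQ/dP(x) - t_d(x)$ and $(1-T_d(\mathcal{X}))/P(S_d)$, so which argument of the $\min$ wins depends on $P(S_d)$ as well; in GRCG one has $P(S_d) \equiv 1$ and the cap is essentially never uniformly non-binding, yet termination must still be established. The paper's proof instead introduces the sets $H_{d,k}$ (where the cap is binding), and for this assumption builds the potential $\Phi_d = \sum_k \Delta_{d,k}^2/P(Z_{d,k})$, shows it is non-decreasing via a Cauchy–Schwarz/Jensen-type argument, and derives a contradiction from $\Phi_d \to 0$ — a genuinely different and necessary idea that your sketch does not supply. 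For \cref{assumption:single_branch} you invoke \citet{harsha2007communication} directly, but their argument is for a fixed proposal, not for shrinking active supports, and it does not cover unbounded $dQ/dP$ (which this assumption is meant to allow); the paper's Lemma~\ref{lemma:single_branch} uses the inequality $p_d \geq P(H_d)\, w_d$ on the single surviving branch to force $w_d \to 0$. For \cref{assumption:nice_shrinking} your Lebesgue-differentiation plan is the right analytic ingredient (the paper uses Rudin's differentiation theorem), but you omit the separate treatment of the countable atomic set $C$ of non-null nested intersections, on which $P(B_d(x))$ does \emph{not} shrink to zero and the differentiation theorem does not apply; the paper handles $H \cap C$ and $H \setminus C$ with distinct arguments (Fatou on each piece, not dominated convergence). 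Finally, the paper's proof is organized around a dichotomy — first dispatch the cases $p_d \not\to 0$ and $P(H_d) \to 0$, then bring in the assumptions only for the residual case — which your proposal lacks, and which is what makes each assumption-specific argument tractable.
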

\vspace{-5mm}
\pg{Expected runtime and codelength of GRCS.}
Now we turn to the expected runtime and codelength of GRCS.
\Cref{thm:grcs:codelength} shows that the expected codelength of GRCS is optimal, while \Cref{thm:grcs:runtime} establishes that its runtime is order-optimal.
We present the proofs of the theorems in \cref{app:grcs_proofs}.
\vspace{2mm}
\begin{theorem}[GRCS codelength]
\label{thm:grcs:codelength}
    Let $Q$ and $P$ be continuous distributions over $\mathbb{R}$ such that $Q \ll P$ and with unimodal $dQ/dP$. Let $Z$ be the sample-splitting process, and $X$ its returned sample.
    Then,
    \begin{equation} \label{eq:codelength}
    \Ent[X | Z] \leq \KLD{Q}{P} + 2\log\left(\KLD{Q}{P} + 1\right) + \Oh(1).
    \end{equation}
\end{theorem}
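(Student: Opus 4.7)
The plan is to construct a prefix-free binary code for $X$ given $Z$ as side information, then invoke Shannon's source coding inequality $\Ent[X \mid Z] \le \E[\text{codelength}]$, with Jensen's inequality producing the lower-order $2\log$ correction. The starting observation is that $X = X_{I_{D^*}}$ is a deterministic function of $(Z, I_{D^*})$, so it suffices to encode the positive integer $I_{D^*}$ given $Z$.

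Concretely, I would encode $I_{D^*}$ using Elias's $\delta$-code, which represents any positive integer $n$ in $\lfloor \log n \rfloor + 2\lfloor \log(\lfloor \log n \rfloor + 1)\rfloor + 1$ bits. Applying Shannon's inequality to this code, together with concavity of $\log$ via Jensen's inequality, gives
\[
\Ent[X | Z] \le \Ent[I_{D^*} | Z] \le \E[\log I_{D^*}] + 2\log(\E[\log I_{D^*}] + 1) + \Oh(1),
\]
so the whole theorem reduces to showing $\E[\log I_{D^*}] \le \KLD{Q}{P} + \Oh(1)$.

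To bound $\E[\log I_{D^*}]$, I would relate the depth of the accepting node $Z_{I_{D^*}}$ to the local density ratio at the accepted sample. Under the sample-splitting scheme, the nested partitions behave like a randomized binary search tree built from $P$-samples, and a node at heap index $n$ has $P$-mass distributed as a product of Beta$(1,1)$ splits, giving $\E[-\log P(Z_n)] \approx \log n$. The acceptance rule forces the locally-remaining $Q$-over-$P$ ratio $\alpha_{D^*+1}(X_{D^*})/(1 - T_{D^*}(\mathcal{X}))$ at the accepting sample to be of order $1$, which after rearrangement should yield a pointwise bound of the form $\log I_{D^*} \le \log(dQ/dP)(X) + \Oh(1)$ in expectation. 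Since $X \sim Q$ by the correctness theorem, averaging delivers $\E[\log I_{D^*}] \le \KLD{Q}{P} + \Oh(1)$ as required.

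The main obstacle will be making the heuristic \emph{``acceptance at depth $D^*$ forces $P(Z_{I_{D^*}}) \cdot dQ/dP(X) = \Theta(1)$''} precise, and correctly handling the random depth of the sample-splitting partitions. Unimodality of $dQ/dP$ is essential throughout: it delivers the single-branch property (so the path $I_0, I_1, \ldots, I_{D^*}$ through the tree is determined by $Z$ once $D^*$ is known) and ensures that the nested partitions concentrate around the mode rather than fragmenting across many branches. I anticipate the final argument to proceed either via a martingale tracking the tightness $\log P(S_d) - \log(1 - T_d(\mathcal{X}))$ of the rejection process, or via a more direct integration over the sample-splitting tree in the spirit of Harsha et al.'s analysis of their global-partitioning variant.
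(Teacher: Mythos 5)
Your plan reduces the theorem to proving $\E[\log I_{D^*}] \leq \KLD{Q}{P} + \Oh(1)$, and this reduction is where the gap is. Since $I_{D^*}$ is the heap index of a node at depth $D^*$, you have $2^{D^*} \leq I_{D^*} < 2^{D^*+1}$, so $\log I_{D^*} = D^* + \Oh(1)$ pointwise. Your target therefore becomes $\E[D^*] \leq \KLD{Q}{P} + \Oh(1)$, which is \emph{not} what the paper proves: the runtime bound (\cref{thm:grcs:runtime}) only gives $\E[D^*] \leq \beta \KLD{Q}{P} + \Oh(1)$ with $\beta = 1/\log(4/3) \approx 4.82$. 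The constant $\beta$ is there precisely because the sample-splitting process sometimes rules out much less than half the remaining $P$-mass per step, so the number of steps can exceed the number of bits of $P$-mass ruled out by a multiplicative factor. Executing your plan with Elias $\delta$-coding of $I_{D^*}$ would therefore yield a leading term of $\beta\KLD{Q}{P}$, not $\KLD{Q}{P}$.

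The source of the confusion is the step where you claim $\E[-\log P(Z_n)] \approx \log n$. That identity would hold for the dyadic process, where each split exactly halves $P$-mass, but it fails for the sample-splitting process: a node at depth $d$ has $P$-mass whose negative log lies between roughly $d\log(4/3)$ and $d$ in expectation, depending on which branch the search follows. Your heuristic ``acceptance forces $P(S_{D^*})\cdot (dQ/dP)(X)=\Theta(1)$'' is essentially correct and, made rigorous, is precisely \cref{lemma:negative_log_bound_size_bound}, which bounds $\E[-\log P(S_{D^*})] \leq \KLD{Q}{P} + \log e$. But it bounds $-\log P(S_{D^*})$, \emph{not} $\log I_{D^*} \approx D^*$; these are decoupled quantities for sample-splitting.

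The paper avoids this by choosing a different code: it decomposes $\Ent[X|Z] \leq \Ent[D] + \Ent[S_{0:D}|D]$, spends only $2\log(D+1) + \Oh(1)$ bits on the depth $D$ via Elias $\gamma$-coding (the runtime bound enters here but only in the logarithmic correction, where the $\beta$ factor is harmless), and then encodes the split path $S_{0:D}$ with \emph{arithmetic coding}, whose cost telescopes to $-\log P(S_D) + 2$ bits. Arithmetic coding exploits the non-uniform split probabilities known to both encoder and decoder through $Z$; a universal integer code like $\delta$ cannot. If you want to stay with the ``encode $I_{D^*}$ directly'' viewpoint, you would need to replace $\delta$-coding with a code for $I_{D^*}$ whose length on heap index $n$ is close to $-\log P(Z_n) + 2\log\log$-correction bits, which is what arithmetic coding of the path accomplishes.
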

\begin{theorem}[GRCS runtime]
\label{thm:grcs:runtime}
    Let $Q$ and $P$ be continuous distributions over $\mathbb{R}$ such that $Q \ll P$ and with unimodal $dQ/dP$. Let $Z$ be the sample-splitting process and $D$ the number of steps the algorithm takes before accepting a sample.
    Then, for $\beta = 2/\log(4/3) \approx 4.82$ we have
    \begin{equation} \label{eq:runtime}
        \mathbb{E}[D] \leq \beta~\KLD{Q}{P} + \Oh(1)
    \end{equation}
\end{theorem}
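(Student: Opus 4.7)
The strategy is to reduce the runtime bound to a per-step progress estimate for a suitable potential function, then amortize against $\KLD{Q}{P}$ via an optional-stopping argument. By \cref{assumption:single_branch}, unimodality of $dQ/dP$ combined with sample-splitting makes GRCS single-branch, so the sequence $S_0 \supseteq S_1 \supseteq S_2 \supseteq \dots$ of active intervals is a deterministically nested chain, each containing the mode of the current unaccounted density ratio $d(Q-T_d)/dP$. This structural property distinguishes GRCS from GRCG and is the key leverage for an $\Oh(\KLD{Q}{P})$ bound rather than the exponential-in-$\infD{Q}{P}$ runtime suffered by GRCG.

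The crux of the proof would be a one-step progress lemma of the form: conditional on not having terminated by step $d$, there is a potential $\Phi_d$ with $\E[\Phi_{d+1} - \Phi_d \mid \mathcal{F}_d] \geq \tfrac{1}{2}\log(4/3)$. A natural candidate combines $-\log P(S_d)$, which captures geometric shrinkage of the active interval under the $\text{Uniform}(0,1)$ law of the $P$-mass split, with $-\log(1 - T_d(\mathcal{X}))$, which captures the accepted mass. The constant $\log(4/3)$ should emerge from a worst-case analysis distinguishing two regimes: either the acceptance probability on $S_d$ is bounded below by a constant (so the accepted-mass term drops significantly), or the split produces a child of $S_d$ containing the mode whose $P$-mass is bounded away from $P(S_d)$ (so the interval term drops significantly). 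Unimodality is exactly what rules out simultaneous failure of both regimes.

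Given such a lemma, the runtime bound follows by applying optional stopping to the submartingale $\Phi_d - \tfrac{1}{2}\log(4/3)\cdot d$, together with a bound $\E[\Phi_D] \leq \KLD{Q}{P} + \Oh(1)$. This last step uses that $\Phi_d$ accumulates log-density-ratio information whose expectation under $Q$ equals $\KLD{Q}{P}$, plus a logarithmic correction to handle the terminal potential (analogous to, and reusing ideas from, the $2\log(\KLD{Q}{P}+1)$ correction in \cref{thm:grcs:codelength}). Rearranging gives $\E[D] \leq \frac{2}{\log(4/3)}\KLD{Q}{P} + \Oh(1) = \beta\KLD{Q}{P} + \Oh(1)$.

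The main obstacle is establishing the one-step progress lemma with the sharp constant. A naive analysis based only on the $\text{Uniform}(0,1)$ law of the $P$-mass split would give $\log 2$ of shrinkage in the interval term but no coupling to the acceptance mechanism, and hence a weaker overall constant; extracting the $\log(4/3)$ improvement requires simultaneously tracking interval shrinkage, mode-containment, and the absolute cap $\alpha_{d+1}(x) \leq 1 - T_d(\mathcal{X})$ from \cref{eq:grcg:2}. I expect a case analysis on the position of the mode relative to the random split point $X_{I_d}$, combined with a Jensen-type inequality exploiting the concavity of $\log$, to be the technical heart of the argument, while the rest of the proof should reduce to bookkeeping.
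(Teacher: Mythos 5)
Your proposal correctly identifies the two structural pillars that the paper also exploits — geometric shrinkage of the active interval (a factor $\frac{3}{4}$ in expected $P$-mass per step, from the $\mathrm{Uniform}$ law of the split point) and an accounting of acceptance mass against $\KLD{Q}{P}$ — but it leaves the central technical step unproven, and the framework you choose is genuinely different from the paper's. The paper does \emph{not} run a potential-function / optional-stopping argument. Instead, it writes $\Exp[D] = \lim_{d\to\infty}\Exp\bigl[\sum_{j} j\, A_{j+1}\bigr]$ and applies a H\"older-type factorization
\begin{equation*}
\sum_{j} j\, A_{j+1} \;=\; \sum_j \frac{-j}{\log P(S_j)}\cdot\bigl(-A_{j+1}\log P(S_j)\bigr) \;\leq\; \Bigl(\max_j \frac{-j}{\log P(S_j)}\Bigr)\cdot\sum_j \bigl(-A_{j+1}\log P(S_j)\bigr),
\end{equation*}
then bounds the sum \emph{deterministically} by $\KLD{Q}{P} + \log e$ (their Lemma on the weighted negative log bound sizes) and bounds the expected max by $1/\log(4/3)$ via $\max = -\min$, $\Exp[\min]\leq\min\Exp$, Jensen, and the geometric decay $\Exp[P(S_d)]\leq(3/4)^d$. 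No submartingale is constructed and no stopping theorem is invoked.

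The concrete gap in your sketch is the one-step progress lemma: you posit $\E[\Phi_{d+1}-\Phi_d\mid\mathcal{F}_d]\geq\tfrac{1}{2}\log(4/3)$ but do not prove it, and the constant $\tfrac{1}{2}\log(4/3)$ appears reverse-engineered to hit $\beta=2/\log(4/3)$ rather than derived. The actual per-step interval-shrinkage rate is $\log(4/3)$ (not half of it), and you never show how the $-\log(1-T_d(\mathcal{X}))$ component of your candidate $\Phi_d$ interacts with the split to produce the claimed drift conditional on non-termination; this is precisely where the unimodality and the cap $\alpha_{d+1}\leq(1-T_d)/P(S_d)$ must be combined, and it is the hard part. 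Similarly, your claim $\E[\Phi_D]\leq\KLD{Q}{P}+\Oh(1)$ is asserted by analogy to the codelength bound but not established, and you do not check the conditions (uniform integrability, or a bounded-increments argument plus a.s.\ finiteness of $D$) needed for optional stopping to apply to your submartingale. In short, the plan is plausible in outline and the intuition about which quantities matter is right, but essentially all of the work remains to be done, and a direct calculation along the paper's lines is considerably more economical than the potential/martingale machinery you propose.
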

\pg{Improving the codelength of GRCD.}
In \Cref{thm:grcs:codelength} we state the bound for the REC setting, where we make no further assumptions on $Q$ and $P$.
However, we can improve the bound if we consider the \textit{reverse channel coding} (RCC) setting \citep{theis2021algorithms}.
In RCC, we have a pair of correlated random random variables $X, Y \sim P_{X, Y}$.
During one round of communication, the encoder receives $Y \sim P_Y$ and needs to encode a sample $X \sim P_{X | Y}$ from the posterior using $P_X$ as the proposal distribution.
Thus, RCC can be thought of as the average-case version of REC, where the encoder sets $Q \gets P_{X | Y}$ and $P \gets P_X$.
In this case, when the conditions of \Cref{thm:grcs:codelength} hold for every $(P_{X | Y}, P_X)$ pair, in \cref{app:grcs_proofs} we show that the bound can be improved to $\mathbb{I}[X; Y] + 2\log(\mathbb{I}[X; Y] + 1) + \Oh(1)$, where $\mathbb{I}[X; Y] = \Exp_{Y \sim P_Y}\left[\KLD{P_{X | Y}}{P_Y}\right]$ is the mutual information between $X$ and $Y$.
\vspace{-2mm}
\pg{GRCS runtime is order-optimal.}
\Cref{thm:grcs:runtime} substantially improves upon the runtime of \astar coding, which is the current fastest REC algorithm with similar assumptions.
In particular, \asstar coding has $\mathcal{O}(\infD{Q}{P})$ expected runtime, which can be arbitrarily larger than that of GRCS.
Remarkably, the runtime of GRCS is optimal up to the multiplicative factor $\beta$.
This term arises from the fact the sample-splitting process may occasionally rule out a small part of the sample space at a given step.
\vspace{-4mm}
\section{Experiments}
\vspace{-2mm}
We conducted two sets of experiments: one on controlled synthetic REC problems to check the predictions of our theorems numerically, and another using VAEs trained on MNIST to study how the performance of GRC-based compression pipelines can be improved in practice.
We conducted all our experiments under fair and reproducible conditions and make our source code public.\footnote{Source code to be published with the camera-ready version: \texttt{https://github.com/source-code}.}
\vspace{-1mm}
\subsection{Synthetic Experiments}
\label{sec:synthetic_experiments}
\vspace{-1mm}
\pg{Synthetic REC experiments.}
First, we compare GRCS and GRCD, against AS$^*$ and \adstar coding, on a range of synthetic REC problems.
We systematically vary distribution parameters to adjust the difficulty of the REC problems.
\Cref{fig:exact_rec} shows the results of our synthetic experiments.

\begin{figure}[t!]
    \vspace{-16mm}
    \centering
    \begin{subfigure}[b]{0.49\textwidth}
        \includegraphics[width=\textwidth]{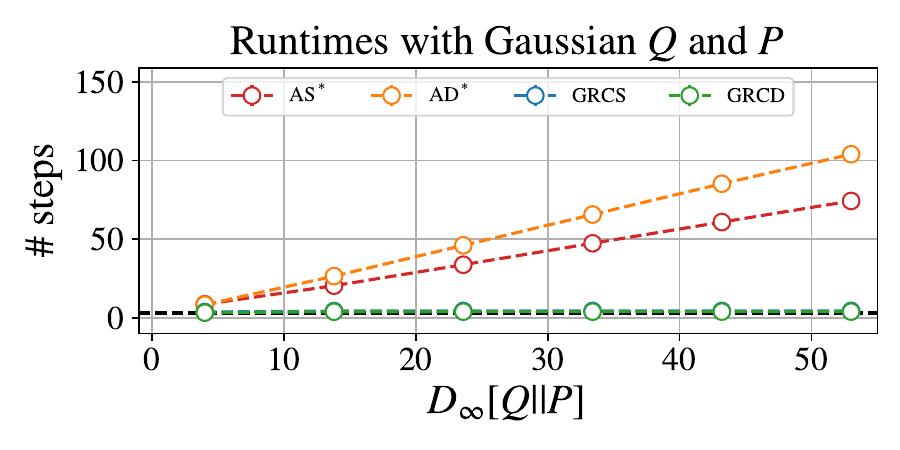}
    \end{subfigure}
    \begin{subfigure}[b]{0.49\textwidth}
        \includegraphics[width=\textwidth]{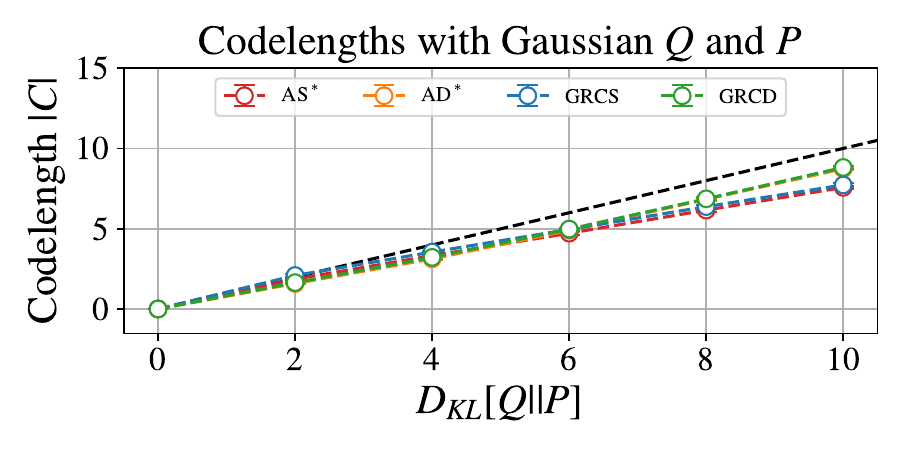}
    \end{subfigure}
    \vspace{-2mm}
    \caption{
    Comparison between GRC and \astar coding on synthetic REC problems with Gaussian $Q$ and $P$.
    \textit{Left:} we fix $\KLD{Q}{P} = 3$ and vary $\infD{Q}{P}$, measuring the number of steps taken by each algorithm.
    \textit{Right:} we fix $\infD{Q}{P} = \KLD{Q}{P} + 2$ and vary $\KLD{Q}{P}$, plotting the codelengths produced by each algorithm.
    Reported codelengths do not include additional logarithmic overhead terms.
    Results are averaged over $4 \times 10^3$ different random seeds for each datapoint.
    We have included error-bars in both plots but these are too small to see compared to the plot scales.
    }
    \label{fig:exact_rec}
    \vspace{-4mm}
\end{figure}

\vspace{-2mm}
\pg{Partitioning processes improve the runtime of GRC.}
First, we observe that, assuming that $dQ/dP$ is unimodal, introducing an appropriate partitioning process such as the sample-splitting or the dyadic process, dramatically speeds up GRC.
In particular, \cref{fig:exact_rec} shows that increasing the infinity divergence $\infD{Q}{P}$ (for a fixed $\KLD{Q}{P}$) does not affect the runtimes of GRCS and GRCD, which remain constant and small.
This is a remarkable speed-up over the exponential expected runtime of GRCG.

\vspace{-2mm}
\pg{GRC is faster than \astar coding.}
Further, we observe that GRC significantly improves upon the runtime of A* coding, which is the fastest previously known algorithm with similar assumptions.
In particular, \Cref{fig:exact_rec} shows that increasing the infinity divergence $\infD{Q}{P}$, while keeping the KL divergence $\KLD{Q}{P}$ fixed, increases the runtime of both \asstar and \adstar coding, while the runtimes of GRCS and GRCD remain constant.
More generally, for a fixed KL divergence, the infinity divergence can be arbitrarily large or even infinite.
In such cases, \astar coding would be impractically slow or even inapplicable, while GRCS and GRCD remain practically fast.

\vspace{-2mm}
\pg{GRCD improves on GRCS.}
In our experiments, we observe that the performance of GRCD (green in \cref{fig:exact_rec}) matches that of GRCS (blue in \cref{fig:exact_rec}) in terms of runtime and codelength.
While in our experiments, GRCD does not yield an improvement over GRCS, we note the following behaviour.
The sample-splitting process may occasionally rule out a only a small part of space, which can slow down convergence.
In particular, in \cref{app:grcs_proofs} we show that on average, the sample-splitting process rules out $\sfrac{1}{2}$ of the active sample space in the best case at each step, and $\sfrac{3}{4}$ in the worst case.
By contrast, the dyadic process always rules out $\sfrac{1}{2}$ of the sample space, potentially speeding up termination.
We conjecture that GRCD achieves an optimal expected runtime with $\beta = 1$.

\vspace{-1mm}
\subsection{Compression with Variational Autoencoders}
\label{sec:vae}

\vspace{-2mm}
\pg{Compressing images with VAEs and REC.}
One of the most promising applications of REC is in learnt compression.
Here, we implement a proof-of-concept lossless neural compression pipeline using a VAE with a factorized Gaussian posterior on MNIST and take the architecture used by \citet{townsend2018practical}.
To compress an image $Y$, we encode a latent sample $X$ from the VAE posterior $q(X \mid Y)$ by applying GRCD dimensionwise after which we encode the image $Y$ with entropy coding using the VAE's conditional likelihood $p(Y \mid X)$ as the coding distribution.
Unfortunately, in addition to the $\KLD{q(X_d \mid Y)}{p(X_d)}$ bits coding cost for latent dimension $d$, this incurs an overhead of ${\log(\KLD{q(X_d \mid Y)}{p(X_d)} + 1) + \Oh(1)}$ bits, analogously to how a symbol code, like Huffman coding, incurs a constant overhead per symbol \citep[][]{mackay2003information}.
However, since $\log(1 + x) \approx x$ when $x \approx 0$, the logarithmic overhead of GRC can become significant compared to the KL divergence.
Hence, we now investigate two approaches to mitigate this issue.

\setlength\tabcolsep{4.0pt}
\begin{table}[t!] \small
\vspace{-12mm}
\begin{center}
\begin{tabular}{ c c c c c c c } 
 \toprule
 \makecell{\scshape Training \\ \scshape objective} & {\scshape \# latent} & \makecell{\scshape Total BPP \\ \scshape with $\zeta$ coding} & \makecell{\scshape Total BPP \\ \scshape with $\delta$ coding} & \makecell{\scshape Neg. ELBO \\\scshape per pixel} & \makecell{\scshape Overhead BPP \\ \scshape with $\delta$ coding}\\ \midrule
 \multirow{ 3}{*}{{\scshape ELBO}} & 20 & { $1.472 \pm 0.004$} & { $1.482 \pm 0.004$} & $1.391 \pm 0.004$ & $0.091 \pm 0.000$ \\
  & 50 & { $1.511 \pm 0.003$} & { $1.530 \pm 0.003$} & $1.357 \pm 0.003$ & $0.172 \pm 0.000$ \\ 
  & 100 & { $1.523 \pm 0.003$} & { $1.600 \pm 0.003$} & $1.362 \pm 0.003$ & $0.238 \pm 0.000$ \\ \midrule
 \multirow{ 3}{*}{{\scshape Modified ELBO}} & 20 & { $1.470 \pm 0.004$} & { $1.478 \pm 0.004$} & { $1.393 \pm 0.004$} & $0.085 \pm 0.000$\\
  & 50 & { $1.484 \pm 0.003$} & { $1.514 \pm 0.003$} & { $1.373 \pm 0.003$} & $0.141 \pm 0.000$ \\
  & 100 & { $1.485 \pm 0.003$} & { $1.579 \pm 0.003$} & { $1.373 \pm 0.003$} & $0.205 \pm 0.000$ \\
 \bottomrule
\end{tabular}
\end{center}
\caption{Lossless compression performance comparison on the MNIST test set of a small VAE with different latent space sizes, optimized using either the ELBO or the modified ELBO in \cref{eq:modified_elbo}.
We report the bits per pixel (BPP) attained using different coding methods, averaged over the 10,000 test images, along with the standard error, using GRCD. 
See \cref{sec:vae} for further details.}
\label{tab:mnist1}
\vspace{-8mm}
\end{table}

\pg{Modified ELBO for REC.} 
A principled approach to optimizing our neural compression pipeline is to minimize its expected codelength.
For bits-back methods \citep{townsend2018practical, townsend2019hilloc},
the negative ELBO indeed expresses their expected codelength, but in REC's case, it does not take into account the additional dimensionwise logarithmic overhead we discussed above.
Thus, we propose to minimize a modified negative ELBO to account for this (assuming that we have $D$ latent dimensions):
\vspace{-4mm}
\begin{align}
    \underbrace{\Exp_{X \sim q(X | Y)}[-\log p(Y | X)] + \KLD{q(X | Y)}{p(X)}}_{\text{Regular ELBO}} + \sum_{d = 1}^D \underbrace{\log\left(\KLD{q(X_d | Y)}{p(X_d)} + 1\right)}_{\text{Logarithmic overhead per dimension}}.
    \label{eq:modified_elbo}
\end{align}
\vspace{-6mm}
\pg{Coding the latent indices.} 
As the final step during the encoding process, we need a prefix code to encode the heap indices $I_d$ returned by GRCD for each $d$.
Without any further information, the best we can do is use Elias $\delta$ coding \citep{elias1975universal}, which, assuming our conjecture on the expected runtime of GRCD holds, yields an expected codelength of $\mathbb{I}[Y; X] + 2 \log(\mathbb{I}[Y; X] + 1) + \Oh(1)$.
However, we can improve this if we can estimate $\Exp[\log I_d]$ for each $d$:
it can be shown, that the maximum entropy distribution of a positive integer-valued random variable with under a constraint on the expectation on its logarithm is $\zeta(n | \lambda) \propto n^{-\lambda}$, with $\lambda^{-1} = \Exp[\log I_d] + 1$.
In this case, entropy coding $I_d$ using this $\zeta$ distribution yields improves the expected codelength to $\mathbb{I}[Y; X] + \log(\mathbb{I}[Y; X] + 1) + \Oh(1)$.

\vspace{-2mm}
\pg{Experimental results.}
We trained our VAE with $L \in \{20, 50, 100\}$ latent dimensions optimized using the negative ELBO and its modified version in \Cref{eq:modified_elbo}, and experimented with encoding the heap indices of GRCD with both $\delta$ and $\zeta$ coding.
We report the results of our in \Cref{tab:mnist1} on the MNIST test set in bits per pixel.
In addition to the total coding cost, we report the negative ELBO per pixel, which is the fundamental lower bound on the compression efficiency of REC with each VAE.
Finally, we report the logarithmic overhead due to $\delta$ coding.
We find that both the modified ELBO and $\zeta$ coding prove beneficial, especially as the dimensionality of the latent space increases.
This is expected, since the overhead is most significant for latent dimensions with small KLs, which becomes more likely as the dimension of the latent space grows.
The improvements yielded by each of the two methods are significant, with $\zeta$ coding leading to a consistent $1 - 7\%$ gain compared to $\delta$ coding and the modified objective resulting in up to $2\%$ gain in coding performance.
\vspace{-2mm}
\section{Conclusion and Future Work}
\vspace{-2mm}
\pg{Summary.}
In this work, we introduced Greedy Rejection Coding (GRC), a REC algorithm which generalises the rejection algorithm of \citeauthor{harsha2007communication} to arbitrary probability spaces and partitioning processes.
We proved the correctness of our algorithm under mild assumptions, and introduced GRCS and GRCD, two variants of GRC.
We showed that the runtimes of GRCS and GRCD significantly improve upon the runtime of \astar coding, which can be arbitrarily larger.
We evaluated our algorithms empirically, verifying our theory and conducted a proof-of-concept learnt compression experiment on MNIST using VAEs.
We demonstrated that a principled modification to the ELBO and entropy coding GRCD's indices using a $\zeta$ distribution can further improve compression efficiency.

\vspace{-2mm}
\pg{Limitations and Further work.}
One limitation of GRC is that, unlike \astar coding, it requires us to be able to evaluate the CDF of $Q$.
While in some settings this CDF may be intractable, this assumption is satisfied by most latent variable generative models, and is not restrictive in practice.
However, one practical limitation of GRCS and GRCD, as well as \asstar and \adstar, is that they assume target-proposal pairs over $\mathbb{R}$.
For multivariate distributions, we can decompose them into univariate conditionals and apply GRC dimensionwise, however this incurs an additional coding overhead per dimension, resulting in a non-negligible cost.
Thus, an important direction is to investigate whether fast REC algorithms for multivariate distributions can be devised, to circumvent this challenge.

\newpage
\bibliography{references}
\bibliographystyle{neurips_2020_conference}

\newpage
\appendix


\section{Formal definition of Greedy Rejection Coding}
\label{app:theory}

\subsection{Formal definition}
\label{app:def:grc}

Here we give a formal definition of GRC in terms of measures.
We chose to omit this from the main text for the sake of exposition, and instead formally define GRC in \cref{def:grc} below.

\begin{definition}[Greedy Rejection Coding]\label{def:grc}
    Let $Z$ be a partitioning process on $\Sigma$, and $I_0 = 1$, $S_0 = Z_{I_0}$.
    Let $T_0(\cdot, S_0)$ be the zero-measure on $(\mathcal{X}, \Sigma)$.
    Then for $d = 0, 1, \dots$ define
    {\normalfont \begin{align}
        t_d(x, S_{0:d}) &\stackrel{\text{def}}{=} \frac{d T_d(\cdot, S_{0:d})}{d P(\cdot)}(x), \label{eq:def:t} \\
        \alpha_{d+1}(x, S_{0:d}) &\stackrel{\text{def}}{=} \min\left\{ \frac{dQ}{dP}(x) - t_d(x, S_{0:d}), \frac{1 - T_d(\mathcal{X}, S_{0:d})}{P(S_d)} \right\} \label{eq:def:alpha} \\
        A_{d+1}(S, S_{0:d}) &\stackrel{\text{def}}{=} \int_S dP(x)~\alpha_{d+1}(x, S_{0:d}), \label{eq:def:A} \\
        \beta_{d+1}(x, S_{0:d}) &\stackrel{\text{def}}{=} \alpha_{d+1}(x, S_{0:d})~\frac{P(S_d)}{1 - T_d(\mathcal{X}, S_{0:d})}, \label{eq:def:beta} \\
        X_{I_d} &\sim \frac{P|_{S_d}}{P(S_d)}, \label{eq:def:X} \\
        U_{I_d} &\sim \text{Uniform}(0, 1), \label{eq:def:U} \\
        b_d &\sim \text{Bernoulli}\left(\frac{Q(Z_{2I_d+1}) - T_d(Z_{2I_d+1}, S_{0:d}) - A_{d+1}(Z_{2I_d+1}, S_{0:d})}{Q(S_d) - T_d(S_d, S_{0:d}) - A_{d+1}(S_d, S_{0:d})}\right), \label{eq:def:b} \\
        I_{d+1} &\stackrel{\text{def}}{=} 2I_d + b_d, \label{eq:def:I} \\
        S_{d+1} &\stackrel{\text{def}}{=} Z_{I_{d+1}}, \label{eq:def:S} \\
        T_{d+1}(S, S_{0:d+1}) &\stackrel{\text{def}}{=} T_d(S \cap S_{d+1}, S_{0:d}) + A_{d+1}(S \cap S_{d+1}, S_{0:d}) + Q(S \cap S_{d+1}'), \label{eq:def:T} 
    \end{align}}where $S \in \Sigma$ and $P|_{Z_d}$ denotes the restriction of the measure $P$ to the set $Z_d$.
    Generalised Greedy Rejection Coding (GRC) amounts to running this recursion, computing
    \begin{align}\label{eq:def:D}
        D^* = \min \{d \in \mathbb{N} : U_{I_d} \leq \beta_{d+1}(X_{I_d}, S_{0:d})\},
    \end{align}
    and returning $X = X_{I_{D^*}}$ and $C = I_{D^*}$.
\end{definition}

The functions \texttt{AcceptProb} and \texttt{RuledOutMass} in \cref{alg:grc} correspond to calculating the quantities in \cref{eq:def:beta} and \cref{eq:def:T}.
The function \texttt{PartitionProb} corresponds to computing the success probability of the Bernoulli coin toss in \cref{eq:def:b}.

\subsection{\texorpdfstring{\citeauthor{harsha2007communication}}{Harsha et al.}'s algorithm is a special case of GRC}
\label{app:harsha:special}

Here we show that the algorithm of \citeauthor{harsha2007communication} is a special case of GRC which assumes discrete $P$ and $Q$ distributions and uses the global partitioning process, which we refer to as GRCG.
Note that the original algorithm described by \citeauthor{harsha2007communication} assumes discrete $P$ and $Q$ distributions, whereas GRCG does not make this assumption.

\begin{proposition}[\cite{harsha2007communication} is a special case of GRC]
\label{app:proposition:grcg}
    Let $Z$ be the global partitioning process over $\Sigma$, defined as
    \begin{equation}
        \label{app:eq:global:Z}
        Z_1 = \mathcal{X}, ~~~ Z_{2n} = Z_n, ~~~Z_{2n+1} = \emptyset, ~~\text{ for all }~ n = 1, 2, \dots.
    \end{equation}
    \cite{harsha2007communication} is equivalent to GRC using this $Z$ and setting $C = D^*$ instead of $C = I_{D^*}$.
    We refer to this variant of GRC as Global GRC, or GRCG for short.
\end{proposition}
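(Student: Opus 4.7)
The plan is to substitute the global partitioning process from \cref{app:eq:global:Z} into the general GRC recursion (\cref{def:grc}) and check that every quantity collapses to the corresponding one in the Harsha et al.\ recursion \eqref{eq:grcg:1}--\eqref{eq:grcg:3}. The argument is essentially bookkeeping, organised around two key reductions: the partition index $I_d$ is deterministic, and the active subset $S_d$ equals $\mathcal{X}$ at every step.

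First I would establish, by induction on the heap index, that under \cref{app:eq:global:Z} we have $Z_{2^k} = \mathcal{X}$ for every $k \geq 0$ and $Z_n = \emptyset$ otherwise. The base case $Z_1 = \mathcal{X}$ is given, and the recursion $Z_{2n} = Z_n$, $Z_{2n+1} = \emptyset$ propagates this structure up the tree. Starting from $I_0 = 1$ and assuming inductively $I_d = 2^d$ (so $S_d = Z_{I_d} = \mathcal{X}$), the right child $Z_{2I_d+1}$ is empty, so in the Bernoulli parameter \eqref{eq:def:b} the numerator
\[
Q(Z_{2I_d+1}) - T_d(Z_{2I_d+1}, S_{0:d}) - A_{d+1}(Z_{2I_d+1}, S_{0:d})
\]
vanishes identically (each term is a measure of the empty set). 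Thus $b_d = 0$ almost surely, giving $I_{d+1} = 2I_d = 2^{d+1}$ and $S_{d+1} = Z_{2^{d+1}} = \mathcal{X}$, completing the induction. Note that the denominator in \eqref{eq:def:b} is the unaccounted-for mass in $S_d$, which is strictly positive whenever the algorithm has not yet terminated, so the Bernoulli is well-defined on the event of interest.

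Next, with $S_d = \mathcal{X}$ fixed for all $d$, I would substitute into the remaining definitions. The sampling distribution $P|_{S_d}/P(S_d)$ in \eqref{eq:def:X} becomes simply $P$, matching line 4 of \cref{alg:harsha}. In \eqref{eq:def:alpha}, the second term inside the $\min$ is $(1 - T_d(\mathcal{X}))/P(\mathcal{X}) = 1 - T_d(\mathcal{X})$, recovering the $\alpha_{d+1}$ of \eqref{eq:grcg:2}. In \eqref{eq:def:beta}, the factor $P(S_d)/(1 - T_d(\mathcal{X})) = 1/(1 - T_d(\mathcal{X}))$, so $\beta_{d+1}(x) = \alpha_{d+1}(x)/(1 - T_d(\mathcal{X}))$, matching \eqref{eq:grcg:3}. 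The acceptance measure \eqref{eq:def:A} reduces to the definition of $A_{d+1}$ in \eqref{eq:grcg:1}. Finally, in the update \eqref{eq:def:T}, since $S_{d+1} = \mathcal{X}$ the complement $S_{d+1}' = \emptyset$, so $Q(S \cap S_{d+1}') = 0$ and the recursion collapses to $T_{d+1}(S) = T_d(S) + A_{d+1}(S)$, exactly as in \eqref{eq:grcg:1}.

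To close the proof I would observe that, since $I_d = 2^d$, we have $D^* = \min\{d : U_{I_d} \leq \beta_{d+1}(X_{I_d})\}$ coinciding with the first-acceptance index of the Harsha recursion, and the returned sample $X_{I_{D^*}}$ agrees in distribution with the one produced by \cref{alg:harsha}. The code $C = D^*$ is exactly the rejection count returned by Harsha's encoder; and since $I_{D^*} = 2^{D^*}$ is a deterministic function of $D^*$, using $C = D^*$ in place of $C = I_{D^*}$ loses no information and merely reflects the choice of a more compact encoding. The only subtle point, and the closest thing to an obstacle, is verifying that the degenerate Bernoulli in step \eqref{eq:def:b} is correctly interpreted (its denominator is guaranteed to be positive precisely on the event $\{d < D^*\}$), but this follows from the fact that acceptance at step $d$ would have terminated the algorithm before the partitioning step is reached.
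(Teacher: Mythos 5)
Your proof is correct and follows essentially the same route as the paper's: observe that the global partition forces $b_d = 0$ almost surely, hence $S_d = \mathcal{X}$ and $P(S_d) = 1$ for all $d$, then substitute into the definitions to recover \eqref{eq:grcg:1}--\eqref{eq:grcg:3}. You supply a bit more detail than the paper (the explicit induction that $I_d = 2^d$, and the remark on the degenerate Bernoulli being well-defined on the non-termination event), but the argument is the same.
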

\begin{proof}
With $Z$ defined as in \cref{app:eq:global:Z}, we have $b_d \sim \text{Bernoulli}(0)$ by \cref{eq:def:b}, so $b_d = 0$ almost surely.
Therefore $S_d = \mathcal{X}$ for all $d \in \mathbb{N}^+$.
From this, we have $T_{d+1}(S, S_{0:d}) = T_d(S, S_{0:d}) + A_d(S, S_{0:d})$ and also $P(S_d) = P(\mathcal{X}) = 1$ for all $d \in \mathbb{N}^+$.
Substituting these in the equations of \cref{def:grc}, we recover \cref{eq:grcg:1,eq:grcg:2,eq:grcg:3}.
Setting $C = D^*$ instead of $C = I_{D^*}$ makes the two algorithms identical.
\end{proof}

\newpage
\section{Proof of correctness of GRC\texorpdfstring{: \Cref{thm:correctness}}{}}
\label{app:correctness}

In this section we give a proof for the correctness of GRC.
Before going into the proof, we outline our approach and the organisation of the proof.

\pg{Proof outline.}
To prove \cref{thm:correctness}, we consider running GRC for a finite number of $d$ steps.
We consider the measure $\tau_d: \Sigma \to [0, 1]$, defined such that for any $S \in \Sigma$, the quantity $\tau_d(S)$ is equal to the probability that GRC terminates within $d$ steps and returns a sample $X \in S \subseteq \Sigma$.
We then show that $\tau_d \to Q$ in total variation as $d \to \infty$, which proves \cref{thm:correctness}.

\pg{Organisation of the proof.}
First, in section \ref{app:corr:def} we introduce some preliminary definitions, assumptions and notation on partitioning processes, which we will use in later sections.
Then, in \ref{app:tau} we derive the $\tau_d$ measure, and prove some intermediate results about it.
Specifically, \cref{prop:grs:TA} shows that the measures $A_d$ and $T_d$ from the definition of GRC (\cref{def:grc}) correspond to probabilities describing the termination of the algorithm, and \cref{lemma:tau} uses these facts to derive the form of $\tau_d$ in terms of $A_d$.
Then, \cref{lemma:q_minus_tau} shows that the measure $\tau_d$ is no larger than the measure $Q$ and \cref{lem:tau_limit} shows that the limit of $\tau_d$ as $d \to \infty$ is also a measure.
Lastly \cref{lem:T_and_tau} shows that $T_d$ and $\tau_d$ are equal on the active sets of the partition process followed within a run of GRC, and then \cref{lem:Qtau} uses that result to derive the subsets of the sample space on which $\tau_d$ is equal to $Q$ and $\tau$ is equal to $Q$.

Then, in \cref{app:four_cases} we break down the proof of \cref{thm:correctness} in four cases.
First, we consider the probability $p_d$ that GRC terminates at step $d$, given that it has not terminated up to and including step $d - 1$.
\Cref{lemma:pdzero} shows that if $p_d \not \to 0$, then $\tau_d \to Q$ in total variation.
Then we consider the case $p_d \to 0$ and show that in this case, if any of assumptions \ref{assumption:finite_ratio_mode}, \ref{assumption:single_branch} or \ref{assumption:nice_shrinking} hold, then again $\tau_d \to Q$ in total variation.
Putting these results together proves \cref{thm:correctness}.

\subsection{Preliminary definitions, assumptions and notation}
\label{app:corr:def}

For the sake of completeness, we restate relevant definitions and assumptions.
\Cref{def:qp} restates our notation on the target $Q$ and proposal $P$ measures and \cref{as:qp} emphasises our assumption that $Q \ll P$.
\Cref{def:Z} restates the definition of partitioning processes.


{
\begin{definition}[Target $Q$ and proposal $P$ distributions] \label{def:qp}
    Let $Q$ and $P$ be probability measures on a measurable space $(\mathcal{X}, \Sigma)$.
    We refer to $Q$ and $P$ as the target and proposal measures respectively.
\end{definition}
}


{
\begin{assumption}[$Q \ll P$] \label{as:qp}
    We assume $Q$ is absolutely continuous w.r.t. $P$, that is $Q \ll P$.
    Under this assumption, the Radon-Nikodym derivative of $Q$ w.r.t. $P$ exists and is denoted as $dQ/dP : \mathcal{X} \to \mathbb{R}^+$.
\end{assumption}
}


{
\begin{definition}[Partitioning process]
\label{def:Z}
    A random process {\normalfont $Z: \mathbb{N}^+ \to \Sigma$} which satisfies
    {\normalfont \begin{equation}
        Z_1 = \mathcal{X}, ~~ Z_{2n} \cap Z_{2n + 1} = \emptyset, ~~ Z_{2n} \cup Z_{2n + 1} = Z_n.
    \end{equation}}is called a partitioning process.
\end{definition}

That is, a partitioning process $Z$ is a random process indexed by the heap indices of an infinite binary tree, where the root node is $\mathcal{X}$ and any two children nodes $Z_{2n}$ and $Z_{2n + 1}$ partition their parent node $Z_n$.
Note that by definition, a partitioning process takes values which are measurable sets in $(\mathcal{X}, \Sigma)$.
}


{
Because GRC operates on an binary tree, we find it useful to define some appropriate notation.
\Cref{def:ancestors} specifies the ancestors of a node in a binary tree.
Notation \ref{not:indexing} gives some useful indexing notation for denoting different elements of the partitioning process $Z$, as well as for denoting the branch of ancestors of an element in a partitioning process.

\begin{definition}[Ancestors]
\label{def:ancestors}
We define the one-step ancestor function $A_1: 2^{\mathbb{N}^+} \to 2^{\mathbb{N}^+}$ as
{\normalfont
\begin{align}
    A_1(N) &= N \cup \{n \in \mathbb{N}^+: n' = 2n \text{ or } n' = 2n + 1, \text{ for some } n' \in N\},
\end{align}}
and the ancestor function $A: 2^{\mathbb{N}^+} \to 2^{\mathbb{N}^+}$ as
{\normalfont
\begin{equation}
    A(N) = \left\{n \in \mathbb{N}^+: n \in A_1^k(\{n'\}) \text{ for some } n' \in N, k \in \mathbb{N}^+ \right\}.
\end{equation}}where $A_1^k$ denotes the composition of $A_1$ with itself $k$ times.
\end{definition}
Viewing $\mathbb{N}^+$ as the set of heap indices of an infinite binary tree, $A$ maps a set $N \subseteq \mathbb{N}$ of natural numbers (nodes) to the set of all elements of $N$ and their ancestors.
}


{
\begin{notation}[Double indexing for $Z$, ancestor branch]
\label{not:indexing}
Given a partitioning process $Z$, we use the notation $Z_{d, k}$, where $d = 1, 2, \dots$ and $k = 1, \dots, 2^{d-1}$ to denote the $k^{th}$ node at depth $d$, that is
\begin{equation}
    Z_{d, k} := Z_{2^{d-1} - 1 + k}.
\end{equation}
We use the hat notation $\hat{Z}_{d, k}$ to denote the sequence of nodes consisting of $Z_{d, k}$ and all its ancestors
\begin{equation}
    \label{eq:ancestor_branch}
    \hat{Z}_{d, k} := (Z_n : n \in A(\{2^{d-1} - 1 + k\})),
\end{equation}
and call $\hat{Z}_{d, k}$ the ancestor branch of $Z_{d, k}$.
\end{notation}

}


{
\begin{notation}[$\mathbb{P}$ measure]
\label{notation:P}
In \cref{def:grc}, we defined $\mathbb{P}$ to be the measure associated with an infinite sequence of independent fair coin tosses over a measurable space $(\Omega, \mathcal{S})$.
To avoid heavy notation, for the rest of the proof we will overload this symbol as follows: if $F$ is a random variable from $\Omega$ to some measurable space, we will abbreviate $\mathbb{P} \circ F^{-1}$ by simply $\mathbb{P}(F)$.
\end{notation}

\subsection{Deriving the measure of samples returned by GRC}
\label{app:tau}
}


{
For the remainder of the proof, we condition on a fixed partitioning process sample $Z$.
For brevity, we omit this conditioning which, from here on is understood to be implied.
\Cref{prop:grs:TA} shows that the measures $A_d$ and $T_d$ correspond to the probabilities that GRC picks a particular branch of the binary tree and terminates at step $d$, or does not terminate up to and including step $d$, respectively.

\begin{proposition}[Acceptance and rejection probabilities] \label{prop:grs:TA}
Let $V_d$ be the event that GRC does not terminate up to and including step $d$ and $W_d$ be the event that it terminates at step $d$.
Let $S_{0:d} = B_{0:d}$ denote the event that the sequence of the first $d$ bounds produced is $B_{0:d}$.
Then
\begin{align}
    \mathbb{P}(V_d, S_{0:d} = B_{0:d}) &= 1 - T_d(\mathcal{X}, B_{0:d}), & \text{ for } d = 0, 1, \dots, \label{eq:V} \\
    \mathbb{P}(W_{d+1}, S_{0:d} = B_{0:d}) &= A_{d+1}(\mathcal{X}, B_{0:d}), & \text{ for } d = 0, 1, \dots. \label{eq:W} 
\end{align}
\end{proposition}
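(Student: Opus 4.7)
The plan is to prove both identities simultaneously by induction on $d$, strengthened by an auxiliary invariant: $T_d(S, B_{0:d}) = Q(S)$ for every $S \in \Sigma$ with $S \subseteq S_d^c$. Intuitively, this captures the fact that at each step the update rule \cref{eq:def:T} absorbs the full $Q$-mass of every sibling branch that has been ruled out by the Bernoulli choices $b_0, \dots, b_{d-1}$. This invariant is vacuous at the base $d = 0$ because $S_0 = Z_1 = \mathcal{X}$, and at that base we also have $T_0 \equiv 0$, so the event $V_0$ trivially has probability one and the asserted equality for $W_1$ follows by integrating $\beta_1(x, \mathcal{X})$ against $P$ using \cref{eq:def:X,eq:def:U,eq:def:beta}.

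For the inductive step I would first establish the formula for $W_{d+1}$. Factor $\mathbb{P}(W_{d+1}, S_{0:d} = B_{0:d})$ as $\mathbb{P}(V_d, S_{0:d} = B_{0:d})$ times the conditional probability of accepting at iteration $d$. The inductive hypothesis gives the former as $1 - T_d(\mathcal{X}, B_{0:d})$, while \cref{eq:def:X,eq:def:U,eq:def:beta} imply the latter equals $A_{d+1}(S_d, B_{0:d}) / (1 - T_d(\mathcal{X}, B_{0:d}))$, so the product telescopes to $A_{d+1}(S_d, B_{0:d})$. The auxiliary invariant forces $t_d(x, B_{0:d}) = dQ/dP(x)$ on $S_d^c$, hence $\alpha_{d+1}(\cdot, B_{0:d})$ vanishes off $S_d$ and $A_{d+1}(S_d, B_{0:d}) = A_{d+1}(\mathcal{X}, B_{0:d})$, matching the claim.

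Next I would establish the formula for $V_{d+1}$. Decompose the event as (i) not terminating through iteration $d$ with trajectory $B_{0:d}$, (ii) rejecting at iteration $d$, and (iii) the Bernoulli $b_d$ selecting child $B_{d+1}$. Steps (i) and (ii) combine, by the IH and the $W_{d+1}$ computation, to $1 - T_d(\mathcal{X}, B_{0:d}) - A_{d+1}(\mathcal{X}, B_{0:d})$. Step (iii), by \cref{eq:def:b} (and its complement for the sibling), contributes a ratio with numerator $Q(B_{d+1}) - T_d(B_{d+1}, B_{0:d}) - A_{d+1}(B_{d+1}, B_{0:d})$ and denominator $Q(S_d) - T_d(S_d, B_{0:d}) - A_{d+1}(S_d, B_{0:d})$; the auxiliary invariant collapses $Q(S_d) - T_d(S_d, B_{0:d})$ to $1 - T_d(\mathcal{X}, B_{0:d})$, so the denominator exactly matches the (i)–(ii) factor and they cancel. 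The residual is $Q(B_{d+1}) - T_d(B_{d+1}, B_{0:d}) - A_{d+1}(B_{d+1}, B_{0:d})$, which equals $1 - T_{d+1}(\mathcal{X}, B_{0:d+1})$ by direct substitution into \cref{eq:def:T} (reading $S_{d+1}'$ as the whole-space complement $S_{d+1}^c$). The same substitution shows $T_{d+1}(S, B_{0:d+1}) = Q(S)$ whenever $S \subseteq S_{d+1}^c$, propagating the auxiliary invariant and closing the induction.

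The main obstacle will be careful bookkeeping around the two children of $Z_{I_d}$: \cref{eq:def:b} is written only for the right child $Z_{2I_d+1}$, so for $B_{d+1} = Z_{2I_d}$ one must verify the symmetric Bernoulli-complement expression using the partition identities $Q(S_d) = Q(Z_{2I_d}) + Q(Z_{2I_d+1})$ and the analogous additivity of $T_d$ and $A_{d+1}$. The other subtle point is the consistent reading of $S_{d+1}'$ in \cref{eq:def:T}: only the whole-space complement $S_{d+1}^c$ makes the auxiliary invariant self-propagating, and it is precisely this invariant that is needed to turn the ratio in step (iii) into a cancellation rather than an unwieldy algebraic expression. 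Once these two points are settled, the remainder of the argument is routine algebraic manipulation.
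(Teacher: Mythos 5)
Your proposal is correct and follows essentially the same route as the paper's proof: a joint induction on $d$ that threads the two claims together, using the fact that $T_d(\cdot, B_{0:d})$ agrees with $Q$ on $S_d^c$ (and hence $\alpha_{d+1}$, $A_{d+1}$ vanish off $S_d$) to make the Bernoulli ratio cancel with the survival factor. The paper carries this fact implicitly as the in-line identity
\[
Q(\mathcal{X}) - T_k(\mathcal{X}, B_{0:k}) - A_{k+1}(\mathcal{X}, B_{0:k}) = Q(B_k) - T_k(B_k, B_{0:k}) - A_{k+1}(B_k, B_{0:k}),
\]
with the underbraced observations $T_k(B_k', B_{0:k}) = Q(B_k')$ and $A_{k+1}(B_k', B_{0:k}) = 0$, whereas you promote it to an explicit induction invariant and close the loop by showing it is preserved by the recursion in \cref{eq:def:T}; this is arguably a cleaner presentation since the paper never spells out why those two underbraced equalities hold. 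Your remark about the reading of $S_{d+1}'$ is apt: it must be the whole-space complement $\mathcal{X}\setminus S_{d+1}$ both for the invariant to propagate and for $T_{d+1}(\mathcal{X}, B_{0:d+1})$ to equal $1 - Q(B_{d+1}) + T_d(B_{d+1}, B_{0:d}) + A_{d+1}(B_{d+1}, B_{0:d})$ as needed in the final step, and that is indeed the reading the paper's algebra relies on.
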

\begin{proof}
First we consider the probability that GRC terminates at step $k+1$ given that it has not terminated up to and including step $d$, that is the quantity $\mathbb{P}(W_{k+1} ~|~ V_k, S_{0:k} = B_{0:k})$.
By \cref{def:grc}, this probability is given by integrating the acceptance probability $\beta_{k+1}(x, B_{0:k})$ over $x \in \mathcal{X}$, with respect to the measure $P|_{B_k} / P(B_k)$, that is
\begin{align}
    \mathbb{P}(W_{k+1} ~|~ V_k, S_{0:k} = B_{0:k}) &= \int_{x \in
    B_{k}} dP(x) \frac{\beta_{k+1}(x, B_{0:k})}{P(B_k)} \\
    &= \int_{x \in
    \mathcal{X}} dP(x) \frac{\beta_{k+1}(x, B_{0:k})}{P(B_k)} \\
    &= \int_{x \in \mathcal{X}} dP(x) \frac{\alpha_{k+1}(x, B_{0:k})}{1 - T_k(\mathcal{X}, B_{0:k})} \\
    &= \frac{A_{k+1}(\mathcal{X}, B_{0:k})}{1 - T_k(\mathcal{X}, B_{0:k})}, \label{eq:grs:accept}
\end{align}
Now, we show the result by induction on $d$, starting from the base case of $d = 0$.

\vspace{2mm}
\noindent \textbf{Base case:}
For $d = 0$, by the definition of GRC (\cref{def:grc}) $S_0 = Z_{I_0} = \mathcal{X}$, so
\begin{align}
    \mathbb{P}\left(V_0, S_0 = B_0 \right) = 1 ~\text{ and }~ T_0(\mathcal{X}, B_0) = 0,
\end{align}
which show the base case for \cref{eq:V}.
Now, plugging in $k = 0$ in \cref{eq:grs:accept} we obtain
\begin{equation}
    \mathbb{P}(W_1, S_0 = B_0) = \mathbb{P}(W_1 ~|~ V_0, S_0 = B_0) = \frac{A_1(\mathcal{X}, B_0)}{1 - T_0(\mathcal{X}, B_0)} = A_1(\mathcal{X}, B_0)
\end{equation}
where we have used the fact that $T_0(\mathcal{X}, B_0) = 0$, showing the base case for \cref{eq:W}. \\

\textbf{Inductive step:}
Suppose that for all $k = 0, 1, 2, \dots, d$ it holds that
\begin{equation} \label{eq:grs:inductive}
    \mathbb{P}\left(V_d, S_{0:k} = B_{0:k} \right) = 1 - T_d(\mathcal{X}, B_{0:k}) ~~ \text{ and } ~~ \mathbb{P}\left(W_{k+1}, S_{0:k} = B_{0:k} \right) = A_{k+1}(\mathcal{X}, B_{0:k}).
\end{equation}
Setting $k = d$ in \cref{eq:grs:accept}, we obtain
\begin{align}
    \mathbb{P}(W'_{d+1} ~|~ V_d, S_{0:d} = B_{0:d}) &= \frac{1 - T_d(\mathcal{X}, B_{0:d}) - A_{d+1}(\mathcal{X}, B_{0:d})}{1 - T_d(\mathcal{X}. B_{0:d})},
\end{align}
and using the inductive hypothesis from \cref{eq:grs:inductive}, we have
\begin{align} \label{eq:one_minus_t_minus_a}
    \mathbb{P}(V_{d+1}, S_{0:d} = B_{0:d}) &= \mathbb{P}(W_{d+1}', V_d, S_{0:d} = B_{0:d}) = 1 - T_d(\mathcal{X}, B_{0:d}) - A_{d+1}(\mathcal{X}, B_{0:d}).
\end{align}
Now, $B_d = Z_n$ for some $n \in \mathbb{N}^+$.
Denote $B_d^L := Z_{2n}$ and $B_d^R := Z_{2n+1}$.
Then, by the product rule
\begin{align}
    \mathbb{P}(V_{d+1}, S_{0:d} &= B_{0:d}, S_{d+1} = B^R_d) = \label{eq:ind:0} \\
    \nonumber~\\
    &= \mathbb{P}(S_{d+1} = B^R_d ~|~ V_{d+1}, S_{0:d} = B_{0:d}) \mathbb{P}(V_{d+1}, S_{0:d} = B_{0:d}) \label{eq:ind:1} \\
    \nonumber~\\
    &= \frac{Q(B_d^R) - T_d(B_d^R, B_{0:d}) - A_{d+1}(B_d^R, B_{0:d})}{Q(B_d) - T_d(B_d, B_{0:d}) - A_{d+1}(B_d, B_{0:d})} \mathbb{P}(V_{d+1}, S_{0:d} = B_{0:d}) \label{eq:ind:2} \\
    \nonumber~\\
    &= \frac{Q(B_d^R) - T_d(B_d^R, B_{0:d}) - A_{d+1}(B_d^R, B_{0:d})}{\underbrace{Q(\mathcal{X})}_{=~1} - T_d(\mathcal{X}, B_{0:d}) - A_{d+1}(\mathcal{X}, B_{0:d})} \mathbb{P}(V_{d+1}, B_{0:d} = B_{0:d}) \label{eq:ind:3} \\
    \nonumber~\\
    &= Q(B_d^R) - T_d(B_d^R, B_{0:d}) - A_{d+1}(B_d^R, B_{0:d}) \label{eq:ind:4} \\
    \nonumber~\\
    &= 1 - T_{d+1}(\mathcal{X}, B_{0:d+1}) \label{eq:ind:5}
\end{align}
where we have written $B_{0:d+1} = (B_0, \dots, B_d, B_d^R)$.
Above, to go from \ref{eq:ind:0} to \ref{eq:ind:1} we used the definition of conditional probability, to go from \ref{eq:ind:1} to \ref{eq:ind:2} we used the definition in \ref{eq:def:b}, to go from \ref{eq:ind:2} to \ref{eq:ind:3} we used the fact that for $k = 0, 1, 2, \dots,$ it holds that
\begin{align}
    Q(\mathcal{X}) - T_k(\mathcal{X}, B_{0:k}) - A_{k+1}(\mathcal{X}, B_{0:k}) &= Q(B_k) - T_k(B_k, B_{0:k}) - A_{k+1}(B_k, B_{0:k}) + \nonumber \\
    &~~~~~~+ Q(B_k') - \underbrace{T_k(B_k', B_{0:k})}_{=~Q(B_k')} - \underbrace{A_{k+1}(B_k', B_{0:k})}_{=~0} \\
    &= Q(B_k) - T_d(B_k, B_{0:k}) - A_{k+1}(B_k, B_{0:k}), \label{eq:q_minus_t_minus_a}
\end{align}
from \ref{eq:ind:3} to \ref{eq:ind:4} we have used \cref{eq:one_minus_t_minus_a}, and lastly from \ref{eq:ind:4} to \ref{eq:ind:5} we have again used \cref{eq:q_minus_t_minus_a}.
\Cref{eq:ind:5} similarly holds if $B_{d+1} = B^R_d$ by $B_{d+1} = B^L_d$, so we arrive at
\begin{align} \label{eq:indt}
    \mathbb{P}(V_{d+1}, B_{0:d+1} &= B_{0:d+1}) = 1 - T_{d+1}(\mathcal{X}, B_{0:d+1}),
\end{align}
which shows the inductive step for \cref{eq:V}.
Further, we have
\begin{align}
     \mathbb{P}(W_{d+2}, B_{0:d+1} = B_{0:d+1}) = \mathbb{P}(W_{d+2} ~|~ V_{d+1}, B_{0:d+1} = B_{0:d+1}) \mathbb{P}(V_{d+1}, B_{0:d+1} = B_{0:d+1})
\end{align}
and also by setting $k = d + 1$ in \cref{eq:grs:accept} we have
\begin{equation} \label{eq:indw}
    \mathbb{P}(W_{d+2} ~|~ V_{d+1}, B_{0:d+1} = B_{0:d+1}) = \frac{A_{d+2}(\mathcal{X}, B_{0:d+1})}{1 - T_{d+1}(\mathcal{X}, B_{0:d+1})}.
\end{equation}
Combining \cref{eq:indt} and \cref{eq:indw} we arrive at
\begin{align} \label{eq:inda}
     \mathbb{P}(W_{d+2}, B_{0:d+1} = B_{0:d+1}) = A_{d+2}(\mathcal{X}, B_{0:d+1}),
\end{align}
which is the inductive step for \cref{eq:W}.
Putting \cref{eq:indt,eq:inda} together shows the result.
\end{proof}
}


{
We now turn to defining and deriving the form of the measure $\tau_D$.
We will define $\tau_D$ to be the measure such that for any $S \in \Sigma$, the probability that GRC terminates up to and including step $D$ and returns a sample within $S$ is given by $\tau_D(S)$.
We will also show that $\tau_D$ is non-increasing in $D$.

\begin{lemma}[Density of samples generated by GRC] \label{lemma:tau}
The probability that GRC terminates by step $D \geq 1$ and produces a sample in $S$ is given by the measure
\begin{equation} \label{eq:lemma:tau}
    \tau_D(S) = \sum_{d=1}^D \sum_{k = 1}^{2^{d-1}} A_d(S, \hat{Z}_{d, k}),
\end{equation}
where $\hat{Z}_{D, k}$ is the ancestor branch of $Z_{D, k}$ as defined in \cref{eq:ancestor_branch}.
Further, $\tau_D$ is non-decreasing in $D$, that is if $n \leq m$, then $\tau_n(S) \leq \tau_m(S)$ for all $S\in \Sigma$.
\end{lemma}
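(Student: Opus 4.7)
The plan is to build directly on \Cref{prop:grs:TA}, which already identifies $A_d$ and $T_d$ with the probabilities of termination and survival events. The key observation is that \cref{prop:grs:TA} can be refined from the event $W_{d+1}$ (``terminate at step $d+1$'') to the finer event $\{W_{d+1}\} \cap \{X \in S\}$ by replacing the integral over $\mathcal{X}$ in the derivation of $\mathbb{P}(W_{k+1} \mid V_k, S_{0:k} = B_{0:k})$ with an integral restricted to $S$. Concretely, re-running the chain of equalities from \cref{eq:grs:accept} while integrating $\beta_{k+1}(x, B_{0:k})/P(B_k)$ only over $x \in B_k \cap S$ yields
\begin{equation*}
    \mathbb{P}(W_{d+1},\, X \in S,\, S_{0:d} = B_{0:d}) = A_{d+1}(S, B_{0:d}),
\end{equation*}
since the returned sample $X = X_{I_d}$ is automatically in $B_d = S_d$ and $A_{d+1}$ is the integral of $\alpha_{d+1}$ against $P$ over its first argument.

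The next step is to enumerate the possible histories. Conditional on $Z$, the event that GRC terminates at step exactly $d$ partitions into the $2^{d-1}$ disjoint events indexed by the possible sequences $S_{0:d-1}$ of active subsets, and these are precisely the ancestor branches $\hat{Z}_{d, k}$ for $k = 1, \dots, 2^{d-1}$ (by \cref{not:indexing}). Summing the refined identity over $k$ gives the probability that GRC terminates at step $d$ with $X \in S$, and summing further over $d = 1, \dots, D$ gives the claimed expression for $\tau_D(S)$. That $\tau_D$ is a measure follows because each $A_d(\,\cdot\,, \hat{Z}_{d, k})$ is a measure (countable additivity in its first argument is inherited from the integral defining $A_d$ in \cref{eq:def:A}), and a countable sum of measures is a measure.

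Monotonicity in $D$ is then immediate: for $n \le m$,
\begin{equation*}
    \tau_m(S) - \tau_n(S) = \sum_{d=n+1}^{m} \sum_{k=1}^{2^{d-1}} A_d(S, \hat{Z}_{d, k}) \ge 0,
\end{equation*}
because each summand is a nonnegative measure applied to $S$. The only step that requires any care is the refinement argument at the start, where one must verify that conditioning on $\{X \in S\}$ is compatible with the disintegration used in \cref{prop:grs:TA}; this is essentially bookkeeping, since $X$ is drawn from $P|_{S_d}/P(S_d)$ and the acceptance indicator depends on $X$ only through $\beta_{d+1}(X, S_{0:d})$, so the integrand factorises cleanly. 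No new measure-theoretic obstacle appears beyond that already handled in the proof of \cref{prop:grs:TA}.
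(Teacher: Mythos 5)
Your proof is correct and takes essentially the same approach as the paper: both decompose the termination event by ancestor branch $\hat{Z}_{d,k}$, use the conditional probability identities from \cref{prop:grs:TA} (the restriction of the integral in \cref{eq:grs:accept} to $S$ is exactly the refinement the paper performs when it writes $\mathbb{P}(W_d(S) \mid V_{d-1}, S_{0:d-1}=\hat{Z}_{d,k}) = \int_S dP\,\beta_d/P(Z_{d,k})$), and cancel the $1-T_{d-1}$ factor to obtain $A_d(S,\hat{Z}_{d,k})$ before summing over $k$ and $d$. The only cosmetic differences are that you state the joint-probability identity $\mathbb{P}(W_{d+1}, X\in S, S_{0:d}=B_{0:d}) = A_{d+1}(S,B_{0:d})$ directly rather than exhibiting the two-factor product, and you explicitly note that $\tau_D$ is a measure as a sum of measures, a point the paper leaves implicit.
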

\begin{proof}
Let $V_d$ be the event that GRC does not terminate up to and including step $d$ and let $W_d(S)$ be the event that GRC terminates at step $d$ and returns a sample in $S$.
Then
\begin{align}
    \tau_D(S) &= \sum_{d=1}^D \mathbb{P}(W_d(S)) \\
              &= \sum_{d=1}^D \mathbb{P}(W_d(S), V_{d-1}) \\
              &= \sum_{d=1}^D \sum_{k=1}^{2^{d-1}} \mathbb{P}(W_d(S), V_{d-1}, S_{0:d-1} = \hat{Z}_{d, k}) \\
              &= \sum_{d=1}^D \sum_{k=1}^{2^{d-1}} \mathbb{P}(W_d(S) ~|~ V_{d-1}, S_{0:d-1} = \hat{Z}_{d, k})~\mathbb{P}(V_{d-1}, S_{0:d-1} = \hat{Z}_{d, k}).
              \label{eq:sumwv}
\end{align}
Further, the terms in the summand can be expressed as
\begin{align}
\mathbb{P}(V_{d-1}, S_{0:d-1} = \hat{Z}_{d, k}) &= 1 - T_{d-1}(\mathcal{X}, \hat{Z}_{d, k}), \label{eq:v_equal_1_minus_T} \\
\mathbb{P}(W_d(S) ~|~ V_{d-1}, S_{0:d-1} = \hat{Z}_{d, k}) &= \int_{x \in S} dP(x) \frac{\beta_{d}(x, \hat{Z}_{d, k})}{P(Z_{d, k})} \\
&= \int_{x \in S} dP(x) \frac{\alpha_d(x, \hat{Z}_{d, k})}{1 - T_{d-1}(\mathcal{X}, \hat{Z}_{d, k})} \\
&= \frac{A_d(S, \hat{Z}_{d, k})}{1 - T_{d-1}(\mathcal{X}, \hat{Z}_{d, k})},
\label{eq:ws}
\end{align}
and substituting \cref{eq:v_equal_1_minus_T,eq:ws} into the sum in \cref{eq:sumwv}, we obtain \cref{eq:lemma:tau}.
Further, since the inner summand is always non-negative, increasing $D$ adds more non-negative terms to the sum, so $\tau_D$ is also non-decreasing in $D$.
\end{proof}
}


{
Now we turn to proving a few results about the measure $\tau_D$.
\Cref{lemma:q_minus_tau} shows that $\tau_D \leq Q$ for all $D$.
This result implies that $||Q - \tau_D||_{TV} = Q(\mathcal{X}) - \tau_D(\mathcal{X})$, which we will use later.

\begin{lemma}[$Q - \tau_D$ is non-negative] \label{lemma:q_minus_tau}
Let $D \in \mathbb{N}^+$.
Then $Q - \tau_D$ is a positive measure, that is
\begin{equation}
    Q(S) - \tau_D(S) \geq 0 \text{ for any } S \in \Sigma.
\end{equation}
\end{lemma}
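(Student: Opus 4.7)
The plan is to prove $\tau_D \leq Q$ by strong induction on $D$, using an intermediate identity that rewrites $\tau_D$ in terms of the ruled-out-mass measure $T_D$ along each branch of the partition tree, and then bounding $T_D$ pointwise by $Q$ on the corresponding active set.

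First, I would establish the key identity
\begin{equation}
  \tau_D(S) \;=\; \sum_{k=1}^{2^D} T_D\!\left(S \cap Z_{D+1,k},\, \hat{Z}_{D+1,k}\right).
  \label{eq:tau_eq_sum_T}
\end{equation}
Two ingredients make this go through: (i) the acceptance measure $A_d(\cdot,\hat{Z}_{d,k'})$ is supported on $Z_{d,k'}$, since $\alpha_d$ vanishes outside the active set $S_{d-1} = Z_{d,k'}$ (there $t_{d-1}$ already equals $dQ/dP$); and (ii) the recursion for $T$ in \cref{eq:def:T} simplifies, when restricted to sets lying inside the active set $S_{d+1}$, to $T_{d+1}(S',\cdot) = T_d(S',\cdot) + A_{d+1}(S',\cdot)$, which unrolls to $T_D(S', \hat{Z}_{D+1,k}) = \sum_{d=1}^D A_d(S', \hat{Z}_{d, k_d})$ for $S' \subseteq Z_{D+1,k}$, with $k_d$ the index at depth $d$ of the ancestor of $Z_{D+1,k}$. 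Decomposing $\tau_D(S)$ along the depth-$(D+1)$ partition and using the support property then collapses each inner sum precisely into this telescoping expression for $T_D$.

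Next, I would prove by induction on $d$ the invariant
\begin{equation}
  T_d(S',\hat{Z}_{d+1,k}) \;\leq\; Q(S')
  \quad\text{for every } S' \subseteq Z_{d+1,k}.
  \label{eq:T_leq_Q}
\end{equation}
The base case $d=0$ is immediate since $T_0 \equiv 0$. For the inductive step, take $S'' \subseteq Z_{d+2,k'} \subseteq Z_{d+1,k_{d+1}}$ and use the simplified recurrence together with the definition of $\alpha_{d+1}$, which implies $\alpha_{d+1}(x,\cdot) \leq dQ/dP(x) - t_d(x,\cdot)$. Integrating against $P$ over $S''$ gives $A_{d+1}(S'',\cdot) \leq Q(S'') - T_d(S'',\cdot)$, so that $T_{d+1}(S'',\cdot) = T_d(S'',\cdot) + A_{d+1}(S'',\cdot) \leq Q(S'')$.

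Combining \eqref{eq:tau_eq_sum_T} and \eqref{eq:T_leq_Q} with the partition identity $Q(S) = \sum_{k=1}^{2^D} Q(S \cap Z_{D+1,k})$ yields
\begin{equation*}
  Q(S) - \tau_D(S) \;=\; \sum_{k=1}^{2^D} \bigl[\, Q(S \cap Z_{D+1,k}) - T_D(S \cap Z_{D+1,k},\hat{Z}_{D+1,k}) \,\bigr] \;\geq\; 0,
\end{equation*}
as required. The main obstacle is the first step: the bookkeeping needed to verify that $A_d(\cdot,\hat{Z}_{d,k'})$ is supported on $Z_{d,k'}$ and to propagate the recursive definition of $T_D$ cleanly through all depths of the binary tree. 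Once this identity is in place, the inductive bound on $T_d$ is a one-line consequence of the minimum appearing in $\alpha_{d+1}$.
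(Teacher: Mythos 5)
Your proposal is correct and follows essentially the same route as the paper: decompose along a level of the partition tree, use the support property of $A_d$ to collapse the double sum defining $\tau_D$ into a telescoping expansion of $T$ along the relevant branch, and invoke $\alpha_{d+1}(x,\cdot) \leq dQ/dP(x) - t_d(x,\cdot)$ to obtain the bound. The only cosmetic differences are that you decompose at depth $D+1$ (so $\tau_D$ equals $T_D$ on each cell, as in the paper's later Lemma~\ref{lem:T_and_tau}) rather than at depth $D$ (where the paper writes $\tau_D = T_{D-1} + A_D$), and that your induction on $T_d \leq Q$ never actually uses the inductive hypothesis --- the step $T_{d+1} = T_d + A_{d+1} \leq T_d + (Q - T_d) = Q$ is a direct one-line estimate, which is exactly how the paper applies it.
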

\begin{proof}
Let $S \in \Sigma$ and write
\begin{align}
    Q(S) - \tau_D(S) &= \sum_{k=1}^{2^{D-1}} Q(S \cap Z_{D, k}) - \tau_D(S \cap Z_{D, k}) \\
    &= \sum_{k=1}^{2^{D-1}} \left[ Q(S \cap Z_{D, k}) - \sum_{d=1}^D \sum_{k' = 1}^{2^{D-1}} A_d(S \cap Z_{D, k}, \hat{Z}_{D, k'}) \right] \\
    &= \sum_{k=1}^{2^{D-1}} \left[ Q(S \cap Z_{D, k}) - \sum_{d=1}^D A_d(S \cap Z_{D, k}, \hat{Z}_{D, k}) \right] \\
    &= \sum_{k=1}^{2^{D-1}} \left[ Q(S \cap Z_{D, k}) - T_{D-1}(S \cap Z_{D, k}, \hat{Z}_{D, k}) - A_D(S \cap Z_{D, k}, \hat{Z}_{D, k}) \right] \label{eq:q_minus_t_positive}
\end{align}
We will show that the summand in \cref{eq:q_minus_t_positive} is non-negative.
From the definition in \cref{eq:def:alpha} we have
\begin{align}
    \alpha_D(x, \hat{Z}_{D, k}) &= \min\left\{ \frac{dQ}{dP}(x) - t_{D-1}(x, \hat{Z}_{D, k}), \frac{1 - T_{D-1}(\mathcal{X}, \hat{Z}_{D, k})}{P({Z_{D, k}})} \right\} \\
    &\leq \frac{dQ}{dP}(x) - t_{D-1}(x, \hat{Z}_{D, k}) \label{eq:dqdp_minus_t}
\end{align}
and integrating both sides of \cref{eq:dqdp_minus_t} over $S \cap Z_{D, k}$, we obtain
\begin{align}
& A_D(S \cap Z_{D, k}, \hat{Z}_{D, k}) \leq Q(S \cap Z_{D, k}) - T_{D-1}(S \cap Z_{D, k}, \hat{Z}_{D, k})
\end{align}
Putting this together with \cref{eq:q_minus_t_positive} we arrive at
\begin{align}
    Q(S) - \tau_D(S) \geq 0,
\end{align}
which is the required result.
\end{proof}
}


{
Thus far we have derived the form of $\tau_D$, shown that it is non-decreasing in $D$ and that it is no greater than $Q$.
As we are interested in the limiting behaviour of $\tau_D$, we next show that its limit, $\tau = \lim_{D \to \infty} \tau_D$, is also a measure.
Further, it also holds that $\tau \leq Q$.

\begin{lemma}[Measures $\tau_D$ converge to a measure $\tau \leq Q$]
    \label{lem:tau_limit}
    For each $S \in \Sigma$, $\tau_D(S)$ converges to a limit.
    Further, the function $\tau : \Sigma \to [0, 1]$ defined as
    \begin{equation}
        \tau(S) = \lim_{D\to\infty} \tau_D(S)
    \end{equation}
    is a measure on $(\mathcal{X}, \Sigma)$ and $\tau(S) \leq Q(S)$ for all $S \in \Sigma$.
\end{lemma}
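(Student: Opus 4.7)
The plan is to leverage two already-established facts: $\tau_D(S)$ is non-decreasing in $D$ (from \Cref{lemma:tau}) and bounded above by $Q(S) \leq 1$ (from \Cref{lemma:q_minus_tau}). Existence of the pointwise limit $\tau(S) := \lim_{D\to\infty} \tau_D(S)$ is then immediate from the monotone convergence theorem for bounded monotone real sequences, and passing to the limit in $\tau_D(S) \leq Q(S)$ yields $\tau(S) \leq Q(S)$ for all $S \in \Sigma$.

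The remaining work is to verify that $\tau$ is a measure. First I would observe that each $\tau_D$ is itself a measure, since the expression in \Cref{lemma:tau} exhibits $\tau_D$ as a finite sum of the set functions $A_d(\cdot, \hat{Z}_{d, k})$, each of which is a measure by construction (an integral of a non-negative integrand against $P$, as in the definition of $A_{d+1}$). In particular $\tau_D(\emptyset) = 0$ for every $D$, so $\tau(\emptyset) = 0$.

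For countable additivity, fix a disjoint sequence $\{S_i\}_{i \in \mathbb{N}} \subseteq \Sigma$. Since each $\tau_D$ is a measure,
\begin{equation}
    \tau_D\Big(\bigcup_{i \in \mathbb{N}} S_i\Big) = \sum_{i \in \mathbb{N}} \tau_D(S_i).
\end{equation}
Letting $D \to \infty$, the left-hand side converges to $\tau(\bigcup_{i} S_i)$ by the definition of $\tau$. For the right-hand side, since $D \mapsto \tau_D(S_i)$ is non-decreasing and non-negative for each $i$, the monotone convergence theorem applied to counting measure on $\mathbb{N}$ allows interchanging the limit with the infinite sum, giving $\sum_i \tau(S_i)$. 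Equating the two limits delivers countable additivity.

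The main (and really only) subtle step is this interchange of limit and countable sum; however, monotonicity and non-negativity make it a direct application of MCT (equivalently, Tonelli's theorem with counting measure). Beyond this, the result is essentially a bookkeeping consequence of the two preceding lemmas and requires no additional structure on $Q$, $P$, or the partitioning process $Z$.
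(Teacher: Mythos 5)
Your proof is correct, and it takes a genuinely different route to the countable-additivity step than the paper does. The paper, having established that $\tau$ is a non-negative finitely-additive set function with $\tau(\emptyset)=0$, immediately appeals to the Vitali--Hahn--Saks theorem to upgrade the setwise limit of the measures $\tau_D$ to a measure. You instead exploit the \emph{monotonicity} in $D$ directly: starting from $\tau_D\bigl(\bigcup_i S_i\bigr)=\sum_i\tau_D(S_i)$, you pass $D\to\infty$ on both sides and justify the interchange of limit and infinite sum on the right by the monotone convergence theorem applied to counting measure on $\mathbb{N}$ (equivalently Tonelli), using that $D\mapsto\tau_D(S_i)$ is non-decreasing and non-negative for each $i$. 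This is more elementary and self-contained --- Vitali--Hahn--Saks is a rather deep result applying to arbitrary setwise-convergent sequences of measures, whereas your MCT argument works precisely because \Cref{lemma:tau} gives monotone increase, which is the structure actually available here. The only small bookkeeping obligation in your version (which the paper's route sidesteps by appealing directly to VHS) is to observe that each $\tau_D$ is itself a countably additive measure; this follows from \Cref{lemma:tau} exhibiting $\tau_D$ as a finite sum of the measures $A_d(\cdot,\hat{Z}_{d,k})$, each of which is an integral $S\mapsto\int_S\alpha_d\,dP$ of a non-negative integrand against $P$. Both proofs are sound; yours uses weaker machinery at the cost of one extra (easy) verification, which many readers would view as a net simplification.
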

\begin{proof}
    First, by \cref{lemma:tau}, $\tau_D(S)$ is non-decreasing in $D$, and bounded above by $Q(S)$ for all $S \in \Sigma$.
    Therefore, for each $S \in \Sigma$, $\tau_D(S)$ converges to some limit as $D \to \infty$.
    Define $\tau : \Sigma \to [0, 1]$ as
    \begin{equation}
        \tau(S) = \lim_{D\to\infty} \tau_D(S),
    \end{equation}
    and note that $\tau$ is a non-negative set function for which $\tau(\emptyset) = 0$.
    By the Vitali-Hahn-Saks theorem \cite[see Corollary 4, p. 160;][]{dunford1988linear}, $\tau$ is also countably additive, so it is a measure.
    Also, by \cref{lemma:q_minus_tau}, $\tau_D(S) \leq Q(S)$ for all $D \in \mathbb{N}^+$ and all $S \in \Sigma$, so $\tau(S) \leq Q(S)$ for all $S \in \Sigma$.
\end{proof}
}


\begin{definition}[$H_{d, k}$, $H_d$ and $H$]
For $d = 1, 2, \dots$ and $k = 1, \dots, 2^{d-1}$, we define the sets $H_{d, k}$ as
\begin{equation}
    \label{eq:def_hdk}
    H_{d, k} = \left\{x \in Z_{d, k} ~\Big|~ \frac{dQ}{dP}(x) - t_{d-1}(x, \hat{Z}_{d, k}) \geq \frac{1 - T_{d-1}(\mathcal{X}, \hat{Z}_{d, k})}{P(Z_{d, k})} \right\}.
\end{equation}
Also, define the sets $H_d$ and $H$ as
\begin{align}
    H_d = \bigcup_{k = 1}^{2^{d-1}} H_{d, k} ~ \text{ and } ~ H = \bigcap_{d = 1}^{\infty} H_d.
\end{align}
\end{definition}


\begin{lemma}[$T_D(\cdot, \hat{Z}_{D+1, k})$ and $\tau_D$ agree in $Z_{D+1, k}$]
\label{lem:T_and_tau}
Let $R \in \Sigma$.
If $R \subseteq Z_{D+1, k}$, then
\begin{equation}
     \tau_D(R) = T_D(R, \hat{Z}_{D+1, k}).
\end{equation}
\end{lemma}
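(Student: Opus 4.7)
The plan is to identify the ancestor branch $\hat{Z}_{D+1, k}$ with the sequence of active sets $(S_0, S_1, \ldots, S_D)$ of a hypothetical run of GRC that descends to $Z_{D+1, k}$, then unroll both $\tau_D(R)$ and $T_D(R, \hat{Z}_{D+1, k})$ into the same telescoping sum $\sum_{d=1}^D A_d(R, S_{0:d-1})$.

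\textbf{Step 1: Set-up and nesting.} Write $\hat{Z}_{D+1, k} = (S_0, S_1, \ldots, S_D)$ where $S_0 = \mathcal{X}$ and each $S_{j}$ is the ancestor of $Z_{D+1,k}$ at depth $j+1$ (using the depth convention from \cref{not:indexing}), so that $S_D = Z_{D+1,k}$. Since ancestors in a partitioning process are nested, $R \subseteq S_D \subseteq S_{D-1} \subseteq \cdots \subseteq S_0$.

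\textbf{Step 2: $\alpha$ vanishes off the active set.} I would prove by induction on $d$ that for every $x \notin S_d$, $t_d(x, S_{0:d}) = (dQ/dP)(x)$. The base case $d=0$ is immediate since $T_0 \equiv 0$ and $S_0 = \mathcal{X}$. For the inductive step, apply (\ref{eq:def:T}) to a small neighborhood of $x \notin S_{d+1}$: the two terms involving $S \cap S_{d+1}$ vanish, while $Q(S \cap S_{d+1}')$ contributes the $Q$-mass, giving $t_{d+1} = dQ/dP$ on $S_{d+1}'$. Combined with $t_d \le dQ/dP$ in general, this forces $\alpha_{d+1}(x, S_{0:d}) = 0$ for all $x \notin S_d$ via (\ref{eq:def:alpha}). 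Consequently, $A_{d+1}(S, S_{0:d}) = A_{d+1}(S \cap S_d, S_{0:d})$ for every $S \in \Sigma$.

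\textbf{Step 3: Reducing the sum defining $\tau_D$.} Apply \cref{lemma:tau} to $R$: $\tau_D(R) = \sum_{d=1}^D \sum_{k'} A_d(R, \hat{Z}_{d, k'})$. At each depth $d$, the cells $\{Z_{d, k'}\}_{k'}$ are pairwise disjoint, so $Z_{d, k'}$ either contains $Z_{D+1,k}$ (iff it is the unique ancestor at depth $d$) or is disjoint from it. In the disjoint case, $R \cap Z_{d, k'} = \emptyset$, so by Step 2 the term $A_d(R, \hat{Z}_{d, k'})$ vanishes. Only the ancestor term survives, giving $\tau_D(R) = \sum_{d=1}^D A_d(R, S_{0:d-1})$.

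\textbf{Step 4: Unrolling $T_D$.} Since $R \subseteq S_j$ for every $j \in \{0, \ldots, D\}$, each application of the recursion (\ref{eq:def:T}) with $S = R$ kills the $Q(R \cap S_{j+1}')$ term and yields $T_{j+1}(R, S_{0:j+1}) = T_j(R, S_{0:j}) + A_{j+1}(R, S_{0:j})$. Telescoping from $T_0 \equiv 0$ gives $T_D(R, S_{0:D}) = \sum_{d=1}^D A_d(R, S_{0:d-1})$, which matches Step 3 and closes the proof.

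\textbf{Expected difficulty.} The only non-routine piece is Step 2, where one must carefully check that $\alpha_{d+1}$ is zero outside the current active set; everything else is bookkeeping around the tree indexing and a telescoping sum.
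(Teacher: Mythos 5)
Your proof is correct and follows essentially the same route as the paper's: reduce the double sum in \cref{lemma:tau} to the single sum over the ancestor branch, then telescope $T_D(R, \hat{Z}_{D+1,k})$ via the recursion (\ref{eq:def:T}) using $R \subseteq Z_{D+1,k}$. The one place you go beyond the paper is Steps 2--3, where you explicitly justify the reduction (showing $\alpha_d(\cdot, \hat{Z}_{d,k'})$ vanishes outside $Z_{d,k'}$, so the non-ancestor terms drop out because $R$ is disjoint from those cells); the paper asserts the equality $\sum_{d,k'} A_d(R, \hat{Z}_{d,k'}) = \sum_d A_d\bigl(R, (\hat{Z}_{D+1,k})_{1:d}\bigr)$ without elaboration, and your argument fills that gap. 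Two small remarks on Step 2: the auxiliary observation that $t_d \le dQ/dP$ in general is not actually needed --- once you know $t_d(x) = (dQ/dP)(x)$ for $x \notin S_d$, you immediately get $\alpha_{d+1}(x) = \min\{0,\, (1-T_d(\mathcal{X}))/P(S_d)\} = 0$ since the second argument is non-negative; and your ``induction'' never invokes the inductive hypothesis (each $d$ is handled by a direct application of (\ref{eq:def:T}) on a neighborhood in $S_d'$), so it could be stated more simply as a direct argument.
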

\begin{proof}
    Suppose $R \subseteq Z_{D+1, k}$.
    First, we have
    \begin{equation}
        \tau_D(R) = \sum_{d=1}^D \sum_{k'=1}^{2^{d-1}} A_d(R, \hat{Z}_{d, k'}) = \sum_{d=1}^D A_d(R, (\hat{Z}_{D+1, k})_{1:d}).
    \end{equation}
    From the definition of $T_D$ in \cref{eq:def:T}, we have
    \begin{align}
        T_D(R, \hat{Z}_{D+1, k}) &= T_{D-1}(R \cap Z_{D+1, k}, (\hat{Z}_{D+1, k})_{1:D}) + A_D(R \cap Z_{D+1, k}, (\hat{Z}_{D+1, k})_{1:D}) + \label{eq:recursive1} \\
        &\quad \quad + \underbrace{Q(R \cap Z_{D+1, k}')}_{=~0} \nonumber \\
        &= T_{D-1}(R \cap Z_{D+1, k}, (\hat{Z}_{D+1, k})_{1:D}) + A_D(R \cap Z_{D+1, k}, (\hat{Z}_{D+1, k})_{1:D}) \label{eq:recursive2} \\
        &= T_{D-1}(R, (\hat{Z}_{D+1, k})_{1:D}) + A_D(R, (\hat{Z}_{D+1, k})_{1:D}) \label{eq:recursive3}
    \end{align}
    where we have used the assumption that $R \subseteq Z_{D+1, k}$.
    In a similar manner, applying \cref{eq:recursive3} recursively $D - 1$ more times, we obtain
    \begin{equation}
        T_D(R, \hat{Z}_{D+1, k}) =  \sum_{d=1}^D A_d(R, (\hat{Z}_{D+1, k})_{1:d}) = \tau_D(R).
    \end{equation}
    which is the required result.
\end{proof}


\begin{lemma}[Equalities with $Q, \tau_D$ and $\tau$]
\label{lem:Qtau}
The following two equalities hold
\begin{equation}
    Q(\mathcal{X} \setminus H_D) = \tau_D(\mathcal{X} \setminus H_D) ~\text{ and }~ Q(\mathcal{X} \setminus H) = \tau(\mathcal{X} \setminus H).
\end{equation}
\end{lemma}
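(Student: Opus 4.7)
The proof has two parts: the finite-$D$ equality, and then passing to the limit.

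For the first equality, the plan is to show the stronger statement that for each $k$ and each measurable $R \subseteq Z_{D,k} \setminus H_{D,k}$, one has $\tau_D(R) = Q(R)$. Fix such $R$. By the defining inequality of $H_{D,k}$ in \cref{eq:def_hdk}, on $R$ the minimum in \cref{eq:def:alpha} is attained by the first argument, so $\alpha_D(x,\hat{Z}_{D,k}) = \tfrac{dQ}{dP}(x) - t_{D-1}(x,\hat{Z}_{D,k})$. Integrating against $P$ gives $A_D(R,\hat{Z}_{D,k}) = Q(R) - T_{D-1}(R,\hat{Z}_{D,k})$. Then I would unfold the recursion \cref{eq:def:T}: since $R \subseteq Z_{D,k}$ has empty intersection with the complement of every ancestor of $Z_{D,k}$ along the branch $\hat{Z}_{D,k}$, each $Q(\,\cdot \cap S_{d+1}')$ term in the recursion vanishes, yielding by induction $T_{D-1}(R,\hat{Z}_{D,k}) = \sum_{d'=1}^{D-1} A_{d'}(R,(\hat{Z}_{D,k})_{1:d'})$; this is exactly the computation used in the proof of \cref{lem:T_and_tau}. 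Combining the two identities and noting that the expression $\sum_{d'=1}^D A_{d'}(R,(\hat{Z}_{D,k})_{1:d'})$ equals $\tau_D(R)$ (because for $R \subseteq Z_{D,k}$ the only ancestor branch at depth $d'$ that intersects $R$ is $(\hat{Z}_{D,k})_{1:d'}$), gives $\tau_D(R) = Q(R)$. Summing this identity over the disjoint partition $\mathcal{X} \setminus H_D = \bigsqcup_k (Z_{D,k}\setminus H_{D,k})$ yields the first claim.

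For the second equality, the plan is to first establish that the sequence $H_d$ is almost-surely decreasing, and then invoke continuity of measure. To see the monotonicity, fix $x \notin H_d$ and let $k$ be the unique index with $x \in Z_{d,k}$. From $\alpha_d(x,\hat{Z}_{d,k}) = \tfrac{dQ}{dP}(x) - t_{d-1}(x,\hat{Z}_{d,k})$ and the recursive identity for $T_d$, taking Radon-Nikodym derivatives with respect to $P$ at $x$ (and using that the $Q(\,\cdot \cap S_{d+1}')$ summand contributes zero density on $Z_{d+1,k'}$ for the child $k'$ containing $x$) gives the crucial relation $t_d(x,\hat{Z}_{d+1,k'}) = t_{d-1}(x,\hat{Z}_{d,k}) + \alpha_d(x,\hat{Z}_{d,k}) = \tfrac{dQ}{dP}(x)$. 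Plugging into \cref{eq:def:alpha} for step $d+1$, the first argument of the min becomes $0$, which is strictly less than the second argument (whenever the algorithm has not yet terminated with probability one on this branch, an edge case that can be absorbed into a null set). Hence $x \notin H_{d+1}$, and by induction $x \notin H_{d'}$ for all $d' \geq d$.

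With monotonicity in hand, $\mathcal{X} \setminus H = \bigcup_d (\mathcal{X}\setminus H_d)$ is an increasing union. Since $\tau_d \leq \tau \leq Q$ by \cref{lem:tau_limit}, sandwiching
\[
Q(\mathcal{X}\setminus H_d) = \tau_d(\mathcal{X}\setminus H_d) \leq \tau(\mathcal{X}\setminus H_d) \leq Q(\mathcal{X}\setminus H_d)
\]
gives $\tau(\mathcal{X}\setminus H_d) = Q(\mathcal{X}\setminus H_d)$ for every $d$. Continuity of measure from below applied to both $\tau$ and $Q$ then yields $\tau(\mathcal{X}\setminus H) = \lim_d Q(\mathcal{X}\setminus H_d) = Q(\mathcal{X}\setminus H)$.

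\textbf{Main obstacle.} The conceptually delicate step is the cascading identity $t_d(x,\hat{Z}_{d+1,k'}) = \tfrac{dQ}{dP}(x)$ for $x \notin H_d$, which requires a careful Radon-Nikodym derivative calculation from the recursive definition \cref{eq:def:T}, and handling the null-set edge case where $1 - T_d(\mathcal{X},\hat{Z}_{d+1,k'}) = 0$ so that the second argument of the min in \cref{eq:def:alpha} also vanishes. The rest is bookkeeping along the ancestor branch.
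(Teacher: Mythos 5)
Your first equality is proved in essentially the same way as in the paper: on each $Z_{D,k}\setminus H_{D,k}$ the $\min$ in \cref{eq:def:alpha} is attained by the first argument, so $A_D = Q - T_{D-1}$ there, and unwinding the recursion along the ancestor branch (i.e. \cref{lem:T_and_tau}) gives $\tau_D(R)=Q(R)$; summing the disjoint pieces gives $\tau_D(\mathcal{X}\setminus H_D)=Q(\mathcal{X}\setminus H_D)$. Good.

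For the second equality you go beyond the paper: the paper simply writes $\lim_D \tau(\mathcal{X}\setminus H_D)=\tau(\mathcal{X}\setminus H)$ without comment, which tacitly appeals to continuity from below and therefore to $\mathcal{X}\setminus H_D$ being increasing, i.e.\ to $H_{D+1}\subseteq H_D$. You prove this monotonicity via the Radon--Nikodym recursion: for $x\notin H_{d,k}$ one has $\alpha_d(x)=\tfrac{dQ}{dP}(x)-t_{d-1}(x)$, so $t_d(x)=t_{d-1}(x)+\alpha_d(x)=\tfrac{dQ}{dP}(x)$ on the child containing $x$, making the first argument of the $\min$ vanish at step $d+1$, hence $x\notin H_{d+1}$ provided $1-T_d(\mathcal{X},\hat{Z}_{d+1,k'})>0$. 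This fills a real gap in the paper's exposition. Two remarks, though. First, your claim that the edge case $1-T_d=0$ "can be absorbed into a null set" needs to be made precise: on such a branch one has $0\geq 0$, so the $\geq$ in \cref{eq:def_hdk} means the whole child is absorbed into $H_{d+1}$, and monotonicity of $H_d$ as sets genuinely fails there. What saves you is that $1-T_d(\mathcal{X},\hat{Z}_{d+1,k'})=0$ together with \cref{eq:q_minus_t_minus_a} forces $Q(Z_{d+1,k'})=T_d(Z_{d+1,k'},\hat{Z}_{d+1,k'})=\tau_d(Z_{d+1,k'})$, so that child is $(Q-\tau)$-null; you should say this explicitly, since it is a null set for $Q-\tau$, not for $P$. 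Second, and perhaps worth noting: since you already have $Q(\mathcal{X}\setminus H_D)=\tau(\mathcal{X}\setminus H_D)$ for every $D$, i.e.\ each $\mathcal{X}\setminus H_D$ is $(Q-\tau)$-null, the second equality follows immediately from $\mathcal{X}\setminus H=\bigcup_D(\mathcal{X}\setminus H_D)$ being a countable union of $(Q-\tau)$-null sets. This bypasses monotonicity entirely and disposes of the edge case for free, and is arguably the cleanest way to justify the paper's unstated limit step.
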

\begin{proof}
    Let $R = Z_{D+1, k} \setminus H_{D, k}$. 
    Then, by similar reasoning used to prove \cref{eq:recursive1}, we have
    \begin{align}
        \label{eq:TTQ}
        T_D(R, \hat{Z}_{D+1,k}) = T_{D-1}(R, (\hat{Z}_{D+1, k})_{1:D}) + A_D(R, (\hat{Z}_{D+1, k})_{1:D})
    \end{align}
    Further, we also have
    \begin{align}
        A_D(R, \hat{Z}_{D, k}) &= \int_R dP(x)~\alpha_D(x, \hat{Z}_{D, k}) \label{eq:AQT1} \\
        &= \int_R dP(x)~ \min\left\{ \frac{dQ}{dP}(x) - t_{D-1}(x, \hat{Z}_{D, k}), \frac{1 - T_{D-1}(\mathcal{X}, \hat{Z}_{D, k})}{P(Z_{D, k})} \right\} \label{eq:AQT2} \\
        &= \int_R dP(x)~ \left(\frac{dQ}{dP}(x) - t_{D-1}(x, \hat{Z}_{D, k}) \right) \label{eq:AQT3} \\
        &= Q(R) - T_{D-1}(R, \hat{Z}_{D, k})
        \label{eq:AQT4}
    \end{align}
    where from \cref{eq:AQT2} to \cref{eq:AQT3} we have used the definition of $H_{D, k}$.
    Then, combining \cref{eq:TTQ,eq:AQT4} and using \cref{lem:T_and_tau}, we arrive at
    \begin{equation}
        \label{eq:R}
         Q(Z_{D+1, k} \setminus H_{D, k}) = T_D(Z_{D+1, k} \setminus H_{D, k}, \hat{Z}_{D+1,k}) = \tau_D(Z_{D+1, k} \setminus H_{D, k}).
    \end{equation}
    Now, using the equation above, we have that
    \begin{equation}
        \tau_D(\mathcal{X} \setminus H_D) = \sum_{k=1}^{2^D} \tau_D(Z_{D+1,k} \setminus H_D) = \sum_{k=1}^{2^D} Q(Z_{D+1,k} \setminus H_D) = Q(\mathcal{X} \setminus H_D).
    \end{equation}
    Now, using $\tau_D \leq \tau \leq Q$ and $\tau_D(\mathcal{X} \setminus H_D) = Q(\mathcal{X} \setminus H_D)$, we have that $\tau(\mathcal{X} \setminus H_D) = Q(\mathcal{X} \setminus H_D)$, which is the first part of the result we wanted to show.
    Taking limits, we obtain
    \begin{equation}
        Q(\mathcal{X} \setminus H) = \lim_{D \to \infty} Q(\mathcal{X} \setminus H_D) = \lim_{D \to \infty} \tau(\mathcal{X} \setminus H_D) = \tau(\mathcal{X} \setminus H),
    \end{equation}
    which is the second part of the required result.
\end{proof}

\newpage
\subsection{Breaking down the proof of \Cref{thm:correctness} in five cases}
\label{app:four_cases}

In \cref{def:w} we introduce the quantities $w_d = Q(\mathcal{X}) - \tau_d(\mathcal{X})$ and $p_d = \mathbb{P}(W_d~|~V_{d-1})$.
Then we break down the proof of \cref{thm:correctness} in five cases.
First, in \cref{lemma:pdzero} we show that if $p_d \not \to 0$, then $w_d \to 0$.
Second, in \cref{lemma:hzero} we show that if $P(H_d) \to 0$, then $w_d \to 0$.
In \cref{lemma:intermediate} we show an intermediate result, used in the other three cases, which we consider in lemmas \ref{lemma:finite_ratio}, \ref{lemma:single_branch} and \ref{lemma:nicely_shrkinking}.
Specifically, in these three cases we show that if $p_d \to 0$ and $P(H_d) \not \to 0$, and assumption \ref{assumption:finite_ratio_mode}, \ref{assumption:single_branch} or \ref{assumption:nice_shrinking} hold respectively, we have $w_d \to 0$.
Putting these results together shows \cref{thm:correctness}.


{
\begin{definition}[$p_d$, $w_{d, k}$ and $w_d$]
\label{def:w}
Define $p_d = \mathbb{P}(W_d~|~V_{d-1})$.
Also define $w_{d, k}$ and $w_d$ as
{
\normalfont
\begin{align}
    w_{d,k} &\stackrel{\text{def}}{=} Q(Z_{d, k}) - \tau_d(Z_{d, k}), \\
    w_d &\stackrel{\text{def}}{=} \sum_{k=1}^{2^{d-1}} w_{d,k}.
\end{align}
}
\end{definition}
}

\begin{lemma}[$w_d$ non-increasing in $d$]
    The sequence $w_d$ is non-negative and non-increasing in $d$.
\end{lemma}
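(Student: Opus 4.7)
The plan is to reduce both claims to facts already established earlier in the appendix. For non-negativity, I would simply invoke \Cref{lemma:q_minus_tau}, which states that $Q - \tau_d$ is a positive measure; applied to each $Z_{d,k} \in \Sigma$ this immediately yields $w_{d,k} \geq 0$, so $w_d \geq 0$ as a sum of non-negative terms.

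For the monotonicity, the key observation is that at each depth $d$, the collection $\{Z_{d,k}\}_{k=1}^{2^{d-1}}$ partitions $\mathcal{X}$ (this follows from \Cref{def:Z} by a trivial induction on depth, since each $Z_{d,k}$ splits into the disjoint union of $Z_{d+1,2k-1}$ and $Z_{d+1,2k}$, and $Z_{1,1} = \mathcal{X}$). Therefore, by countable additivity of both $Q$ and $\tau_d$,
\begin{equation}
    w_d = \sum_{k=1}^{2^{d-1}} \bigl(Q(Z_{d,k}) - \tau_d(Z_{d,k})\bigr) = Q(\mathcal{X}) - \tau_d(\mathcal{X}) = 1 - \tau_d(\mathcal{X}).
\end{equation}
Since \Cref{lemma:tau} establishes that $\tau_d(S)$ is non-decreasing in $d$ for every $S \in \Sigma$, taking $S = \mathcal{X}$ shows that $1 - \tau_d(\mathcal{X})$ is non-increasing in $d$, which is exactly the claim.

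There is no substantive obstacle here; the only subtlety is recognising that $w_d$ telescopes across the partition at depth $d$ to a quantity that does not depend on how $\mathcal{X}$ is partitioned, namely $1 - \tau_d(\mathcal{X})$. Once that is observed, both assertions follow directly from lemmas already proved, so the total length of the proof should be only a few lines.
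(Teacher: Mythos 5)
Your proof is correct and follows essentially the same route as the paper: telescope $w_d$ across the depth-$d$ partition of $\mathcal{X}$ to obtain $w_d = Q(\mathcal{X}) - \tau_d(\mathcal{X})$, then invoke the monotonicity of $\tau_d$ from \Cref{lemma:tau} for the non-increasing claim and \Cref{lemma:q_minus_tau} for non-negativity. As a minor aside in your favour, the paper's own proof cites \Cref{lem:Qtau} for the non-decreasing property of $\tau_d$, whereas that fact is actually established in \Cref{lemma:tau}, which is the lemma you correctly reference.
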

\begin{proof}
Since $\tau_d$ is non-decreasing in $d$ (from \cref{lem:Qtau}) and
\begin{equation}
    w_d = \sum_{k=1}^{2^{d-1}} Q(Z_{d, k}) - \tau_d(Z_{d, k}) = Q(\mathcal{X}) - \tau_d(\mathcal{X}),
\end{equation}
it follows that $w_d$ is a non-increasing and non-negative sequence.
\end{proof}


\begin{lemma}[Case 1]
\label{lemma:pdzero}
    If $p_d \not \to 0$, then $w_d \to 0$.
\end{lemma}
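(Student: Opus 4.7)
The plan is to identify $w_d$ with the probability that GRC has \emph{not} terminated within $d$ steps, and then argue directly from the product formula for survival probabilities.

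First, I would unpack $\tau_d(\mathcal{X})$ using \Cref{lemma:tau}: summing the acceptance masses $A_d(\mathcal{X}, \hat{Z}_{d, k})$ over all ancestor branches at depth $d$ is exactly summing over the disjoint events $\{$GRC terminates at step $k$ and $S_{0:k-1} = \hat{Z}_{k, k'}\}$ for $k \leq d$, $k' \leq 2^{k-1}$. By \Cref{prop:grs:TA} these probabilities sum to $\mathbb{P}(\text{GRC terminates by step }d) = 1 - \mathbb{P}(V_d)$. Since $Q(\mathcal{X}) = 1$, this yields the clean identity
\[
w_d \;=\; Q(\mathcal{X}) - \tau_d(\mathcal{X}) \;=\; \mathbb{P}(V_d).
\]

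Next, I would use the chain rule for the survival probability. By definition, $p_d = \mathbb{P}(W_d \mid V_{d-1})$, so $\mathbb{P}(V_d \mid V_{d-1}) = 1 - p_d$ and therefore
\[
w_d \;=\; \mathbb{P}(V_d) \;=\; \prod_{k=1}^{d}(1 - p_k),
\]
with the empty product $\mathbb{P}(V_0) = 1$ as base case. The assumption $p_d \not\to 0$ produces some $\epsilon > 0$ and a subsequence $(d_j)_{j \geq 1}$ with $p_{d_j} \geq \epsilon$ for all $j$. Since $1 - p_k \in [0, 1]$ for every $k$, dropping the factors outside the subsequence only increases the product, so
\[
w_{d_j} \;\leq\; \prod_{i=1}^{j}(1 - p_{d_i}) \;\leq\; (1 - \epsilon)^{j} \;\xrightarrow[j \to \infty]{}\; 0.
\]
Combined with monotonicity of $w_d$ (already noted in the preceding lemma), this forces $w_d \to 0$.

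The only non-routine step is the very first one, identifying $\tau_d(\mathcal{X})$ with the probability of termination by step $d$. This reduces to checking that $\{\hat{Z}_{d, k}\}_{k=1}^{2^{d-1}}$ enumerates every possible realisation of the first $d-1$ active sets (which follows from $Z$ being a partitioning process, \Cref{def:Z}), so that the double sum in \Cref{lemma:tau} is a disjoint partition of the termination event. Everything afterwards is a one-line estimate on a convergent product.
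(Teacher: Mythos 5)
Your proof is correct and follows essentially the same route as the paper: identify $w_d = \mathbb{P}(V_d) = \prod_{k=1}^d (1-p_k)$ via \Cref{lemma:tau} and \Cref{prop:grs:TA}, extract a subsequence where $p_{d_j} \geq \epsilon$, bound $w_{d_j} \leq (1-\epsilon)^j \to 0$, and invoke monotonicity of $w_d$ to upgrade the subsequential limit to the full limit. Your write-up is slightly cleaner in its explicit use of the subsequence index $j$, but the argument is the same.
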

\begin{proof}
Let $p_d = \mathbb{P}(W_d~|~V_{d-1})$ and suppose $p_d \not \to 0$.
Then, there exists $\epsilon > 0$ such that $p_d > \epsilon$ occurs infinitely often.
Therefore, there exists an increasing sequence of integers $a_d \in \mathbb{N}$ such that $p_{a_d} > \epsilon$ for all $d \in \mathbb{N}$.
Then
\begin{align}
    \tau_{a_d}(\mathcal{X}) &= \mathbb{P}\left(\bigcup_{d=1}^{a_d} W_d\right) \\
    &= 1 - \mathbb{P}\left(V_{a_d}\right), \\
    &= 1 - \prod_{d=1}^{a_d} \mathbb{P}\left(V_d~|~V_{d-1}\right), \\
    &= 1 - \prod_{d=1}^{a_d} (1 - p_d), \\
    &\geq 1 - (1 - \epsilon)^d \to 1 \text{ as } d \to \infty.
    \label{eq:tau_to_1}
\end{align}
Therefore, $\tau_d(\mathcal{X}) \to 1$ as $d \to \infty$, which implies that $||Q - \tau_d||_{TV} \to 0$.
\end{proof}


\begin{lemma}[Case 2]
\label{lemma:hzero}
    If $P(H_d) \to 0$, then $w_d \to 0$.
\end{lemma}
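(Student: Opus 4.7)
The plan is to invoke \cref{lem:Qtau} to localize $w_d$ to the set $H_d$, and then use the absolute continuity $Q \ll P$ (together with the finiteness of $Q$) to transfer the hypothesis $P(H_d) \to 0$ into $Q(H_d) \to 0$.

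First I would observe that, by \cref{lem:Qtau}, we have $Q(\mathcal{X}\setminus H_D) = \tau_D(\mathcal{X}\setminus H_D)$ for every $D$. Splitting $\mathcal{X} = H_D \cup (\mathcal{X}\setminus H_D)$ and cancelling the equal pieces yields
\begin{equation}
w_D \;=\; Q(\mathcal{X}) - \tau_D(\mathcal{X}) \;=\; Q(H_D) - \tau_D(H_D) \;\leq\; Q(H_D),
\end{equation}
where the inequality uses $\tau_D(H_D)\geq 0$. This reduces the problem to showing that $Q(H_d)\to 0$ under the hypothesis $P(H_d)\to 0$.

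The key step is the standard \emph{absolute continuity of the integral}: since $Q\ll P$ and $Q$ is finite (indeed a probability measure), for every $\epsilon>0$ there exists $\delta>0$ such that any $E\in\Sigma$ with $P(E)<\delta$ satisfies $Q(E)<\epsilon$. (This is proved by contradiction via Borel--Cantelli applied to a sequence $E_n$ with $P(E_n)<2^{-n}$ but $Q(E_n)\geq \epsilon$: the set $\limsup E_n$ has $P$-measure zero but $Q$-measure at least $\epsilon$ by reverse Fatou, contradicting $Q\ll P$.) Applying this with $E = H_d$, the hypothesis $P(H_d)\to 0$ immediately gives $Q(H_d)\to 0$, hence $w_d\to 0$.

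There is no real obstacle here: the heavy lifting has already been done in \cref{lem:Qtau}, which shows that outside $H_d$ the measures $Q$ and $\tau_d$ coincide, so all the ``missing mass'' of $\tau_d$ is concentrated in $H_d$. The only subtlety is that $P(H_d)\to 0$ does not trivially imply $Q(H_d)\to 0$ under mere absolute continuity of measures, which is why the integral version of absolute continuity is invoked; once this is in place, the conclusion is immediate.
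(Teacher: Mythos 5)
Your proof is correct, but it takes a genuinely different route from the paper's. The paper passes to the limit: it observes $P(H)=0$ (since $H \subseteq H_d$ and $P(H_d)\to 0$), deduces $Q(H)=\tau(H)=0$ from $Q\ll P$ and $\tau \leq Q$, and then writes the limit $Q(\mathcal{X})-\tau(\mathcal{X})$ as $\bigl[Q(\mathcal{X}\setminus H)-\tau(\mathcal{X}\setminus H)\bigr] + \bigl[Q(H)-\tau(H)\bigr]$, killing the first bracket via the \emph{second} equality of \cref{lem:Qtau} and the second bracket by the argument just given. You instead stay at finite $D$: you use the \emph{first} equality of \cref{lem:Qtau} to localize all the deficit to $H_D$, obtaining the pointwise bound $w_D \leq Q(H_D)$, and then convert $P(H_d)\to 0$ into $Q(H_d)\to 0$ via the $\varepsilon$--$\delta$ (uniform) absolute continuity of a finite measure, which you prove correctly via Borel--Cantelli and reverse Fatou. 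Your route avoids invoking \cref{lem:tau_limit} and the limiting measure $\tau$ entirely, and it has the minor advantage of being quantitative in $D$ (it yields an explicit rate $w_D \leq Q(H_D)$ rather than only a limit statement); the paper's route is shorter once the machinery around $\tau$ and $H$ is already in place, as it is here since \cref{lem:tau_limit} and the second half of \cref{lem:Qtau} are needed elsewhere in the correctness proof anyway. Both proofs are sound.
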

\begin{proof}
    Suppose $P(H_d) \to 0$.
    Since $Q \ll P$, we have $Q(H) = 0$, and since $Q \geq \tau \geq 0$ (by \cref{lem:tau_limit}), we also have $\tau(H) = 0$.
    Therefore
    \begin{align}
        \lim_{d \to \infty} w_d &= \lim_{d \to \infty} ||Q - \tau_d||_{TV} \\
        &= Q(\mathcal{X}) - \tau(\mathcal{X}) \\
        &= \underbrace{Q(\mathcal{X} \setminus H) - \tau(\mathcal{X} \setminus H)}_{=~0 \text{ from lemma } \ref{lem:Qtau}} + \underbrace{Q(H)}_{=~0} - \underbrace{\tau(H)}_{=~0} \\
        &= 0
    \end{align}
    which is the required result.
\end{proof}


\begin{lemma}[An intermediate result]
\label{lemma:intermediate}
    If $p_d \to 0$ and $w_d \not \to 0$ as $d \to \infty$, then
    \begin{equation}
        \sum_{k=1}^{2^{d-1}} \frac{P(H_{d, k})}{P(Z_{d, k})} ~ w_{d,k} \to 0 \text{ as } d \to \infty.
    \end{equation}
\end{lemma}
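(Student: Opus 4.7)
The plan is to unpack the identity $\sum_k A_d(\mathcal{X}, \hat Z_{d,k}) = p_d w_{d-1}$, split each $A_d$ according to the two branches of the definition of $\alpha_d$, and recognise the factor $\frac{P(H_{d,k})}{P(Z_{d,k})}$ appearing on the saturated region $H_{d,k}$. Once this split is in place the bound on $\sum_k \frac{P(H_{d,k})}{P(Z_{d,k})} w_{d,k}$ drops out from non-negativity and the hypothesis $p_d\to 0$.

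First I will establish the exact identity $\sum_{k=1}^{2^{d-1}} A_d(\mathcal{X},\hat Z_{d,k}) = p_d\,w_{d-1}$. By \Cref{prop:grs:TA}, summing $\mathbb{P}(W_d, S_{0:d-1}=\hat Z_{d,k})$ over $k$ gives $\mathbb{P}(W_d)=\sum_k A_d(\mathcal{X},\hat Z_{d,k})$, while $\mathbb{P}(V_{d-1}) = 1-\tau_{d-1}(\mathcal{X}) = w_{d-1}$ from \Cref{lemma:tau}, so $p_d = \mathbb{P}(W_d)/\mathbb{P}(V_{d-1})$ yields the desired identity.

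Next I will decompose each term $A_d(\mathcal{X},\hat Z_{d,k})$ using the definition \eqref{eq:def_hdk} of $H_{d,k}$: since $\alpha_d(x,\hat Z_{d,k})$ vanishes outside $Z_{d,k}$, we have
\begin{align*}
A_d(\mathcal{X},\hat Z_{d,k}) = \tfrac{1-T_{d-1}(\mathcal{X},\hat Z_{d,k})}{P(Z_{d,k})}\,P(H_{d,k}) + \int_{Z_{d,k}\setminus H_{d,k}}\!\Big(\tfrac{dQ}{dP}(x) - t_{d-1}(x,\hat Z_{d,k})\Big)\,dP(x).
\end{align*}
The second term is non-negative. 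For the first factor of the first term, the recursive definition \eqref{eq:def:T} gives $T_{d-1}(\mathcal{X}\setminus Z_{d,k},\hat Z_{d,k}) = Q(\mathcal{X}\setminus Z_{d,k})$, and \Cref{lem:T_and_tau} gives $T_{d-1}(Z_{d,k},\hat Z_{d,k}) = \tau_{d-1}(Z_{d,k})$; therefore $1 - T_{d-1}(\mathcal{X},\hat Z_{d,k}) = Q(Z_{d,k}) - \tau_{d-1}(Z_{d,k}) \geq Q(Z_{d,k}) - \tau_d(Z_{d,k}) = w_{d,k}$, using $\tau_{d-1}\leq\tau_d$ from \Cref{lem:tau_limit}.

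Combining these observations and summing over $k$ yields
\begin{equation*}
\sum_{k=1}^{2^{d-1}} \frac{P(H_{d,k})}{P(Z_{d,k})}\,w_{d,k} \;\leq\; \sum_{k=1}^{2^{d-1}} \frac{P(H_{d,k})}{P(Z_{d,k})}\bigl(1-T_{d-1}(\mathcal{X},\hat Z_{d,k})\bigr) \;\leq\; \sum_{k=1}^{2^{d-1}} A_d(\mathcal{X},\hat Z_{d,k}) \;=\; p_d\,w_{d-1} \;\leq\; p_d,
\end{equation*}
since $w_{d-1}\leq 1$. The hypothesis $p_d\to 0$ then gives the conclusion. (Note that the hypothesis $w_d\not\to 0$ is not used in this intermediate step; it is only needed by the downstream cases that invoke this lemma.) The main subtlety I expect is purely bookkeeping: correctly identifying $1 - T_{d-1}(\mathcal{X},\hat Z_{d,k})$ as $Q(Z_{d,k}) - \tau_{d-1}(Z_{d,k})$ by splitting along the active set $S_{d-1}=Z_{d,k}$ and invoking \Cref{lem:T_and_tau} on the inside while using the $Q(S\cap S_{d+1}')$ term of \eqref{eq:def:T} on the outside.
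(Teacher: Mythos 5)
Your proof is correct and it recovers the same underlying mechanism as the paper: both reduce the sum in question to the probability that, at step $d$, the algorithm draws a sample in the saturated region $H_d$ and accepts it immediately. The routes differ in detail. The paper works with the conditional probabilities directly, writing $p_d \geq \mathbb{P}(W_d(H_d)\mid V_{d-1})$ and then expanding over the branches $\hat{Z}_{d,k}$ using $\mathbb{P}(W_d(H_{d,k})\mid V_{d-1}, S_{0:d-1}=\hat{Z}_{d,k}) = P(H_{d,k})/P(Z_{d,k})$ (i.e.\ acceptance probability $1$ on the saturated set). You instead start from the exact identity $\sum_k A_d(\mathcal{X},\hat{Z}_{d,k}) = p_d\,w_{d-1}$ and split each $A_d$ along the two branches of the $\min$ in \cref{eq:def:alpha}; the factor $P(H_{d,k})/P(Z_{d,k})$ comes from the saturated branch, and the non-saturated branch is simply dropped by non-negativity.

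Your route has two small advantages. First, it avoids the identification of $\mathbb{P}(S_{0:d-1} = \hat{Z}_{d,k}\mid V_{d-1})$ with $w_{d,k}/w_d$ that the paper uses; a careful unpacking gives $\bigl(Q(Z_{d,k}) - \tau_{d-1}(Z_{d,k})\bigr)/w_{d-1}$ for that conditional probability, so the paper's statement as written has an off-by-one mismatch in both numerator and denominator. You sidestep this by only needing the lower bound $1-T_{d-1}(\mathcal{X},\hat{Z}_{d,k}) \geq w_{d,k}$ (via \cref{lem:T_and_tau} plus monotonicity of $\tau_d$), which is exactly what the sum requires. Second, your observation that $w_d\not\to 0$ is not needed for this lemma is essentially right. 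The one place it plays a silent role is in ensuring $w_{d-1}>0$ for all $d$, so that the conditional $p_d = \mathbb{P}(W_d)/\mathbb{P}(V_{d-1})$ appearing in the hypothesis is even defined; since $w_d$ is non-increasing, $w_d\not\to 0$ gives $w_{d-1}\geq \lim_j w_j > 0$. Beyond that, your chain $\sum_k \frac{P(H_{d,k})}{P(Z_{d,k})}w_{d,k} \leq p_d\,w_{d-1} \leq p_d \to 0$ is a clean and correct tightening of the paper's argument.
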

\begin{proof}
Suppose that $p_d = \mathbb{P}(W_d~|~V_{d-1}) \to 0$ and $w_d \not \to 0$.
Then
\begin{align}
    \mathbb{P}(W_d~|~V_{d-1}) &\geq \mathbb{P}(W_d(H_d) ~|~ V_{d-1}) \\
        &= \sum_{k=1}^{2^{d-1}} \mathbb{P}\left(W_d(H_{d, k}) ~|~ V_{d-1}\right) \\
        &= \sum_{k=1}^{2^{d-1}} \mathbb{P}\left(W_d(H_{d, k}), S_{0:d-1} = \hat{Z}_{d,k} ~|~ V_{d-1}\right) \\
        &= \sum_{k=1}^{2^{d-1}} \mathbb{P}\left(W_d(H_{d, k})~|~ V_{d-1}, S_{0:d-1} = \hat{Z}_{d,k}\right) \mathbb{P}\left(S_{0:d-1} = \hat{Z}_{d,k} ~|~ V_{d-1}\right) \\
        &= \sum_{k=1}^{2^{d-1}} \frac{P(H_{d, k})}{P(Z_{d, k})} \mathbb{P}\left(S_{0:d-1} = \hat{Z}_{d,k} ~|~ V_{d-1}\right) \\
        &= \sum_{k=1}^{2^{d-1}} \frac{P(H_{d, k})}{P(Z_{d, k})} \frac{w_{d,k}}{w_d} \to 0.
\end{align}
In addition, if $w_d \not \to 0$, then since $0 \leq w_d \leq 1$ we have
\begin{equation}
    \sum_{k=1}^{2^{d-1}} \frac{P(H_{d, k})}{P(Z_{d, k})} w_{d, k} \to 0.
\end{equation}
which is the required result.
\end{proof}


\begin{lemma}[Case 3]
    \label{lemma:finite_ratio}
    Suppose that $p_d \to 0$, $P(H_d) \not \to 0$ and \cref{assumption:finite_ratio_mode} holds.
    Then $w_d \to 0$.
\end{lemma}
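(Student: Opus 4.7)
I would argue by contradiction: suppose $w_d \not\to 0$. Combined with the hypothesis $p_d \to 0$, \cref{lemma:intermediate} yields $\sum_{k=1}^{2^{d-1}} (P(H_{d,k})/P(Z_{d,k})) \, w_{d,k} \to 0$, and the goal becomes to show that this, together with \cref{assumption:finite_ratio_mode}, forces $w_d \to 0$.

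The crucial step is the per-branch inequality $w_{d,k} \leq M \cdot P(H_{d,k})$. To derive it, I would first observe that for $k' \neq k$ the cross-branch contribution $A_d(Z_{d,k}, \hat{Z}_{d,k'})$ vanishes: the sets $Z_{d,k}$ and $Z_{d,k'}$ are disjoint, and iterating \cref{eq:def:T} along the path $\hat{Z}_{d,k'}$ shows $T_{d-1}(Z_{d,k}, \hat{Z}_{d,k'}) = Q(Z_{d,k})$, whence $\alpha_d(\cdot, \hat{Z}_{d,k'}) = 0$ $P$-a.e.\ on $Z_{d,k}$. Combined with \cref{lem:T_and_tau}, this gives $w_{d,k} = Q(Z_{d,k}) - T_{d-1}(Z_{d,k}, \hat{Z}_{d,k}) - A_d(Z_{d,k}, \hat{Z}_{d,k})$. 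Splitting the $A_d$ integral across $H_{d,k}$ (where $\alpha_d$ saturates at the ceiling $(1-T_{d-1}(\mathcal{X}, \hat{Z}_{d,k}))/P(Z_{d,k})$) and its complement (where $\alpha_d = dQ/dP - t_{d-1}$, so the integrand for $w_{d,k}$ cancels) produces
\begin{equation*}
w_{d,k} = Q(H_{d,k}) - T_{d-1}(H_{d,k}, \hat{Z}_{d,k}) - \frac{P(H_{d,k})}{P(Z_{d,k})}\bigl(1 - T_{d-1}(\mathcal{X}, \hat{Z}_{d,k})\bigr) \leq Q(H_{d,k}) \leq M \cdot P(H_{d,k}),
\end{equation*}
using the non-negativity of $T_{d-1}$ and of $1 - T_{d-1}(\mathcal{X}, \hat{Z}_{d,k})$ (from \cref{prop:grs:TA}) together with \cref{assumption:finite_ratio_mode}.

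The second step is Cauchy--Schwarz with $a_k = P(Z_{d,k})^{1/2}$ and $b_k = w_{d,k}/P(Z_{d,k})^{1/2}$ (taking the summand to be $0$ when $P(Z_{d,k}) = 0$, which is valid because $Q \ll P$ then forces $w_{d,k} = 0$). This gives $w_d^2 \leq \sum_k w_{d,k}^2/P(Z_{d,k})$; combining with the per-branch bound as $w_{d,k}^2/P(Z_{d,k}) \leq M \, (P(H_{d,k})/P(Z_{d,k})) \, w_{d,k}$ yields
\begin{equation*}
w_d^2 \leq M \sum_{k=1}^{2^{d-1}} \frac{P(H_{d,k})}{P(Z_{d,k})} \, w_{d,k} \to 0,
\end{equation*}
contradicting $w_d \not\to 0$. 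The main obstacle is the per-branch identity: verifying the cross-branch vanishing requires a short but careful induction tracking how $T_{d-1}$ absorbs off-path $Q$-mass, after which reducing the integral defining $w_{d,k}$ to $H_{d,k}$ is routine; once $w_{d,k} \leq M \cdot P(H_{d,k})$ is in hand, the Cauchy--Schwarz closure is immediate and, notably, the hypothesis $P(H_d) \not\to 0$ is not even used here (it is automatic, since $w_d \leq M \cdot P(H_d)$).
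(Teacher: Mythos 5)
Your proof is correct, and the closing step takes a genuinely different route from the paper's. Both arguments hinge on the same per-branch control: the paper uses $Q(H_{d,k}) - \tau(H_{d,k}) \leq M\,P(H_{d,k})$ while you derive $w_{d,k} \leq M\,P(H_{d,k})$ (your rederivation from the definition of $T_{d-1}$ and $A_d$ is valid, though \cref{lem:Qtau} already gives this in one line: $Q - \tau_d$ is a non-negative measure vanishing on $\mathcal{X}\setminus H_d$, so $w_{d,k} = (Q - \tau_d)(H_{d,k}) \leq Q(H_{d,k}) \leq M\,P(H_{d,k})$). Where you diverge is in how you turn $\sum_k (P(H_{d,k})/P(Z_{d,k}))\,w_{d,k} \to 0$ into $w_d \to 0$. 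The paper defines $\Phi_d = \sum_k \Delta_{d,k}^2 / P(Z_{d,k})$ using the \emph{limit} quantities $\Delta_{d,k} = Q(H\cap H_{d,k}) - \tau(H\cap H_{d,k})$, shows $\Phi_d$ is both non-decreasing (via the additivity $\Delta_{d,k} = \Delta_{d+1,2k} + \Delta_{d+1,2k+1}$ and a convexity computation) and $\to 0$, hence $\Phi_1 = 0$, hence $Q(H) = \tau(H)$. Your proof skips the auxiliary $\Phi_d$ construction entirely: Cauchy--Schwarz against the partition masses gives $w_d^2 \leq \sum_k w_{d,k}^2/P(Z_{d,k}) \leq M\sum_k (P(H_{d,k})/P(Z_{d,k}))\,w_{d,k} \to 0$, which is the contradiction directly. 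Your route is shorter and more elementary (no monotonicity lemma, no limit measure $\tau$, no set $H$); the paper's route is what naturally generalises to the other cases (the set $H$ and the limit $\tau$ reappear in Cases 4 and 5), which is presumably why the authors organised it that way. Your remark that $P(H_d)\not\to 0$ is not actually used is also right: under \cref{assumption:finite_ratio_mode}, $w_d \leq M\,P(H_d)$, so the case $P(H_d) \to 0$ is subsumed; the hypothesis appears only because \cref{lemma:hzero} is factored out to serve the other assumptions.
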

\begin{proof}
Suppose that $p_d \to 0$, $P(H_d) \not \to 0$.
Suppose also that \cref{assumption:finite_ratio_mode} holds, meaning there exists $M \in \mathbb{R}$ such that $dQ/dP(x) < M$ for all $x \in \mathcal{X}$.
Then for any $S \in \Sigma$, we have
\begin{equation}
    \label{eq:upper_bound}
    \frac{Q(S) - \tau(S)}{P(S)} \leq \frac{Q(S)}{P(S)} =  \frac{\int_S \frac{dQ}{dP} dP}{P(S)} \leq M~ \frac{\int_S dP}{P(S)} = M \implies \frac{Q(S) - \tau(S)}{M} \leq P(S).
\end{equation}
Further, we have
\begin{align}
    \sum_{k=1}^{2^{d-1}} \frac{P(H_{d, k})}{P(Z_{d, k})} ~w_{d,k} &\geq \sum_{k=1}^{2^{d-1}} \frac{P(H_{d, k})}{P(Z_{d, k})} ~ (Q(H_{d, k}) - \tau(H_{d, k})) \\
    &\geq \frac{1}{M} \sum_{k=1}^{2^{d-1}} \frac{(Q(H_{d, k}) - \tau(H_{d, k}))^2}{P(Z_{d, k})} \\
    &\geq \frac{1}{M} \sum_{k=1}^{2^{d-1}} \frac{(Q(H \cap H_{d, k}) - \tau(H \cap H_{d, k}))^2}{P(Z_{d, k})} \\
    &\geq \frac{1}{M} \sum_{k=1}^{2^{d-1}} \frac{\Delta_{d, k}^2}{P(Z_{d, k})} \\
    &= \frac{1}{M} ~ \Phi_d \\
    &\to 0,
\end{align}
where in the second inequality we have used \cref{eq:upper_bound} and we have defined
\begin{align}
    \Delta_{d, k} &\stackrel{\text{def}}{=} Q(H \cap H_{d, k}) - \tau(H \cap H_{d, k}), \\
    \Phi_d &\stackrel{\text{def}}{=} \sum_{k=1}^{2^{d-1}} \frac{\Delta_{d, k}^2}{P(Z_{d, k})}. \label{eq:def_phi}
\end{align}
Now note that the sets $H \cap H_{d + 1, 2k}$ and $H \cap H_{d + 1, 2k+1}$ partition the set $H \cap H_{d, k}$.
Therefore
\begin{equation}
    \Delta_{d, k} = \Delta_{d+1, 2k} + \Delta_{d+1, 2k+1}. \label{eq:deltas_sum}
\end{equation}
By the definition of $\Phi_d$ in \cref{eq:def_phi}, we can write
\begin{equation}
    \Phi_{d + 1} = \sum_{k=1}^{2^d} \frac{\Delta_{d, k}^2}{P(Z_{d+1, k})} = \sum_{k=1}^{2^{d-1}}\left[ \frac{\Delta_{d+1, 2k}^2}{P(Z_{d+1, 2k})} + \frac{\Delta_{d+1, 2k+1}^2}{P(Z_{d+1, 2k+1})}  \right],
\end{equation}
where we have written the sum over $2^d$ terms as a sum over $2^{d-1}$ pairs of terms.
We can rewrite the summand on the right hand side as
\begin{align}
    \frac{\Delta_{d+1, 2k}^2}{P(Z_{d+1, 2k})} + \frac{\Delta_{d+1, 2k+1}^2}{P(Z_{d+1, 2k+1})} &= \frac{\Delta_{d+1, 2k}^2}{P(Z_{d+1, 2k})} + \frac{(\Delta_{d, k} - \Delta_{d+1, 2k})^2}{P(Z_{d+1, 2k+1})} \label{eq:delta2_sum1} \\
    &= \Delta_{d, k}^2 \left[\frac{\rho^2}{P(Z_{d+1, 2k-1})} + \frac{(1 - \rho)^2 }{P(Z_{d+1, 2k})}\right] \label{eq:delta2_sum2} \\
    &= \Delta_{d, k}^2 ~ g(\rho) \label{eq:delta2_sum3}
\end{align}
where in \cref{eq:delta2_sum1} we have used \cref{eq:deltas_sum}, from \cref{eq:delta2_sum1} to \cref{eq:delta2_sum2} we defined the quantity $\rho = \Delta_{d+1, 2k} / \Delta_{d, k}$, and from \cref{eq:delta2_sum2} to \cref{eq:delta2_sum3} we have defined $g : [0, 1] \to \mathbb{R}$ as
\begin{equation}
    g(r) \stackrel{\text{def}}{=} \frac{r^2}{P(Z_{d+1, 2k})} + \frac{(1 - r)^2}{P(Z_{d+1, 2k+1})}.
\end{equation}
The first and second derivatives of $g$ are
\begin{align}
    \frac{dg}{dr} &= \frac{2r}{P(Z_{d+1, 2k})} - \frac{2(1 - r)}{P(Z_{d+1, 2k+1})}, \\
    \frac{d^2g}{dr^2} &= \frac{2}{P(Z_{d+1, 2k})} + \frac{2}{P(Z_{d+1, 2k+1})} > 0,
\end{align}
so $g$ has a single stationary point that is a minimum, at $r = r_{\min}$, which is given by
\begin{align}
    r_{\min} := \frac{P(Z_{d+1, 2k})}{P(Z_{d+1, 2k}) + P(Z_{d+1, 2k+1})}.
\end{align}
Plugging this back in $g$, we obtain
\begin{align}
    g(r_{\min}) = \frac{1}{P(Z_{d+1, 2k}) + P(Z_{d+1, 2k+1})} = \frac{1}{P(Z_{d, k})},
\end{align}
which implies that
\begin{align}
\frac{\Delta_{d+1, 2k}^2}{P(Z_{d+1, 2k})} + \frac{\Delta_{d+1, 2k+1}^2}{P(Z_{d+1, 2k+1})} \geq \frac{\Delta_{d, k}^2}{P(Z_{d, k})}.
\end{align}
Therefore
\begin{equation}
    \Phi_{d + 1} = \sum_{k=1}^{2^d} \frac{\Delta_{d, k}^2}{P(Z_{d+1, k})} \geq \sum_{k=1}^{2^{d-1}} \frac{\Delta_{d, k}^2}{P(Z_{d, k})} = \Phi_d,
\end{equation}
but since $\Phi_d \to 0$, this is only possible if $\Phi_d = 0$ for all $d$, including $d = 1$, which would imply that
\begin{equation}
    \Delta_{1, 1} = Q(H \cap H_{1, 1}) - \tau(H \cap H_{1, 1}) = Q(H) - \tau(H) = 0, \label{eq:p1_qt0}
\end{equation}
which, together with \cref{lem:Qtau}, implies that
\begin{equation}
    Q(\mathcal{X}) - \tau(\mathcal{X}) = Q(H) - \tau(H) = 0,
\end{equation}
and therefore $w_d = ||Q - \tau_d||_{TV} \to 0$.
\end{proof}


\begin{lemma}[Case 4]
    \label{lemma:single_branch}
    Suppose that $p_d \to 0$, $P(H_d) \not \to 0$ and \cref{assumption:nice_shrinking} holds.
    Then $w_d \to 0$.
\end{lemma}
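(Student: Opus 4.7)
The plan is to argue by contradiction: assume $w_d \not\to 0$, so that $w_\infty := \lim_d w_d > 0$ (using that $w_d$ is non-negative and non-increasing), and derive a contradiction with the conclusion of \cref{lemma:intermediate}, which under $p_d \to 0$ gives $\sum_{k=1}^{2^{d-1}} \frac{P(H_{d,k})}{P(Z_{d,k})}\, w_{d,k} \to 0$.

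First I would set up the limiting measure. By \cref{lem:tau_limit} and \cref{lem:Qtau}, $Q - \tau$ is a positive finite measure with total mass $w_\infty$, supported on $H$, and absolutely continuous with respect to $P$. Set $f := d(Q-\tau)/dP$. Since $\int_H f\, dP = w_\infty > 0$, there exists $\epsilon > 0$ such that $P(E) > 0$ for $E := H \cap \{f > \epsilon\}$.

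Next I would invoke \cref{assumption:nice_shrinking} to apply a Lebesgue-type differentiation theorem along the partition sequence. For each $x \in \mathcal{X}$ and depth $d$, let $k_d(x)$ be the unique index in $\{1,\dots,2^{d-1}\}$ with $x \in Z_{d,k_d(x)}$. For $P$-a.e.\ $x$ at which $P(Z_{d,k_d(x)}) \to 0$, the nicely-shrinking property gives radii $r_d \downarrow 0$ and a constant $\gamma > 0$ with $Z_{d,k_d(x)} \subseteq B_{r_d}(x)$ and $P(Z_{d,k_d(x)}) \geq \gamma\, P(B_{r_d}(x))$; combined with the standard differentiation theorem for Radon measures on $\mathbb{R}^N$, this yields
\begin{equation*}
\lim_{d \to \infty} \frac{(Q-\tau)(Z_{d,k_d(x)})}{P(Z_{d,k_d(x)})} = f(x).
\end{equation*}
A short separate argument handles the exceptional $x \in E$ at which $P(Z_{d,k_d(x)}) \not\to 0$: such $x$ must lie in a nested sequence of cells whose $P$-masses stay bounded below by some $c > 0$, and since the cells at each depth partition $\mathcal{X}$, only finitely many of them can have $P$-mass $\geq c$; a compactness/pigeonhole argument then shows the set of such $x$ has $P$-measure zero.

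Finally I would combine these ingredients. Define $E_d := \{x \in E : (Q-\tau)(Z_{d,k_d(x)}) > (\epsilon/2)\, P(Z_{d,k_d(x)})\}$; by the pointwise convergence above, $P(E_d) \to P(E)$, so $P(E_d) \geq P(E)/2$ for all sufficiently large $d$. For each $x \in E_d$, since $x \in H \subseteq H_d$ and $x \in Z_{d,k_d(x)}$, the definition of $H_d$ forces $x \in H_{d,k_d(x)}$. Writing $\mathcal{K}_d := \{k : E_d \cap Z_{d,k} \neq \emptyset\}$, using $w_{d,k} \geq (Q-\tau)(Z_{d,k})$ (since $\tau_d \leq \tau$) and $P(H_{d,k}) \geq P(E_d \cap Z_{d,k})$, I would bound
\begin{equation*}
\sum_{k=1}^{2^{d-1}} \frac{P(H_{d,k})}{P(Z_{d,k})}\, w_{d,k} \;\geq\; \sum_{k \in \mathcal{K}_d} \frac{P(E_d \cap Z_{d,k})}{P(Z_{d,k})} \cdot \frac{\epsilon}{2}\, P(Z_{d,k}) \;=\; \frac{\epsilon}{2}\, P(E_d) \;\geq\; \frac{\epsilon}{4}\, P(E) \;>\; 0,
\end{equation*}
contradicting \cref{lemma:intermediate}. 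The main obstacle will be the measure-theoretic differentiation step: making precise the Vitali-style argument for the family $\{Z_{d,k_d(x)}\}$ with respect to the arbitrary Borel probability measure $P$ on $\mathbb{R}^N$ (rather than Lebesgue measure), and cleanly disposing of the exceptional set where partition cells fail to shrink in $P$-measure.
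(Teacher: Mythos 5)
Your proposal attacks the wrong lemma, and even for the lemma it actually targets it contains a genuine gap.

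First, the statement as printed contains a typo: although it cites \cref{assumption:nice_shrinking}, the lemma's label reads \texttt{lemma:single\_branch} and the paper's proof explicitly unpacks it as the single-branch property of \cref{assumption:single_branch}. Under that assumption, at each depth $d$ there is exactly one index $k=k_d$ with $w_{d,k}>0$ (the unique reachable cell), hence $H_d = H_{d,k_d}$ and
\begin{equation*}
    p_d \;\geq\; \sum_{k=1}^{2^{d-1}}\frac{P(H_{d,k})}{P(Z_{d,k})}\,w_{d,k}
    \;=\; \frac{P(H_{d,k_d})}{P(Z_{d,k_d})}\,w_d
    \;\geq\; P(H_d)\,w_d;
\end{equation*}
since $p_d\to 0$ while $P(H_d)$ is a decreasing sequence bounded away from $0$, $w_d\to 0$. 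That short computation is the entire proof. The differentiation-based argument you describe instead addresses \cref{lemma:nicely_shrkinking} (Case 5), the genuinely nice-shrinking case.

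Second, read as a proof of Case 5, your argument breaks at the exceptional set. You claim that the set of $x\in E$ with $P(Z_{d,k_d(x)})\not\to 0$ is $P$-null via a ``compactness/pigeonhole'' count. That inference is false: the union $C$ of all persistent nested cells (intersections $\bigcap_d Z_{d,k_d}$ with positive $P$-mass) can have positive, even full, $P$-measure — e.g.\ the global process $Z_{2n}=Z_n$, $Z_{2n+1}=\emptyset$ gives $P(Z_{d,k_d(x)})=1$ for every $x$. Knowing that each depth contains only finitely many cells of $P$-mass $\geq c$ says nothing about the $P$-measure of their descending intersections, and \cref{assumption:nice_shrinking} only constrains behaviour on cells that do shrink. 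The paper fills exactly this hole: it proves $C$ is a countable union of measurable cells, splits $H$ into $H\cap C$ and $H\setminus C$, and runs a separate Fatou argument on $H\cap C$ where the ratio $\tfrac{(Q-\tau)(B_d(x))}{P(B_d(x))}$ converges for trivial reasons ($B_d(x)\downarrow B(x)$ with $P(B(x))>0$), concluding $Q(H\cap C)=\tau(H\cap C)$; the nice-shrinking hypothesis and Lebesgue-type differentiation are invoked only on $H\setminus C$. Without that split your claim $P(E_d)\to P(E)$ is unjustified on the portion of $E$ inside $C$, so the contradiction you derive from \cref{lemma:intermediate} is not established.
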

\begin{proof}
    Suppose that $p_d \to 0$, $P(H_d) \not \to 0$.
    Suppose also that assumption that \cref{assumption:nice_shrinking} holds, meaning that for each $d$, we have $w_{d, k} > 0$ for exactly one value of $k = k_d$, and $w_{d, k} = 0$ for all other $k \neq k_d$.
    In this case, it holds that $H_{d, k} = \emptyset$ for all $k \neq k_d$ and $H_d = H_{d, k_d}$.
    Since $P(H_d) \not \to 0$ and $P(H_d)$ is a decreasing sequence, it converges to some positive constant.
    We also have
    \begin{equation}
        p_d \geq \sum_{k=1}^{2^{d-1}} \frac{P(H_{d, k})}{P(Z_{d, k})} ~w_{d,k} = \frac{P(H_{d, k_d})}{P(Z_{d, k_d})} ~w_{d,k_d} = \frac{P(H_{d, k_d})}{P(Z_{d, k_d})} ~w_d \geq P(H_d)~w_d \to 0,
    \end{equation}
    which can only hold if $w_d \to 0$, arriving at the result.
\end{proof}

\begin{lemma}[Case 5]
    \label{lemma:nicely_shrkinking}
    Suppose that $p_d \to 0$, $P(H_d) \not \to 0$ and \cref{assumption:nice_shrinking} holds.
    Then $w_d \to 0$.
\end{lemma}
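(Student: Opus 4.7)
The plan is to argue by contradiction: suppose $w_d \not\to 0$. Combined with $p_d \to 0$, \cref{lemma:intermediate} gives $\sum_{k} P(H_{d,k}) w_{d,k} / P(Z_{d,k}) \to 0$. I would rewrite this as $\int_\mathcal{X} \phi_d(x) h_d(x) \, dP(x) \to 0$, where $\phi_d(x) := P(H_{d, k_d(x)})/P(Z_{d, k_d(x)}) \in [0,1]$ and $h_d := d\mu_d/dP$ is the density of $\mu_d := Q - \tau_d$ (well-defined since $\tau_d$, being a finite sum of $A$'s, is absolutely continuous w.r.t.\ $P$). Because $\tau_d \uparrow \tau$ as measures, the Radon-Nikodym densities satisfy $d\tau_d/dP \uparrow d\tau/dP$ $P$-a.e.\ (MCT pins down the limit), so $h_d \downarrow h := d\mu/dP$ pointwise $P$-a.e.

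The crux is to show $\liminf_d \phi_d(x) > 0$ for $P$-a.e.\ $x$ with $h(x)>0$. Since $\mu := Q - \tau$ is supported on $H$ by \cref{lem:Qtau} and $\mu \ll P$, the set $\{h > 0\}$ lies in $H$ up to $P$-null sets. Fix such $x$, let $B_d := Z_{d, k_d(x)}$ and $B_\infty(x) := \bigcap_d B_d$. I split into two cases. \emph{Case (i):} $P(B_d) \to 0$. Then \cref{assumption:nice_shrinking} supplies $r_d \to 0$ and $\gamma > 0$ with $B_d \subseteq B_{r_d}(x)$ and $P(B_d) \geq \gamma P(B_{r_d}(x))$. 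The Besicovitch differentiation theorem (applied to the indicator of $H$ for $P$ on $\mathbb{R}^N$) gives $P(B_{r_d}(x) \setminus H)/P(B_{r_d}(x)) \to 0$ at $P$-a.e.\ $x \in H$, whence
\[
\frac{P(H \cap B_d)}{P(B_d)} \geq 1 - \frac{P(B_{r_d}(x) \setminus H)}{\gamma \, P(B_{r_d}(x))} \longrightarrow 1.
\]
Since $H \subseteq H_d$ and the partitions at depth $d$ are disjoint, $H \cap B_d \subseteq H_{d, k_d(x)}$, so $\phi_d(x) \to 1$. \emph{Case (ii):} $P(B_d) \to c > 0$. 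Then $k_d(\cdot)$ is constant on $B_\infty(x)$ and $H_{d, k_d(x)} \supseteq H \cap B_\infty(x)$. If $P(H \cap B_\infty(x)) > 0$, then $\phi_d(x) \geq P(H \cap B_\infty(x))/P(B_d) \to P(H \cap B_\infty(x))/c > 0$. Otherwise $P(H \cap B_\infty(x)) = 0$; combined with $\mu \ll P$ and $\mu(B_\infty(x) \setminus H) = 0$ this forces $\mu(B_\infty(x)) = 0$, so $h = 0$ $P$-a.e.\ on $B_\infty(x)$ and such $x$ form a $P$-null subset of $\{h > 0\}$.

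To conclude, using $h_d \to h$ $P$-a.e., $\phi_d \in [0,1]$, and non-negativity, one checks $\liminf_d (\phi_d h_d) = h \cdot \liminf_d \phi_d$ pointwise, so Fatou's lemma yields
\[
\int h(x) \, \liminf_d \phi_d(x) \, dP(x) \leq \liminf_d \int \phi_d \, h_d \, dP = 0.
\]
Together with the previous step, this forces $P(\{h > 0\}) = 0$, hence $\mu \equiv 0$ and $w_d = \mu_d(\mathcal{X}) \to 0$, contradicting the assumption. The main obstacle is Case (ii): the partition can stabilize at a set of positive $P$-measure, which lies outside the direct scope of \cref{assumption:nice_shrinking}; the argument handles it by exploiting $\mu \ll P$ and the support property of $\mu$ to eliminate the only subcase where $\phi_d$ might otherwise collapse.
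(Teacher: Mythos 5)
Your proof is correct in its essentials, and it takes a genuinely different route from the paper's. Both start from the same termination-probability estimate, but the paper rewrites the sum using the cell-averaged ratios $\bigl(Q(B_d(x))-\tau_d(B_d(x))\bigr)/P(B_d(x))$, splits $H$ into $H\cap C$ and $H\setminus C$ (with $C$ the union of the countably many positive-$P$-mass limit atoms of the partition), and applies Fatou on each piece, invoking a Lebesgue-differentiation result on $H\setminus C$. You instead work directly with the Radon--Nikodym densities $h_d = d(Q-\tau_d)/dP$, pass to the pointwise limit $h_d \downarrow h = d(Q-\tau)/dP$ via monotone convergence on the set functions $\tau_d\uparrow\tau$, and apply Fatou once over all of $\mathcal{X}$ to the product $\phi_d h_d$; the role of the paper's $H\cap C$ piece is played by your Case (ii). Both routes crucially need a Lebesgue-density argument on the shrinking cells; the paper cites Rudin's Theorem 7.10, which is stated for Lebesgue measure, while your appeal to the Besicovitch differentiation theorem is the technically cleaner tool for a general Radon measure $P$ on $\mathbb{R}^N$ and is precisely what \cref{assumption:nice_shrinking} is designed to feed. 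Two small points you should state explicitly to close the argument: (i) the conclusion in Case (ii) that "such $x$ form a $P$-null set" relies on there being only countably many distinct cells $B_\infty$ with $P(B_\infty)>0$ (a finite measure has at most countably many such atoms), so that the union of the $P$-null sets $\{h>0\}\cap B_\infty$ you identify is itself $P$-null — this is exactly the countability argument the paper records for $\mathcal{C}$ and it should appear; (ii) the claimed pointwise identity $\liminf_d(\phi_d h_d)=h\liminf_d\phi_d$ is not actually needed, since the one-sided inequality $\liminf_d(\phi_d h_d)\ge(\liminf_d\phi_d)\,h$, which is immediate from $h_d\to h\ge 0$ and $\phi_d\ge 0$, already yields $\int h\,\liminf_d\phi_d\,dP=0$ after Fatou.
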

\begin{proof}
Suppose that $p_d \to 0$, $P(H_d) \not \to 0$ and \cref{assumption:nice_shrinking} holds.
Since each $x \in \mathcal{X}$ belongs to exactly one $Z_{d, k}$ we can define the function $B_d : \mathcal{X} \to \Sigma$ as
\begin{equation}
    \label{eq:defBd}
    B_d(x) = Z_{d, k} \text{ such that } x \in Z_{d, k}.
\end{equation}
Using this function we can write
\begin{equation*}
    p_d \geq \sum_{k=1}^{2^{d-1}} \frac{P(H_{d, k})}{P(Z_{d, k})} ~w_{d,k} = \sum_{k=1}^{2^{d-1}} P(H_{d, k}) ~ \frac{Q(Z_{d, k}) - \tau_d(Z_{d, k})}{P(Z_{d, k})} = \int_{H_d} dP ~ \frac{Q(B_d(x)) - \tau_d(B_d(x))}{P(B_d(x))}.
\end{equation*}
Now, because the sets $H_d$ are measurable, their intersection $H := \cap_{d=1}^\infty H_d$ is also measurable.
We can therefore lower bound the integral above as follows
\begin{align}
    \label{eq:integral_form}
    \int_{H_d} dP ~ \frac{Q(B_d(x)) - \tau_d(B_d(x))}{P(B_d(x))} &\geq \int_H dP ~ \frac{Q(B_d(x)) - \tau_d(B_d(x))}{P(B_d(x))} \\
    &\geq \int_H dP ~ \frac{Q(B_d(x)) - \tau(B_d(x))}{P(B_d(x))},
\end{align}
where the first inequality holds as the integrand is non-negative and we are constraining the integration domain to $H \subseteq H_d$, and the second inequality holds because $\tau_d(S) \leq \tau(S)$ for any $S \in \Sigma$.
Define $\mathcal{C}$ to be the set of all intersections of nested partitions, with non-zero mass under $P$
\begin{equation}
    \mathcal{C} = \left\{\bigcap_{d = 0}^\infty Z_{d,k_d} : P\left(\bigcap_{d = 0}^\infty Z_{d,k_d}\right) > 0, k_0 = 1, k_{d+1} = 2k_d \text{ or } k_{d+1} = 2k_d + 1 \right\},
\end{equation}
and note that all of its elements are pairwise disjoint.
Each of the elements of $\mathcal{C}$ is a measurable set because it is a countable intersection of measurable sets.
In addition, $\mathcal{C}$ is a countable set, which can be shown as follows.
Define the sets $\mathcal{C}_n$ as
\begin{equation}
    \mathcal{C}_n = \left\{E \in \mathcal{C} : 2^{-n - 1} < P(E) \leq 2^{-n} \right\} \text{ for } n = 0, 1, \dots
\end{equation}
and note that their union equals $\mathcal{C}$.
Further, note that each $\mathcal{C}_n$ must contain a finite number of elements.
That is because if $\mathcal{C}_n$ contained an infinite number of elements, say $E_1, E_2, \dots \in \mathcal{C}_n$, then
\begin{align}
    P(\mathcal{X}) \geq P\left( \bigcup_{k=1}^\infty E_k \right) = \sum_{k=1}^\infty P(E_k) > \sum_{k=1}^\infty 2^{-n-1} \to \infty,
\end{align}
where the first equality holds because $P$ is an additive measure and the $E_n$ terms are disjoint, and the second inequality follows because $E_k \in \mathcal{C}_n$ so $P(E_k) > 2^{-n-1}$.
This results in a contradiction because $P(\mathcal{X}) = 1$, so each $\mathcal{C}_n$ must contain a finite number of terms.
Therefore, $\mathcal{C}$ is a countable union of finite sets, which is also countable.
This implies that the union of the elements of $\mathcal{C}$, namely $C = \cup_{C' \in \mathcal{C}} C'$ is a countable union of measurable sets and therefore also measurable.
Since $C$ is measurable, $H \setminus C$ is also measurable and we can rewrite the integral in \cref{eq:integral_form} as
{
\begin{align} 
    p_d &\geq \int_H dP ~ \frac{Q(B_d(x)) - \tau(B_d(x))}{P(B_d(x))} \\
    &= \int_{H \cap C} dP ~ \frac{Q(B_d(x)) - \tau(B_d(x))}{P(B_d(x))} + \int_{H \setminus C} dP ~ \frac{Q(B_d(x)) - \tau(B_d(x))}{P(B_d(x))} \label{eq:HC_sum} \\
    &\to 0
\end{align}
}
Since both terms above are non-negative and their sum converges to $0$, the terms must also individually converge to $0$.
Therefore, for the first term, we can write
\begin{equation}
    \label{eq:hc_term1}
    \lim_{d \to \infty} \int_{H \cap C} dP ~ \frac{Q(B_d(x)) - \tau(B_d(x))}{P(B_d(x))} = \liminf_{d \to \infty} \int_{H \cap C} dP ~ \frac{Q(B_d(x)) - \tau(B_d(x))}{P(B_d(x))} = 0.
\end{equation}
Similarly to $B_d$ defined in \cref{eq:defBd}, let us define $B : C \to \Sigma$ as
\begin{equation}
    B(x) = C' \in \mathcal{C} \text{ such that } x \in C'.
\end{equation}
Applying Fatou's lemma \cite[4.3.3, p. 131;][]{dudley2018real} to \cref{eq:hc_term1}, we obtain
\begin{align}
    \liminf_{d \to \infty} \int_{H \cap C} dP ~ \frac{Q(B_d(x)) - \tau(B_d(x))}{P(B_d(x))} &\geq \int_{H \cap C} dP ~ \liminf_{d \to \infty} \frac{Q(B_d(x)) - \tau(B_d(x))}{P(B_d(x))} \label{eq:fatou_1} \\
    &= \int_{H \cap C} dP ~ \frac{Q(B(x)) - \tau(B(x))}{P(B(x))} \label{eq:fatou_2} \\
    &= 0,
\end{align}
where from \cref{eq:fatou_1} to \cref{eq:fatou_2} we have used the fact that $P(B_d(x)) > 0$ whenever $x \in C$ and also that $B_1(x) \supseteq B_2(x) \supseteq \dots$.
Now we can re-write this integral as a sum, as follows.
Let the elements of $\mathcal{C}$, which we earlier showed is countable, be $C_1, C_2, \dots$ and write
\begin{align}
    \int_{H \cap C} dP ~ \frac{Q(B(x)) - \tau(B(x))}{P(B(x))} &= \sum_{n=1}^\infty \int_{H \cap C_n} dP ~ \frac{Q(B(x)) - \tau(B(x))}{P(B(x))} \\
    &= \sum_{n=1}^\infty \frac{P(H \cap C_n)}{P(C_n)} \left(Q(C_n) - \tau(C_n)\right) \\
    &= 0.
\end{align}
Now, from \cref{lem:Qtau}, we have
\begin{align}
    & \sum_{n=1}^\infty \frac{P(H \cap C_n)}{P(C_n)} \left(Q(C_n) - \tau(C_n)\right) = \sum_{n=1}^\infty \frac{P(H \cap C_n)}{P(C_n)} \left(Q(H \cap C_n) - \tau(H \cap C_n)\right) = 0,
\end{align}
which in turn implies that for each $n = 1, 2, \dots$, we have either $Q(H \cap C_n) - \tau(H \cap C_n) = 0$ or $P(H \cap C_n) = 0$.
However, the latter case also implies $Q(H \cap C_n) - \tau(H \cap C_n) = 0$ because $Q \ll P$, so $Q(H \cap C_n) - \tau(H \cap C_n) = 0$ holds for all $n$.
Therefore
\begin{equation}
    \tau(H \cap C) = \sum_{n=1}^\infty \tau(H \cap C_n) = \sum_{n=1}^\infty Q(H \cap C_n) = Q(H \cap C).
\end{equation}
Returning to the second term in the right hand of \cref{eq:HC_sum}, and again applying Fatou's lemma
\begin{align}
    \liminf_{d \to \infty} \int_{H \setminus C} dP ~ \frac{Q(B_d(x)) - \tau(B_d(x))}{P(B_d(x))} &\geq \int_{H \setminus C} dP ~ \liminf_{d \to \infty} \frac{Q(B_d(x)) - \tau(B_d(x))}{P(B_d(x))}. \label{eq:term2_fatou}
\end{align}
Now, since $Z$ has the nice-shrinking property from \cref{assumption:nice_shrinking}, we can apply a standard result from measure theory and integration \citet[given in Theorem 7.10, p. 140]{rudin86real}, to show that the following limit exists and the following equalities are satisfied
\begin{align}
    \lim_{d \to \infty} \frac{Q(B_d(x)) - \tau(B_d(x))}{P(B_d(x))} &= \lim_{d \to \infty} \frac{1}{P(B_d(x))} \int_{B_d} dP \left( \frac{dQ}{dP}(x) - \frac{d\tau}{dP}(x) \right) \\
    &= \frac{dQ}{dP}(x) - \frac{d\tau}{dP}(x) \label{eq:rudin}
\end{align}
Inserting \ref{eq:rudin} into \cref{eq:term2_fatou}, we obtain
\begin{align}
    \liminf_{d \to \infty} \int_{H \setminus C} dP ~ \frac{Q(B_d(x)) - \tau(B_d(x))}{P(B_d(x))} \geq \int_{H \setminus C} dP ~ \left(\frac{dQ}{dP}(x) - \frac{d\tau}{dP}(x)\right) = 0,
\end{align}
which in turn implies that
\begin{equation}
    \label{eq:qhcthc}
    \frac{dQ}{dP}(x) - \frac{d\tau}{dP}(x) = 0  ~~ P\text{-almost-everywhere on } H \setminus C,
\end{equation}
or equivalently that $Q(H \setminus C) = \tau(H \setminus C)$.
Combining this with the fact that $Q(\mathcal{X} \setminus H) = \tau(\mathcal{X} \setminus H)$ and our earlier result that $Q(H \cap C) = \tau(H \cap C)$, we have
\begin{equation*}
    ||Q - \tau||_{TV} = Q(\mathcal{X} \setminus H) - \tau(\mathcal{X} \setminus H) + Q(H \setminus C) - \tau(H \setminus C) + Q(H \cap C) - \tau(H \cap C) = 0,
\end{equation*}
which is equivalent to $w_d = ||Q - \tau_d||_{TV} \to 0$, that is the required result.
\end{proof}


\begin{theorem*}[Correcness of GRC]
    If any one of the assumptions \ref{assumption:finite_ratio_mode}, \ref{assumption:single_branch} or \ref{assumption:nice_shrinking} holds, then
    \begin{equation}
    \label{eq:app:required}
        ||Q - \tau_d ||_{TV} \to 0 ~ \text{ as } ~ d \to \infty.
    \end{equation}
\end{theorem*}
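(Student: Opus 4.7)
The overall strategy is to exploit the machinery already built in the appendix: Lemma \ref{lemma:q_minus_tau} gives $\tau_d \leq Q$, so $\|Q-\tau_d\|_{TV} = Q(\mathcal{X}) - \tau_d(\mathcal{X})$; Lemma \ref{lem:tau_limit} provides a limit measure $\tau = \lim_d \tau_d \le Q$; and Lemma \ref{lem:Qtau} shows $Q(\mathcal{X}\setminus H) = \tau(\mathcal{X}\setminus H)$, so the whole question is whether $Q$ and $\tau$ also agree on $H$. Writing $w_d = Q(\mathcal{X}) - \tau_d(\mathcal{X})$ and $p_d = \mathbb{P}(W_d \mid V_{d-1})$, my plan is to do a case analysis driven by the asymptotic behaviour of $p_d$ and $P(H_d)$, noting that all three assumptions feed into the same residual case $p_d \to 0$, $P(H_d) \not\to 0$.

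The two peripheral cases require no assumption beyond what is already present. If $p_d \not\to 0$, pick $\epsilon>0$ and a subsequence with $p_{a_d} > \epsilon$; then $\tau_{a_d}(\mathcal{X}) = 1 - \prod_{j\le a_d}(1-p_j) \geq 1 - (1-\epsilon)^d \to 1$, and monotonicity of $w_d$ finishes it. If instead $P(H_d) \to 0$, then $Q \ll P$ gives $Q(H)=0$ and hence $\tau(H)=0$, and combining with Lemma \ref{lem:Qtau} gives $w_d \to 0$. So I reduce to the residual case and first establish the key inequality
\[
p_d \;\geq\; \sum_{k=1}^{2^{d-1}} \frac{P(H_{d,k})}{P(Z_{d,k})}\,\frac{w_{d,k}}{w_d},
\]
obtained by lower-bounding acceptance by the probability of landing in $H_d$ conditioned on each ancestor branch, which forces $\sum_k \tfrac{P(H_{d,k})}{P(Z_{d,k})} w_{d,k} \to 0$ whenever $w_d \not\to 0$.

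Under Assumption \ref{assumption:finite_ratio_mode}, $dQ/dP \leq M$ gives $Q(S)/P(S) \leq M$ for any $S$, so $w_{d,k}$ in the sum above can be replaced by $(Q(H\cap H_{d,k}) - \tau(H\cap H_{d,k}))^2/(M\, P(Z_{d,k}))$. Setting $\Delta_{d,k} = Q(H\cap H_{d,k}) - \tau(H\cap H_{d,k})$ and $\Phi_d = \sum_k \Delta_{d,k}^2 / P(Z_{d,k})$, I then observe that minimising $r^2/P(Z_{d+1,2k}) + (1-r)^2/P(Z_{d+1,2k+1})$ over the split fraction $r = \Delta_{d+1,2k}/\Delta_{d,k}$ shows $\Phi_{d+1} \geq \Phi_d$. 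But the residual case forces $\Phi_d \to 0$, so $\Phi_d = 0$ for every $d$, including $d=1$, whence $Q(H) = \tau(H)$ and $w_d \to 0$. Under Assumption \ref{assumption:single_branch} only one branch is active at each depth, so the sum collapses to a single term $P(H_d) w_d / P(Z_{d,k_d}) \geq P(H_d) w_d$, immediately giving $w_d \to 0$.

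The main obstacle is Assumption \ref{assumption:nice_shrinking}. My plan is to rewrite the key sum as $\int_{H_d} (Q(B_d(x)) - \tau(B_d(x)))/P(B_d(x))\, dP(x)$ where $B_d(x)$ is the unique partition block at depth $d$ containing $x$, then restrict to $H = \bigcap_d H_d$. The delicate step is that the nested intersections $\bigcap_d Z_{d,k_d}$ may carry positive $P$-mass, so I would define $\mathcal{C}$ to be the collection of such intersections, show $\mathcal{C}$ is countable by bucketing according to $P$-mass, and split $H$ into $H\cap C$ (where $C = \bigcup_{C'\in\mathcal{C}} C'$) and $H\setminus C$. On $H\cap C$, a term-by-term analysis across $\mathcal{C}$ together with $Q\ll P$ forces $Q = \tau$; on $H\setminus C$, the nicely shrinking hypothesis is exactly what is needed to invoke the general Lebesgue differentiation theorem (Rudin, Theorem 7.10), identifying the pointwise limit of $(Q(B_d(x))-\tau(B_d(x)))/P(B_d(x))$ with $dQ/dP(x) - d\tau/dP(x)$, and Fatou's lemma then forces this integrand to vanish $P$-a.e.\ on $H\setminus C$. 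Combining both pieces with Lemma \ref{lem:Qtau} closes the proof. The hardest part is the measurability and countability bookkeeping around $\mathcal{C}$, together with justifying the pointwise differentiation step under the weaker-than-metric-ball partition structure; this is where Assumption \ref{assumption:nice_shrinking} does all the real work.
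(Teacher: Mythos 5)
Your proposal follows the paper's own proof almost step for step: the same reduction via $\|Q-\tau_d\|_{TV}=Q(\mathcal X)-\tau_d(\mathcal X)$, the same split into the cases $p_d\not\to 0$, $P(H_d)\to 0$, and the residual case, the same key inequality $p_d\ge\sum_k\frac{P(H_{d,k})}{P(Z_{d,k})}\frac{w_{d,k}}{w_d}$, the same $\Phi_d$-monotonicity trick under \cref{assumption:finite_ratio_mode}, the same single-term collapse under \cref{assumption:single_branch}, and the same countable-$\mathcal C$ bookkeeping with Fatou and Lebesgue differentiation under \cref{assumption:nice_shrinking}. The approach is correct and is, in effect, an accurate condensed outline of the appendix argument.
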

\begin{proof}
If $p_d \to 0$, then $w_d \to 0$ by \cref{lemma:pdzero}.
If $P(H_d) \to 0$, then $w_d \to 0$ by \cref{lemma:hzero}.
Therefore suppose that $p_d \not \to 0$ and $P(H_d) \not \to 0$.
Then if any one of assumptions \ref{assumption:finite_ratio_mode}, \ref{assumption:single_branch} or \ref{assumption:nice_shrinking} holds, we can conclude from \cref{lemma:finite_ratio}, \ref{lemma:single_branch} or \ref{lemma:nicely_shrkinking} respectively, that $||Q - \tau_d ||_{TV} \to 0$.
\end{proof}

\newpage
\section{Optimality of GRCS}
\label{app:grcs_proofs}
\begin{algorithm}[H]
    \centering
    \caption{GRCS with arthmetic coding for the heap index.}
    \begin{algorithmic}[1]
        \Require Target $Q$, proposal $P$ over $\Reals$ with unimodal density ratio $r = dQ/dP$ with mode $\mu$.
        \State $d \gets 0, T_0 \gets 0, L_0 \gets 0$
        \State $I_0 \gets 1, S_1 \gets \Reals$
        \While{$\mathtt{True}$}
            \State $X_{I_d} \sim P|_{S_d}/P(S_d)$
            \State $U_{I_d} \sim \text{Uniform}(0, 1)$
            \State $\beta_{I_d} \gets \mathtt{clip}\left(P(S_d) \cdot \frac{r(X_{I_d}) - L_d}{1 - T_d}, 0, 1\right)$
            \Comment{$\mathtt{clip}(y, a, b) \defeq \max\{\min\{y, b\}, a\}$}
                \If{$U_{I_d} \leq \beta_{d+1}$}
                    \State \textbf{return} $X_{I_d}, I_d$
                \EndIf
            \If{$X_{I_d} > \mu$}
                \State $I_{d + 1} \gets 2I_d$
                \State $S_{d + 1} \gets S_d \cap (-\infty, X_{I_d})$
            \Else
                \State $I_{d + 1} \gets 2I_d + 1$
                \State $S_{d + 1} \gets S_d \cap (X_{I_d}, \infty)$
            \EndIf
            \State $L_{d + 1} \gets L_d + T_d / P(S_d)$
            \State $T_{d+1} \gets \Prob_{Y \sim Q}[r(Y) \geq L_{d + 1}] - L_{d + 1} \cdot \Prob_{Y \sim P}[r(Y) \geq L_{d + 1}]$
            \State $d \gets d + 1$
        \EndWhile
    \end{algorithmic}
    \label{alg:grcs}
\end{algorithm}
\noindent
In this section, we prove \Cref{thm:grcs:codelength,thm:grcs:runtime}.
We are only interested in continuous distributions over $\Reals$ with unimodal density ratio $dQ/dP$ for these theorems.
Hence, we begin by specializing \Cref{alg:grc} to this setting, shown in \Cref{alg:grcs}.
For simplicity, we also dispense with the abstraction of partitioning processes and show the bound update process directly.
Furthermore, we also provide an explicit form for the \texttt{AcceptProb} and \texttt{RuledOutMass} functions.
\par
Before we move on to proving our proposed theorems, we first prove two useful results.
First, we bound the negative log $P$-mass of the bounds with which \Cref{alg:grcs} terminates.
\begin{lemma}
\label{lemma:negative_log_bound_size_bound}
Let $Q$ and $P$ be distributions over $\Reals$ with unimodal density ratio $r = dQ/dP$, given to \Cref{alg:grcs} as the target and proposal distribution as input, respectively.
Let $d \geq 0$ and let $X_{1:d} \defeq X_1, \hdots, X_d$ denote the samples simulated by \Cref{alg:grcs} up to step $d + 1$, where for $d = 0$ we define the empty list as $X_{1:0} = \emptyset$.
Let $S_d$ denote the bounds at step $d + 1$.
Then,
\begin{align}
-\sum_{j = 0}^d A_{j + 1}(\Reals, S_{0:d}) \cdot \log P(S_j) \leq \KLD{Q}{P} + \log e.
\end{align}
\end{lemma}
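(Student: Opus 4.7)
The plan is to establish the bound via a summation-by-parts manipulation followed by a Jensen-type argument that relates the resulting expression to $\KLD{Q}{P}$.

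First, using $P(S_0) = P(\Reals) = 1$ and hence $\log P(S_0) = 0$, I apply Abel summation to rewrite
\begin{equation*}
-\sum_{j = 0}^d A_{j+1}(\Reals, S_{0:d}) \log P(S_j) = \sum_{i=1}^d \log\frac{P(S_{i-1})}{P(S_i)} \cdot \sum_{j=i}^d A_{j+1}(\Reals, S_{0:d}).
\end{equation*}
Next, using that $\alpha_{j+1}(x) \leq (1 - T_j(\Reals, S_{0:j}))/P(S_j)$ and that $\alpha_{j+1}$ vanishes outside $S_j$, I get $A_{j+1}(\Reals) \leq 1 - T_j(\Reals, S_{0:j})$. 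Combined with the identity $1 - T_i(\Reals, S_{0:i}) = Q(S_i) - T_i(S_i, S_{0:i}) \leq Q(S_i)$ (obtained from the recursion for $T_d$ together with $T_i(S_i')=Q(S_i')$), I deduce the tail bound $\sum_{j=i}^d A_{j+1}(\Reals) \leq Q(S_i)$, since all remaining acceptance mass from step $i{+}1$ onwards is supported in $S_i$ and bounded by the total unaccounted mass. Plugging this in yields
\begin{equation*}
-\sum_{j=0}^d A_{j+1} \log P(S_j) \leq \sum_{i=1}^d Q(S_i) \log\frac{P(S_{i-1})}{P(S_i)}.
\end{equation*}

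Then I decompose
\begin{equation*}
\log\frac{P(S_{i-1})}{P(S_i)} = \log\frac{Q(S_{i-1})}{Q(S_i)} + \log\frac{Q(S_i)/P(S_i)}{Q(S_{i-1})/P(S_{i-1})}
\end{equation*}
to split the sum into two parts. The first part, using the elementary inequality $\log(1 + x) \leq x \log e$, satisfies $Q(S_i)\log(Q(S_{i-1})/Q(S_i)) \leq (Q(S_{i-1})-Q(S_i)) \log e$, which telescopes to give $\sum_i Q(S_i) \log(Q(S_{i-1})/Q(S_i)) \leq (Q(S_0) - Q(S_d)) \log e \leq \log e$. For the second part, after a further Abel rearrangement (and using $Q(S_0)/P(S_0)=1$ so the boundary term vanishes), I invoke Jensen's inequality $Q(S) \log(Q(S)/P(S)) \leq \int_S \log(dQ/dP)\, dQ$ (applied either directly to $Q(S_d)$ or via the chain rule of KL along the nested filtration $S_0 \supseteq S_1 \supseteq \cdots \supseteq S_d$) to obtain a bound by $\KLD{Q}{P}$.

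The main obstacle will be handling the second decomposition cleanly, since the Abel-rearranged sum $\sum_i (Q(S_i)-Q(S_{i+1}))\log(Q(S_i)/P(S_i)) + Q(S_d)\log(Q(S_d)/P(S_d))$ has terms of potentially varying signs and log-ratios evaluated on $S_i$ rather than on the ring $S_i \setminus S_{i+1}$. I expect this to be resolved by reinterpreting the sum as the expected log-ratio along a $Q$-driven path through the nested sets, and then invoking the data-processing inequality to bound it by $\KLD{Q}{P}$. Combining the two bounds gives the stated $\KLD{Q}{P} + \log e$ inequality.
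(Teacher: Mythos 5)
Your opening moves are structurally sound: the Abel summation is valid, the tail bound $\sum_{j\ge i} A_{j+1}(\Reals) \le 1 - T_i(\Reals) = Q(S_i) - T_i(S_i) \le Q(S_i)$ is correct, and the first-part estimate $\sum_i Q(S_i)\log\bigl(Q(S_{i-1})/Q(S_i)\bigr) \le \log e$ is right. The gap is that the tail bound is \emph{too loose} to carry the proof: by replacing $1 - T_i(\Reals)$ with $Q(S_i)$ you throw away the term $T_i(S_i)$, which is precisely the mass already accepted inside the active interval, and this is exactly the quantity that keeps the true LHS small. As a result, your intermediate quantity $\sum_{i\ge 1} Q(S_i)\log\bigl(P(S_{i-1})/P(S_i)\bigr)$ can strictly exceed $\KLD{Q}{P} + \log e$, so no subsequent decomposition can close the argument. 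Concretely, take $P = \Unif[0,1]$, $dQ/dP(x) = 2 - 4|x - 1/2|$ (a tent with mode at $1/2$), so $\KLD{Q}{P} \approx 0.28$ bits and $\KLD{Q}{P} + \log e \approx 1.72$ bits. With the admissible GRCS realization $S_0 = [0,1]$, $S_1 = [0,3/4]$, $S_2 = [1/4,3/4]$, $S_3 = [1/4,5/8]$, $S_4 = [3/8,5/8],\dots$ (always keeping the mode inside), one computes $\sum_{i\ge 1} Q(S_i)\log\bigl(P(S_{i-1})/P(S_i)\bigr) \approx 1.87$ bits $>$ $1.72$ bits, while the true $-\sum_j A_{j+1}\log P(S_j)$ for this run is only about $0.22$ bits. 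The same example also refutes the specific sub-claim that the ``second part'' $\sum_i Q(S_i)\log\bigl(r_i/r_{i-1}\bigr)$ (with $r_i = Q(S_i)/P(S_i)$) is at most $\KLD{Q}{P}$: it evaluates to roughly $0.61$ bits here, more than twice the divergence. Your appeal to Jensen/data-processing cannot fix this because the quantity you wrote down is simply too large.

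The paper's proof avoids this loss by never collapsing the tail into a single $Q(S_i)$ factor. Instead it works pointwise under the integral: if $\alpha_{j+1}(x) > 0$, then $r(x) \ge L_j \ge (1 - T_{j-1})/P(S_{j-1})$, hence $-\log P(S_{j-1}) \le \log r(x) - \log(1 - T_{j-1})$. Substituting this into $-\sum_j \int \alpha_{j+1}(x)\log P(S_j)\,dP(x)$ and using $\sum_j \alpha_{j+1}(x) = dQ/dP(x)$ (correctness of GRC) produces $\int (dQ/dP)\log(dQ/dP)\,dP = \KLD{Q}{P}$ for the main term and a lower-Riemann sum of $-\log(1-t)$ on $[0,1]$ for the correction, which is at most $\log e$. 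The key structural fact you would need — that a point still ``active'' at a late step must have large density ratio, and that the remaining mass $1 - T_j$ (not $Q(S_j)$) is what multiplies the codelength increment — is exactly what this pointwise argument retains and your Abel rearrangement discards.
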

\begin{proof}
For brevity, we will write $A_d = A_d(\Reals, S_{0:d})$ and $T_d = T_d(\Reals, S_{0:d})$.
Furthermore, as in \Cref{alg:grcs}, we define 
\begin{align}
    L_d \defeq \sum_{j = 0}^{d - 1} \frac{1 - T_{j}}{P(S_j)} \quad \text{with} \quad L_0 = 0.
\end{align}
Note that $X_{1:d}$ is well-defined for all $d \geq 0$ since we could remove the return statement from the algorithm to simulate the bounds it would produce up to an arbitrary step $d$.
Now, note that by \Cref{prop:grs:TA} we have $\Prob[D = d \mid X_{1:d}] = A_{d + 1}(\Reals, S_{0:d})$.
Now, fix $d \geq 0$ and bounds $S_{0:d}$, and let $x \in \Reals$ be such that $\alpha_{d + 1}(x) > 0$ which holds whenever $r(x) \geq L_{d}$.
From this, for $d \geq 1$  we find
\begin{align}
    r(x) &\geq \sum_{j = 0}^{d - 1} \frac{1 - T_j}{P(S_j)} \\
    &\geq \frac{1 - T_{d - 1}}{P(S_{d - 1})},
\end{align}
where the second inequality follows from the fact that the $(1 - T_j) / P(S_j)$ terms are all positive.
taking logs, we get
\begin{align}
    \log r(x) - \log(1 - T_{d - 1}) \geq -\log P(S_{d - 1}).
    \label{eq:log_bound_size_ineq}
\end{align}
Now, we consider the expectation of interest:
\begin{align}
    \sum_{j = 0}^d -A_{j + 1} \cdot \log P(S_j)
    &= -\sum_{j = 0}^d \int_\Reals \alpha_{j + 1}(x)\log P(S_j)\, dx \\
    &\stackrel{\text{\cref{eq:log_bound_size_ineq}}}{\leq} \sum_{j = 0}^d \int_\Reals \alpha_{j + 1}(x)(\log(r(x)) - \log(1 - T_j))\,dx \\
    &\stackrel{(a)}{\leq} \int_\Reals \sum_{j = 0}^\infty\alpha_{j + 1}(x)\log r(x)\,dx + \sum_{j = 0}^\infty A_{j + 1} \log\frac{1}{1 - T_j} \\
    &\stackrel{(b)}{=} \int_\Reals q(x) \log r(x)\,dx + \sum_{j = 0}^\infty (T_{j + 1} - T_j) \log\frac{1}{1 - T_j} \\
    &=\KLD{Q}{P} + \sum_{j = 0}^\infty (T_{j + 1} - T_j) \log\frac{1}{1 - T_j} \\
    &\stackrel{(c)}{\leq} \KLD{Q}{P} \cdot \log 2 + \int_0^1\log\frac{1}{1 - t} \, dt \\
    &=\KLD{Q}{P} + \log e.
\end{align}
Inequality (a) holds because all terms are positive.
This is guaranteed by the fact that for $d \geq 1$, we have $L_d \geq 1$, hence $0 \leq \log L_d \leq r(x)$ whenever \Cref{eq:log_bound_size_ineq} holds.
Equality (b) follows by the correctness of GRC (\Cref{thm:correctness}), which implies that for all $x \in \Reals$ we have $\sum_{j=0}^\infty\alpha_d(x) = q(x)$, and inequality (c) follows from the facts that $0 \leq T_d \leq 1$ for all $d$ and that the summand in the second term forms a lower-Riemann sum approximation to $-\log(1 - t)$.
\end{proof}
\par
Second, we consider the contraction rate of the bounds $S_{0:d}$, considered by \Cref{alg:grcs}.
\begin{lemma}
\label{lemma:bound_size_exp_bound}
Let $Q$ and $P$ be distributions over $\Reals$ with unimodal density ratio $r = dQ/dP$, given to \Cref{alg:grcs} as the target and proposal distribution as input, respectively.
Assume $P$ has CDF $F_P$ and the mode of $r$ is at $\mu$.
Fix $d \geq 0$ and let $X_{1:d}$ be the samples considered by \Cref{alg:grcs} and $S_d$ the bounds at step $d + 1$. 
Then,
\begin{align}
    \Exp_{X_{1:d}}[P(S_d)] \leq \left(\frac{3}{4}\right)^d
\end{align}
\end{lemma}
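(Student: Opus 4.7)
My plan is to establish the one-step contraction bound $\Exp[P(S_{d+1}) \mid S_{0:d}] \leq (3/4)\, P(S_d)$ and then iterate via the tower property. The key geometric fact I need first is that the mode $\mu$ of $r = dQ/dP$ lies in $S_d$ for every $d$: this follows by induction on $d$, since $S_0 = \Reals \ni \mu$, and the algorithm's update rule keeps the half of $S_d$ that contains $\mu$ (if $X_{I_d} > \mu$ it keeps $S_d \cap (-\infty, X_{I_d})$, and otherwise $S_d \cap (X_{I_d}, \infty)$).

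Next I would condition on $S_d = [a, b]$ and introduce the ``relative CDF position of the mode'' $M = (F_P(\mu) - F_P(a))/P(S_d) \in [0,1]$. Since $X_{I_d} \sim P|_{S_d}/P(S_d)$, the variable $V = (F_P(X_{I_d}) - F_P(a))/P(S_d)$ is uniform on $[0,1]$, and the event $\{X_{I_d} > \mu\}$ coincides with $\{V > M\}$. On this event the new bound has relative $P$-mass $P(S_{d+1})/P(S_d) = V$, while on $\{V \le M\}$ it has relative mass $1-V$. Integrating against the Lebesgue density of $V$ gives
\begin{equation*}
\Exp\!\left[\frac{P(S_{d+1})}{P(S_d)} \;\Big|\; S_{0:d}\right]
= \int_M^1 v\, dv + \int_0^M (1-v)\, dv = \tfrac{1}{2} + M - M^2.
\end{equation*}

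The function $g(M) = \tfrac{1}{2} + M - M^2$ on $[0,1]$ is maximized at $M = 1/2$ with $g(1/2) = 3/4$, so $\Exp[P(S_{d+1}) \mid S_{0:d}] \leq (3/4) P(S_d)$ regardless of where the mode sits inside $S_d$. Taking total expectations and iterating (with $P(S_0) = 1$) yields $\Exp[P(S_d)] \leq (3/4)^d$ by induction.

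The only genuinely delicate point is the invariance ``$\mu \in S_d$ almost surely'', which is what lets the analysis be carried out uniformly via the parameter $M \in [0,1]$; if $\mu$ could escape $S_d$ then one of the two branches would have probability zero and the average relative shrinkage would only be $1/2$, so this step is actually favourable for the bound. I do not expect any real obstacle beyond keeping the indexing of $S_d$ consistent with the algorithm's initialization.
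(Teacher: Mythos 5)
Your proof is correct, and it reaches the same contraction factor $3/4$ by a slightly different computation than the paper. Where you track the location of the mode $\mu$ inside $S_d$ (via $M$) and integrate the two branches separately to obtain the exact conditional shrinkage $\tfrac{1}{2} + M - M^2$, the paper sidesteps the case analysis entirely by noting the pointwise bound $P(S_{d+1}) \le \max\{U - A, B - U\}$, whose expectation for $U \sim \Unif(A,B)$ is exactly $\tfrac{3}{4}(B - A)$ regardless of which branch is taken. The paper's route is a little leaner: it never needs the auxiliary invariant that $\mu \in S_d$ almost surely, nor the observation that $M \in [0,1]$, because the $\max$ bound holds unconditionally. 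Your route buys a bit of extra insight in return for that extra bookkeeping: it shows that the $3/4$ rate is attained only when the mode sits at the $P$-median of the current interval and that the contraction is strictly faster otherwise, which makes it clear why $3/4$ is the right constant and why GRCD (which always bisects by $P$-mass) cannot improve on the per-step constant from this particular argument. One very small point worth tightening if you wrote this up: the equivalence $\{X_{I_d} > \mu\} = \{V > M\}$ holds up to a $P$-null set (where $F_P$ is flat), which is all you need, but stating it as an exact coincidence is a hair stronger than warranted.
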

\begin{proof}
We prove the claim by induction.
For $d = 0$ the claim holds trivially, since $S_0 = \Reals$, hence $P(S_0) = 1$.
Assume now that the claim holds for $d = k - 1$, and we prove the statement for $d = k$.
By the law of iterated expectations, we have
\begin{align}
    \Exp_{X_{1:k}}[P(S_{k})] = \Exp_{X_{1:k - 1}}[\Exp_{X_k \mid X_{1:k - 1}}[P(S_{k})]].
    \label{eq:bound_size_law_of_iterated_exp}
\end{align}
Let us now examine the inner expectation.
First, assume that $S_{k - 1} = (a, b)$ for some real numbers $a < b$ and define
$A = F_P(a), B = F_P(B), M = F_P(\mu)$ and $U = F_P(X_k)$.
Since $X_k \mid X_{1:k - 1} \sim P\vert_{S_{k - 1}}$, by the probability integral transform we have $U \sim \Unif(A, B)$, where $\Unif(A, B)$ denotes the uniform distribution on the interval $(A, B)$.
The two possible intervals from which \Cref{alg:grcs} will choose are $(a, X_k)$ and $(X_k, b)$, whose measures are $P((a, X_k)) = F_P(X_k) - F_P(a) = U - A$ and similarly $P((X_k, b)) = B - U$.
Then, $P(S_k) \leq \max\{U - A, B - U\}$, from which we obtain the bound
\begin{align}
\Exp_{X_k \mid X_{1:k - 1}}[P(S_{k})] \leq \Exp_{U}[\max\{U - A, B - U\}]
= \frac{3}{4}(B - A) = \frac{3}{4}P(S_{k - 1}).
\end{align}
Plugging this into \Cref{eq:bound_size_law_of_iterated_exp}, we get
\begin{align}
    \Exp_{X_{1:k}}[P(S_{k})] &\leq \frac{3}{4}\Exp_{X_{1:k - 1}}\left[P(S_{k - 1})\right] \\
    &\leq \frac{3}{4} \cdot \left(\frac{3}{4}\right)^{k - 1},
\end{align}
where the second inequality follows from the inductive hypothesis, which finishes the proof.
\end{proof}
\par
\textbf{The proof of \Cref{thm:grcs:runtime}:} We prove our bound on the runtime of \Cref{alg:grcs} first, as this will be necessary for the proof of the bound on the codelength.
First, let $D$ be the number of steps \Cref{alg:grcs} takes before it terminates minus $1$.
Then, we will show that
\begin{align}
    \Exp[D] \leq \frac{1}{\log(4 / 3)}\KLD{Q}{P} + 4
\end{align}
We tackle this directly. 
Hence, let
\begin{align}
    \Exp_D[D] &= \lim_{d \to \infty} \Exp_{X_{1:j}}\left[\sum_{j = 1}^d j \cdot A_{j + 1}\right] \\
    &=\lim_{d \to \infty} \Exp_{X_{1:j}}\left[\sum_{j = 1}^d \frac{-j}{\log P(S_j)} \cdot -A_{j + 1}\log P(S_j)\right] \\
    &\leq \lim_{d \to \infty} \Exp_{X_{1:j}}\left[\max_{j \in [1:d]}\left\{\frac{-j}{\log P(S_j)}\right\} \cdot \sum_{j = 1}^d  -A_{j + 1}\log P(S_j)\right] \\
    &\stackrel{\text{\cref{lemma:negative_log_bound_size_bound}}}{\leq} \left(\KLD{Q}{P} + \log e\right) \cdot \lim_{d \to \infty} \Exp_{X_{1:j}}\left[\max_{j \in [1:d]}\left\{\frac{-j}{\log P(S_j)}\right\}\right].
\end{align}
To finish the proof, we will now bound the term involving the limit.
To do this, note, that for any finite collection of reals $F$, we have $\max_{x \in F}\{x\} = -\min_{x \in F}\{-x\}$, and that for a finite collection of real-valued random variables $\hat{F}$ we have $\Exp[\min_{\rvx \in \hat{F}}\{\rvx\}] \leq \min_{\rvx \in \hat{F}}\{\Exp[\rvx]\}$.
Now, we have
\begin{align}
    \lim_{d \to \infty} \Exp_{X_{1:j}}\left[\max_{j \in [1:d]}\left\{\frac{-j}{\log P(S_j)}\right\}\right] &= \lim_{d \to \infty} -\Exp_{X_{1:j}}\left[\min_{j \in [1:d]}\left\{\frac{j}{\log P(S_j)}\right\}\right] \\
    &\leq \lim_{d \to \infty} \left(-\min_{j \in [1:d]}\left\{\Exp_{X_{1:j}}\left[\frac{j}{\log P(S_j)}\right]\right\}\right) \\
    &\stackrel{\text{(a)}}{\leq} \lim_{d \to \infty} \left(-\min_{j \in [1:d]}\left\{\frac{j}{\log \Exp_{X_{1:j}}\left[P(S_j)\right]}\right\}\right) \\
    &\stackrel{\text{\cref{lemma:bound_size_exp_bound}}}{\leq} \lim_{d \to \infty} \left(-\min_{j \in [1:d]}\left\{\frac{-j}{j\log (4/3)}\right\}\right) \\
    &= \lim_{d \to \infty} \left(\max_{j \in [1:d]}\left\{\frac{1}{\log (4/3)}\right\}\right) \\
    &= \frac{1}{\log(4/3)}
\end{align}
Inequality (a) follows from Jensen's inequality.
Finally, plugging this back into the previous equation, we get
\begin{align}
    \Exp[D] \leq \frac{\KLD{Q}{P} + \log e}{\log 4/3} \leq \frac{\KLD{Q}{P}}{\log 4/3} + 4
\end{align}
\par
\textbf{Proof of \Cref{thm:grcs:codelength}:}
For the codelength result, we need to encode the length of the search path and the search path itself.
More formally, since the returned sample $X$ is a function of the partition process $Z$, the search path length $D$ and search path $S_{0:D}$, we have
\begin{align}
    \Ent[X \mid Z] \leq \Ent[D, S_{0:D}] = \Ent[D] + \Ent[S_{0:D} \mid D].
\end{align}
we can encode $D$ using Elias $\gamma$-coding, from which we get
\begin{align}
    \Ent[D] &\leq \Exp_D[2\log (D + 1)] + 1 \\
    &\leq 2\log(\Exp[D] + 1) + 1 \\
    &\leq 2\log\left(\frac{\KLD{Q}{P} + \log e}{\log (4/3)} + 1\right) + 1 \\
    &\leq 2\log\left(\KLD{Q}{P} + \log e + \log(4/3) \right) + 1 - 2\log\left(\log (4/3)\right) \\
    &\leq 2\log\left(\KLD{Q}{P} + 1 \right) + 1 - 2\log\left(\log (4/3)\right) + 2\log(\log e + \log(4/3))\\
    &\leq 2\log\left(\KLD{Q}{P} + 1 \right) + 6.
\end{align}
Given the search path length $D$, we can use arithmetic coding (AC) to encode the sequence of bounds $S_{0:D}$ using $-\log P(S_D) + 2$ bits (assuming infinite precision AC).
Hence, we have that the average coding cost is upper bounded by
\begin{align}
    \Ent[S_{0:D} \mid D] \leq \Exp_D[-\log P(S_D)] + 2 \stackrel{\text{\cref{lemma:negative_log_bound_size_bound}}}{\leq} \KLD{Q}{P} + 5.
\end{align}
Putting everything together, we find
\begin{align}
    \Ent[D, S_{0:D}] \leq \KLD{Q}{P} + 2\log(\KLD{Q}{P} + 1) + 11,
\end{align}
as required.

\section{Additional experiments with depth-limited GRC}

In this section we show the results of some experiments comparing the approximation bias of depth limited GRCD, to that of depth limited \adstar, following the setup of \cite{flamich2022fast}.
Limiting the depth of each algorithm introduces bias in the resulting samples, as these are not guaranteed to be distributed from the target distribution $Q$, but rather from a different distribution $\smash{\hat{Q}}$.
\Cref{fig:approx_rec} quantifies the effect of limiting the depth on the bias of the resulting samples.

In our experiment we take $Q$ and $P$ to be Gaussian and we fix $\KLD{Q}{P} = 3$ (bits), and consider three different settings of $\infD{Q}{P} = 5, 7$ or $9$ (bits), corresponding to each of the panes in \cref{fig:approx_rec}.
For each such setting, we set the depth limit of each of the two algorithms to $D_{\max} = \KLD{Q}{P} + d$ bits, and refer to $d$ as the \textit{number of additional bits}.
We then vary the number of additional bits allowed for each algorithm, and estimate the bias of the resulting samples by evaluating the KL divergence between the empirical and the exact target distribution, that is $\smash{\KLD{\hat{Q}}{Q}}$.
To estimate this bias, we follow the method of \cite{perez2008kullback}.
For each datapoint shown we draw 200 samples $X \sim \hat{Q}$ and use these to estimate $\smash{\KLD{\hat{Q}}{Q}}$.
We then repeat this for 10 different random seeds, reporting the mean bias and standard error in the bias, across these 10 seeds.

Generally we find that the bias of GRCD is higher than that of \adstar.
This is likely because \adstar is implicitly performing importance sampling over a set of $2^{D_{\max} + d} - 1$ samples, and returning the one with the highest importance weight.
By contrast, GRCD is running rejection sampling up to a maximum of $D_{\max} + d$ steps, returning its last sample if it has not terminated by its $(D_{\max} + d)^{\text{th}}$ step.
While it might be possible to improve the bias of depth limited GRCD by considering an alternative way of choosing which sample to return, using for example an importance weighting criterion, we do not examine this here and leave this possibility for future work.

\begin{figure}[h!]
    \centering
    \begin{subfigure}[b]{0.99\textwidth}
        \includegraphics[width=\textwidth]{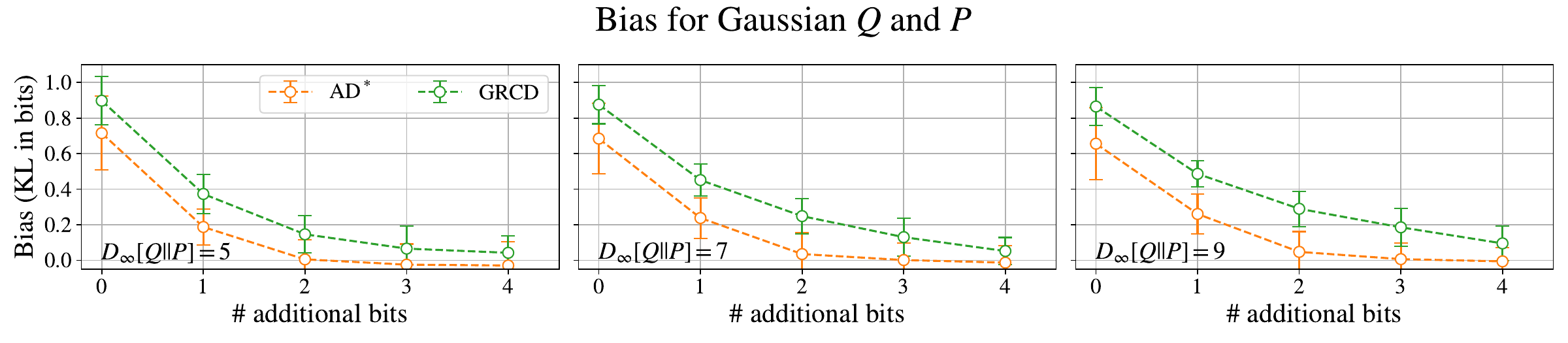}
    \end{subfigure}
    \caption{Bias of depth-limited \adstar and GRCD, as a function of the number of additional bit budget given to each algorithm.
    See text above for discussion.}
    \label{fig:approx_rec}
\vspace{-4mm}
\end{figure}

\end{document}